\newcolumntype{d}[1]{D{.}{.}{#1}}
\newcolumntype{Y}{>{\raggedleft\arraybackslash}X}
\newcolumntype{Z}{>{\centering\arraybackslash}X}
\newtheorem{lemma}{Lemma}[section]
\newtheorem{proposition}{Proposition}
\newtheorem{cor}{Corollary}
\newtheorem{assum}{Assumption}
\newtheorem{thm}{Theorem}
\theoremstyle{remark}
\newtheorem{remark}{Remark}
\DeclareMathOperator*{\argmin}{argmin}
\DeclareMathOperator*{\sgn}{sgn}
\DeclareMathOperator*{\var}{var}
\numberwithin{equation}{section}
\chardef\@x10\chardef\@xv60
\def\tcitime{
\def\@time{%
  \@minute\time\@hour\@minute\divide\@hour\@xv
  \ifnum\@hour<\@x 0\fi\the\@hour:%
  \multiply\@hour\@xv\advance\@minute-\@hour
  \ifnum\@minute<\@x 0\fi\the\@minute
  }}%
\def\QCTOpt[#1]#2{%
  \def\QCTOptB{#1}
  \def\QCTOptA{#2}
}
\def\QCTNOpt#1{%
  \def\QCTOptA{#1}
  \let\QCTOptB\empty
}
\def\Qct{%
  \@ifnextchar[{%
    \QCTOpt}{\QCTNOpt}
}
\def\QCBOpt[#1]#2{%
  \def\QCBOptB{#1}
  \def\QCBOptA{#2}
}
\def\QCBNOpt#1{%
  \def\QCBOptA{#1}
  \let\QCBOptB\empty
}
\def\Qcb{%
  \@ifnextchar[{%
    \QCBOpt}{\QCBNOpt}
}
\def\PrepCapArgs{%
  \ifx\QCBOptA\empty
    \ifx\QCTOptA\empty
      {}%
    \else
      \ifx\QCTOptB\empty
        {\QCTOptA}%
      \else
        [\QCTOptB]{\QCTOptA}%
      \fi
    \fi
  \else
    \ifx\QCBOptA\empty
      {}%
    \else
      \ifx\QCBOptB\empty
        {\QCBOptA}%
      \else
        [\QCBOptB]{\QCBOptA}%
      \fi
    \fi
  \fi
}
\def\GRAPHICSPS#1{%
 \ifcase\GRAPHICSTYPE
   \special{ps: #1}%
 \or
   \special{language "PS", include "#1"}%
 \fi
}%
\def\graffile#1#2#3#4{%
    \leavevmode
    \raise -#4 \BOXTHEFRAME{%
        \hbox to #2{\raise #3\hbox to #2{\null #1\hfil}}}%
}%
\def\draftbox#1#2#3#4{%
 \leavevmode\raise -#4 \hbox{%
  \frame{\rlap{\protect\tiny #1}\hbox to #2%
   {\vrule height#3 width\z@ depth\z@\hfil}%
  }%
 }%
}%
\newif\ifwasdraft
\def\GRAPHIC#1#2#3#4#5{%
 \ifnum\draft=\@ne\draftbox{#2}{#3}{#4}{#5}%
  \else\graffile{#1}{#3}{#4}{#5}%
  \fi
 }%
\def\addtoLaTeXparams#1{%
    \edef\LaTeXparams{\LaTeXparams #1}}%
\newif\ifBoxFrame \BoxFramefalse
\newif\ifOverFrame \OverFramefalse
\newif\ifUnderFrame \UnderFramefalse
\def\BOXTHEFRAME#1{%
   \hbox{%
      \ifBoxFrame
         \frame{#1}%
      \else
         {#1}%
      \fi
   }%
}
\def\doFRAMEparams#1{\BoxFramefalse\OverFramefalse\UnderFramefalse\readFRAMEparams#1\end}%
\def\readFRAMEparams#1{%
 \ifx#1\end%
  \let\next=\relax
  \else
  \ifx#1i\dispkind=\z@\fi
  \ifx#1d\dispkind=\@ne\fi
  \ifx#1f\dispkind=\tw@\fi
  \ifx#1t\addtoLaTeXparams{t}\fi
  \ifx#1b\addtoLaTeXparams{b}\fi
  \ifx#1p\addtoLaTeXparams{p}\fi
  \ifx#1h\addtoLaTeXparams{h}\fi
  \ifx#1X\BoxFrametrue\fi
  \ifx#1O\OverFrametrue\fi
  \ifx#1U\UnderFrametrue\fi
  \ifx#1w
    \ifnum\draft=1\wasdrafttrue\else\wasdraftfalse\fi
    \draft=\@ne
  \fi
  \let\next=\readFRAMEparams
  \fi
 \next
 }%
\def\IFRAME#1#2#3#4#5#6{%
      \bgroup
      \let\QCTOptA\empty
      \let\QCTOptB\empty
      \let\QCBOptA\empty
      \let\QCBOptB\empty
      #6%
      \parindent=0pt%
      \leftskip=0pt
      \rightskip=0pt
      \setbox0 = \hbox{\QCBOptA}%
      \@tempdima = #1\relax
      \ifOverFrame
          \typeout{This is not implemented yet}%
          \show\HELP
      \else
         \ifdim\wd0>\@tempdima
            \advance\@tempdima by \@tempdima
            \ifdim\wd0 >\@tempdima
               \textwidth=\@tempdima
               \setbox1 =\vbox{%
                  \noindent\hbox to \@tempdima{\hfill\GRAPHIC{#5}{#4}{#1}{#2}{#3}\hfill}\\%
                  \noindent\hbox to \@tempdima{\parbox[b]{\@tempdima}{\QCBOptA}}%
               }%
               \wd1=\@tempdima
            \else
               \textwidth=\wd0
               \setbox1 =\vbox{%
                 \noindent\hbox to \wd0{\hfill\GRAPHIC{#5}{#4}{#1}{#2}{#3}\hfill}\\%
                 \noindent\hbox{\QCBOptA}%
               }%
               \wd1=\wd0
            \fi
         \else
            \ifdim\wd0>0pt
              \hsize=\@tempdima
              \setbox1 =\vbox{%
                \unskip\GRAPHIC{#5}{#4}{#1}{#2}{0pt}%
                \break
                \unskip\hbox to \@tempdima{\hfill \QCBOptA\hfill}%
              }%
              \wd1=\@tempdima
           \else
              \hsize=\@tempdima
              \setbox1 =\vbox{%
                \unskip\GRAPHIC{#5}{#4}{#1}{#2}{0pt}%
              }%
              \wd1=\@tempdima
           \fi
         \fi
         \@tempdimb=\ht1
         \advance\@tempdimb by \dp1
         \advance\@tempdimb by -#2%
         \advance\@tempdimb by #3%
         \leavevmode
         \raise -\@tempdimb \hbox{\box1}%
      \fi
      \egroup%
}%
\def\DFRAME#1#2#3#4#5{%
 \begin{center}
     \let\QCTOptA\empty
     \let\QCTOptB\empty
     \let\QCBOptA\empty
     \let\QCBOptB\empty
     \ifOverFrame 
        #5\QCTOptA\par
     \fi
     \GRAPHIC{#4}{#3}{#1}{#2}{\z@}
     \ifUnderFrame 
        \nobreak\par #5\QCBOptA
     \fi
 \end{center}%
 }%
\def\FFRAME#1#2#3#4#5#6#7{%
 \begin{figure}[#1]%
  \let\QCTOptA\empty
  \let\QCTOptB\empty
  \let\QCBOptA\empty
  \let\QCBOptB\empty
  \ifOverFrame
    #4
    \ifx\QCTOptA\empty
    \else
      \ifx\QCTOptB\empty
        \caption{\QCTOptA}%
      \else
        \caption[\QCTOptB]{\QCTOptA}%
      \fi
    \fi
    \ifUnderFrame\else
      \label{#5}%
    \fi
  \else
    \UnderFrametrue%
  \fi
  \begin{center}\GRAPHIC{#7}{#6}{#2}{#3}{\z@}\end{center}%
  \ifUnderFrame
    #4
    \ifx\QCBOptA\empty
      \caption{}%
    \else
      \ifx\QCBOptB\empty
        \caption{\QCBOptA}%
      \else
        \caption[\QCBOptB]{\QCBOptA}%
      \fi
    \fi
    \label{#5}%
  \fi
  \end{figure}%
 }%
\def\makeactives{
  \catcode`\"=\active
  \catcode`\;=\active
  \catcode`\:=\active
  \catcode`\'=\active
  \catcode`\~=\active
}
   \gdef\activesoff{%
      \def"{\string"}
      \def;{\string;}
      \def:{\string:}
      \def'{\string'}
      \def~{\string~}
    }
\def\FRAME#1#2#3#4#5#6#7#8{%
 \bgroup
 \@ifundefined{bbl@deactivate}{}{\activesoff}
 \ifnum\draft=\@ne
   \wasdrafttrue
 \else
   \wasdraftfalse%
 \fi
 \def\LaTeXparams{}%
 \dispkind=\z@
 \def\LaTeXparams{}%
 \doFRAMEparams{#1}%
 \ifnum\dispkind=\z@\IFRAME{#2}{#3}{#4}{#7}{#8}{#5}\else
  \ifnum\dispkind=\@ne\DFRAME{#2}{#3}{#7}{#8}{#5}\else
   \ifnum\dispkind=\tw@
    \edef\@tempa{\noexpand\FFRAME{\LaTeXparams}}%
    \@tempa{#2}{#3}{#5}{#6}{#7}{#8}%
    \fi
   \fi
  \fi
  \ifwasdraft\draft=1\else\draft=0\fi{}%
  \egroup
 }%
\def\TEXUX#1{"texux"}
\long\def\QQQ#1#2{%
     \long\expandafter\def\csname#1\endcsname{#2}}%
\long\def\QQA#1#2{}%
\def\QTR#1#2{{\csname#1\endcsname #2}}
\def\EXPAND#1[#2]#3{}%
\def\NOEXPAND#1[#2]#3{}%
\def\LaTeXparent#1{}%
\def\ChildStyles#1{}%
\def\ChildDefaults#1{}%
\def\QTagDef#1#2#3{}%
\def\QQfnmark#1{\footnotemark}
\def\makeatletter\input gnuindex.sty\makeatother\makeindex{\makeatletter\input gnuindex.sty\makeatother\makeindex}%
\def\initial#1{\bigbreak{\raggedright\large\bf #1}\kern 2\p@\penalty3000}}%
 \def\abstract{%
  \if@twocolumn
   \section*{Abstract (Not appropriate in this style!)}%
   \else \small 
   \begin{center}{\bf Abstract\vspace{-.5em}\vspace{\z@}}\end{center}%
   \quotation 
   \fi
  }%
   \def\registered{\relax\ifmmode{}\r@gistered
                    \else$\m@th\r@gistered$\fi}%
 \def\r@gistered{^{\ooalign
  {\hfil\raise.07ex\hbox{$\scriptstyle\rm\text{R}$}\hfil\crcr
  \mathhexbox20D}}}}{}%
\newdimen\theight
\def\Column{%
 \vadjust{\setbox\z@=\hbox{\scriptsize\quad\quad tcol}%
  \theight=\ht\z@\advance\theight by \dp\z@\advance\theight by \lineskip
  \kern -\theight \vbox to \theight{%
   \rightline{\rlap{\box\z@}}%
   \vss
   }%
  }%
 }%
\def\qed{%
 \ifhmode\unskip\nobreak\fi\ifmmode\ifinner\else\hskip5\p@\fi\fi
 \hbox{\hskip5\p@\vrule width4\p@ height6\p@ depth1.5\p@\hskip\p@}%
 }%
\def\miss{\hbox{\vrule height2\p@ width 2\p@ depth\z@}}%
\def\tcol#1{{\baselineskip=6\p@ \vcenter{#1}} \Column}  %
\def\newfmtname{LaTeX2e}
\def\chkcompat{%
   \if@compatibility
   \else
     \usepackage{latexsym}
   \fi
}
  \DeclareOldFontCommand{\rm}{\normalfont\rmfamily}{\mathrm}
  \DeclareOldFontCommand{\sf}{\normalfont\sffamily}{\mathsf}
  \DeclareOldFontCommand{\tt}{\normalfont\ttfamily}{\mathtt}
  \DeclareOldFontCommand{\bf}{\normalfont\bfseries}{\mathbf}
  \DeclareOldFontCommand{\it}{\normalfont\itshape}{\mathit}
  \DeclareOldFontCommand{\sl}{\normalfont\slshape}{\@nomath\sl}
  \DeclareOldFontCommand{\sc}{\normalfont\scshape}{\@nomath\sc}
\def\alpha{\Greekmath 010B }%
\def\beta{\Greekmath 010C }%
\def\gamma{\Greekmath 010D }%
\def\delta{\Greekmath 010E }%
\def\epsilon{\Greekmath 010F }%
\def\zeta{\Greekmath 0110 }%
\def\eta{\Greekmath 0111 }%
\def\theta{\Greekmath 0112 }%
\def\iota{\Greekmath 0113 }%
\def\kappa{\Greekmath 0114 }%
\def\lambda{\Greekmath 0115 }%
\def\mu{\Greekmath 0116 }%
\def\nu{\Greekmath 0117 }%
\def\xi{\Greekmath 0118 }%
\def\pi{\Greekmath 0119 }%
\def\rho{\Greekmath 011A }%
\def\sigma{\Greekmath 011B }%
\def\tau{\Greekmath 011C }%
\def\upsilon{\Greekmath 011D }%
\def\phi{\Greekmath 011E }%
\def\chi{\Greekmath 011F }%
\def\psi{\Greekmath 0120 }%
\def\omega{\Greekmath 0121 }%
\def\varepsilon{\Greekmath 0122 }%
\def\vartheta{\Greekmath 0123 }%
\def\varpi{\Greekmath 0124 }%
\def\varrho{\Greekmath 0125 }%
\def\varsigma{\Greekmath 0126 }%
\def\varphi{\Greekmath 0127 }%
\def\nabla{\Greekmath 0272 }
\def\FindBoldGroup{%
   {\setbox0=\hbox{$\mathbf{x\global\edef\theboldgroup{\the\mathgroup}}$}}%
}
\def\Greekmath#1#2#3#4{%
    \if@compatibility
        \ifnum\mathgroup=\symbold
           \mathchoice{\mbox{\boldmath$\displaystyle\mathchar"#1#2#3#4$}}%
                      {\mbox{\boldmath$\textstyle\mathchar"#1#2#3#4$}}%
                      {\mbox{\boldmath$\scriptstyle\mathchar"#1#2#3#4$}}%
                      {\mbox{\boldmath$\scriptscriptstyle\mathchar"#1#2#3#4$}}%
        \else
           \mathchar"#1#2#3#4%
        \fi 
    \else 
        \FindBoldGroup
        \ifnum\mathgroup=\theboldgroup 
           \mathchoice{\mbox{\boldmath$\displaystyle\mathchar"#1#2#3#4$}}%
                      {\mbox{\boldmath$\textstyle\mathchar"#1#2#3#4$}}%
                      {\mbox{\boldmath$\scriptstyle\mathchar"#1#2#3#4$}}%
                      {\mbox{\boldmath$\scriptscriptstyle\mathchar"#1#2#3#4$}}%
        \else
           \mathchar"#1#2#3#4%
        \fi     	    
	  \fi}
\newif\ifGreekBold  \GreekBoldfalse
\let\SAVEPBF=\pbf
\def\pbf{\GreekBoldtrue\SAVEPBF}%
  \newcounter{equationnumber}  
  \def\mathletters{%
     \addtocounter{equation}{1}
     \edef\@currentlabel{\theequation}%
     \setcounter{equationnumber}{\c@equation}
     \setcounter{equation}{0}%
     \edef\theequation{\@currentlabel\noexpand\alph{equation}}%
  }
    \def\BibTeX{{\rm B\kern-.05em{\sc i\kern-.025em b}\kern-.08em
                 T\kern-.1667em\lower.7ex\hbox{E}\kern-.125emX}}}{}%
\def\AmS{{\protect\usefont{OMS}{cmsy}{m}{n}%
                A\kern-.1667em\lower.5ex\hbox{M}\kern-.125emS}}}{}%
\let\DOTSI\relax
\def\RIfM@{\relax\ifmmode}%
\def\FN@{\futurelet\next}%
\def\iint{\DOTSI\intno@\tw@\FN@\ints@}%
\def\iiint{\DOTSI\intno@\thr@@\FN@\ints@}%
\def\iiiint{\DOTSI\intno@4 \FN@\ints@}%
\def\idotsint{\DOTSI\intno@\z@\FN@\ints@}%
\def\ints@{\findlimits@\ints@@}%
\newif\iflimtoken@
\newif\iflimits@
\def\findlimits@{\limtoken@true\ifx\next\limits\limits@true
 \else\ifx\next\nolimits\limits@false\else
 \limtoken@false\ifx\ilimits@\nolimits\limits@false\else
 \ifinner\limits@false\else\limits@true\fi\fi\fi\fi}%
\def\multint@{\int\ifnum\intno@=\z@\intdots@                          
 \else\intkern@\fi                                                    
 \ifnum\intno@>\tw@\int\intkern@\fi                                   
 \ifnum\intno@>\thr@@\int\intkern@\fi                                 
 \int}
\def\multintlimits@{\intop\ifnum\intno@=\z@\intdots@\else\intkern@\fi
 \ifnum\intno@>\tw@\intop\intkern@\fi
 \ifnum\intno@>\thr@@\intop\intkern@\fi\intop}%
\def\intic@{%
    \mathchoice{\hskip.5em}{\hskip.4em}{\hskip.4em}{\hskip.4em}}%
\def\negintic@{\mathchoice
 {\hskip-.5em}{\hskip-.4em}{\hskip-.4em}{\hskip-.4em}}%
\def\ints@@{\iflimtoken@                                              
 \def\ints@@@{\iflimits@\negintic@
   \mathop{\intic@\multintlimits@}\limits                             
  \else\multint@\nolimits\fi                                          
  \eat@}
 \else                                                                
 \def\ints@@@{\iflimits@\negintic@
  \mathop{\intic@\multintlimits@}\limits\else
  \multint@\nolimits\fi}\fi\ints@@@}%
\def\intkern@{\mathchoice{\!\!\!}{\!\!}{\!\!}{\!\!}}%
\def\plaincdots@{\mathinner{\cdotp\cdotp\cdotp}}%
\def\intdots@{\mathchoice{\plaincdots@}%
 {{\cdotp}\mkern1.5mu{\cdotp}\mkern1.5mu{\cdotp}}%
 {{\cdotp}\mkern1mu{\cdotp}\mkern1mu{\cdotp}}%
 {{\cdotp}\mkern1mu{\cdotp}\mkern1mu{\cdotp}}}%
\def\RIfM@{\relax\protect\ifmmode}
\def\text{\RIfM@\expandafter\text@\else\expandafter\mbox\fi}
\let\nfss@text\text
\def\text@#1{\mathchoice
   {\textdef@\displaystyle\f@size{#1}}%
   {\textdef@\textstyle\tf@size{\firstchoice@false #1}}%
   {\textdef@\textstyle\sf@size{\firstchoice@false #1}}%
   {\textdef@\textstyle \ssf@size{\firstchoice@false #1}}%
   \glb@settings}
\def\textdef@#1#2#3{\hbox{{%
                    \everymath{#1}%
                    \let\f@size#2\selectfont
                    #3}}}
\newif\iffirstchoice@
\def\Let@{\relax\iffalse{\fi\let\\=\cr\iffalse}\fi}%
\def\vspace@{\def\vspace##1{\crcr\noalign{\vskip##1\relax}}}%
\def\multilimits@{\bgroup\vspace@\Let@
 \baselineskip\fontdimen10 \scriptfont\tw@
 \advance\baselineskip\fontdimen12 \scriptfont\tw@
 \lineskip\thr@@\fontdimen8 \scriptfont\thr@@
 \lineskiplimit\lineskip
 \vbox\bgroup\ialign\bgroup\hfil$\m@th\scriptstyle{##}$\hfil\crcr}%
\def\Sb{_\multilimits@}%
\def\endSb{\crcr\egroup\egroup\egroup}%
\def\Sp{^\multilimits@}%
\newdimen\ex@
\def\rightarrowfill@#1{$#1\m@th\mathord-\mkern-6mu\cleaders
 \hbox{$#1\mkern-2mu\mathord-\mkern-2mu$}\hfill
 \mkern-6mu\mathord\rightarrow$}%
\def\leftarrowfill@#1{$#1\m@th\mathord\leftarrow\mkern-6mu\cleaders
 \hbox{$#1\mkern-2mu\mathord-\mkern-2mu$}\hfill\mkern-6mu\mathord-$}%
\def\leftrightarrowfill@#1{$#1\m@th\mathord\leftarrow
\mkern-6mu\cleaders
 \hbox{$#1\mkern-2mu\mathord-\mkern-2mu$}\hfill
 \mkern-6mu\mathord\rightarrow$}%
\def\overrightarrow{\mathpalette\overrightarrow@}%
\def\overrightarrow@#1#2{\vbox{\ialign{##\crcr\rightarrowfill@#1\crcr
 \noalign{\kern-\ex@\nointerlineskip}$\m@th\hfil#1#2\hfil$\crcr}}}%
\def\overleftarrow{\mathpalette\overleftarrow@}%
\def\overleftarrow@#1#2{\vbox{\ialign{##\crcr\leftarrowfill@#1\crcr
 \noalign{\kern-\ex@\nointerlineskip}$\m@th\hfil#1#2\hfil$\crcr}}}%
\def\overleftrightarrow{\mathpalette\overleftrightarrow@}%
\def\overleftrightarrow@#1#2{\vbox{\ialign{##\crcr
   \leftrightarrowfill@#1\crcr
 \noalign{\kern-\ex@\nointerlineskip}$\m@th\hfil#1#2\hfil$\crcr}}}%
\def\underrightarrow{\mathpalette\underrightarrow@}%
\def\underrightarrow@#1#2{\vtop{\ialign{##\crcr$\m@th\hfil#1#2\hfil
  $\crcr\noalign{\nointerlineskip}\rightarrowfill@#1\crcr}}}%
\def\underleftarrow{\mathpalette\underleftarrow@}%
\def\underleftarrow@#1#2{\vtop{\ialign{##\crcr$\m@th\hfil#1#2\hfil
  $\crcr\noalign{\nointerlineskip}\leftarrowfill@#1\crcr}}}%
\def\underleftrightarrow{\mathpalette\underleftrightarrow@}%
\def\underleftrightarrow@#1#2{\vtop{\ialign{##\crcr$\m@th
  \hfil#1#2\hfil$\crcr
 \noalign{\nointerlineskip}\leftrightarrowfill@#1\crcr}}}%
\def\qopnamewl@#1{\mathop{\operator@font#1}\nlimits@}
\let\nlimits@\displaylimits
\def\setboxz@h{\setbox\z@\hbox}
\def\varlim@#1#2{\mathop{\vtop{\ialign{##\crcr
 \hfil$#1\m@th\operator@font lim$\hfil\crcr
 \noalign{\nointerlineskip}#2#1\crcr
 \noalign{\nointerlineskip\kern-\ex@}\crcr}}}}
 \def\rightarrowfill@#1{\m@th\setboxz@h{$#1-$}\ht\z@\z@
  $#1\copy\z@\mkern-6mu\cleaders
  \hbox{$#1\mkern-2mu\box\z@\mkern-2mu$}\hfill
  \mkern-6mu\mathord\rightarrow$}
\def\leftarrowfill@#1{\m@th\setboxz@h{$#1-$}\ht\z@\z@
  $#1\mathord\leftarrow\mkern-6mu\cleaders
  \hbox{$#1\mkern-2mu\copy\z@\mkern-2mu$}\hfill
  \mkern-6mu\box\z@$}
\def\projlim{\qopnamewl@{proj\,lim}}
\def\injlim{\qopnamewl@{inj\,lim}}
\def\varinjlim{\mathpalette\varlim@\rightarrowfill@}
\def\varprojlim{\mathpalette\varlim@\leftarrowfill@}
\def\varliminf{\mathpalette\varliminf@{}}
\def\varliminf@#1{\mathop{\underline{\vrule\@depth.2\ex@\@width\z@
   \hbox{$#1\m@th\operator@font lim$}}}}
\def\varlimsup{\mathpalette\varlimsup@{}}
\def\varlimsup@#1{\mathop{\overline
  {\hbox{$#1\m@th\operator@font lim$}}}}
\def\dfrac#1#2{{\displaystyle {#1 \over #2}}}%
\def\align{\@verbatim \frenchspacing\@vobeyspaces \@alignverbatim
You are using the "align" environment in a style in which it is not defined.}
\let\csname endalign*\endcsname =\endtrivlist
\def\alignat{\@verbatim \frenchspacing\@vobeyspaces \@alignatverbatim
You are using the "alignat" environment in a style in which it is not defined.}
\let\csname endalignat*\endcsname =\endtrivlist
\def\xalignat{\@verbatim \frenchspacing\@vobeyspaces \@xalignatverbatim
You are using the "xalignat" environment in a style in which it is not defined.}
\let\csname endxalignat*\endcsname =\endtrivlist
\def\gather{\@verbatim \frenchspacing\@vobeyspaces \@gatherverbatim
You are using the "gather" environment in a style in which it is not defined.}
\let\csname endgather*\endcsname =\endtrivlist
\def\multiline{\@verbatim \frenchspacing\@vobeyspaces \@multilineverbatim
You are using the "multiline" environment in a style in which it is not defined.}
\let\csname endmultiline*\endcsname =\endtrivlist
\def\arrax{\@verbatim \frenchspacing\@vobeyspaces \@arraxverbatim
You are using a type of "array" construct that is only allowed in AmS-LaTeX.}
\def\tabulax{\@verbatim \frenchspacing\@vobeyspaces \@tabulaxverbatim
You are using a type of "tabular" construct that is only allowed in AmS-LaTeX.}
\let\csname endarrax*\endcsname =\endtrivlist
\let\csname endtabulax*\endcsname =\endtrivlist
\def\@@eqncr{\let\@tempa\relax
    \ifcase\@eqcnt \def\@tempa{& & &}\or \def\@tempa{& &}%
      \else \def\@tempa{&}\fi
     \@tempa
     \if@eqnsw
        \iftag@
           \@taggnum
        \else
           \@eqnnum\stepcounter{equation}%
        \fi
     \fi
     \global\tag@false
     \global\@eqnswtrue
     \global\@eqcnt\z@\cr}
 \def\endequation{%
     \ifmmode\ifinner 
      \iftag@
        \addtocounter{equation}{-1} 
        $\hfil
           \displaywidth\linewidth\@taggnum\egroup \endtrivlist
        \global\tag@false
        \global\@ignoretrue   
      \else
        $\hfil
           \displaywidth\linewidth\@eqnnum\egroup \endtrivlist
        \global\tag@false
        \global\@ignoretrue 
      \fi
     \else   
      \iftag@
        \addtocounter{equation}{-1} 
        \eqno \hbox{\@taggnum}
        \global\tag@false%
        $$\global\@ignoretrue
      \else
        \eqno \hbox{\@eqnnum}
        $$\global\@ignoretrue
      \fi
     \fi\fi
 } 
 \newif\iftag@ \tag@false
 \def\tag{\@ifnextchar*{\@tagstar}{\@tag}}
 \def\@tag#1{%
     \global\tag@true
     \global\def\@taggnum{(#1)}}
 \def\@tagstar*#1{%
     \global\tag@true
     \global\def\@taggnum{#1}%
}
\begin{document}

\title{\vspace{-10mm}Hybrid Quantile Regression Estimation for Time Series
Models with Conditional Heteroscedasticity}
\author{Yao Zheng, Qianqian Zhu, Guodong Li and Zhijie Xiao \\
\textit{University of Hong Kong and Boston College} }

\date{}
\maketitle

\begin{abstract}
Estimating conditional quantiles of financial time series is essential for
risk management and many other applications in finance. It is well-known
that financial time series display conditional heteroscedasticity. Among the
large number of conditional heteroscedastic models, the generalized
autoregressive conditional heteroscedastic (GARCH) process is the most
popular and influential one. So far, feasible quantile regression methods
for this task have been confined to a variant of the GARCH model, the linear
GARCH model, owing to its tractable conditional quantile structure. This
paper considers the widely used GARCH model. An easy-to-implement hybrid
conditional quantile estimation procedure is developed based on a simple
albeit nontrivial transformation. Asymptotic properties of the proposed
estimator and statistics are derived, which facilitate corresponding
inferences. To approximate the asymptotic distribution of the quantile
regression estimator, we introduce a mixed bootstrapping procedure, where a
time-consuming optimization is replaced by a sample averaging. Moreover,
diagnostic tools based on the residual quantile autocorrelation function are
constructed to check the adequacy of the fitted conditional quantiles.
Simulation experiments are carried out to assess the finite-sample
performance of the proposed approach. The favorable performance of the
conditional quantile estimator and the usefulness of the inference tools are
further illustrated by an empirical application.
\end{abstract}

\textit{Keywords and phrases:} Bootstrap method; Conditional quantile; GARCH; Nonlinear time series; Quantile regression.

\newpage

\section{Introduction}

Time series models with conditional heteroscedasticity have become extremely
popular in financial applications since the appearance of %
\citeauthor{Engle1982}'s \citeyearpar{Engle1982} autoregressive conditional
heteroscedastic (ARCH) model and \citeauthor{Bollerslev1986}'s %
\citeyearpar{Bollerslev1986} generalized autoregressive conditional
heteroscedastic (GARCH) model; see also \cite{Francq_Zakoian2010}. These
models are widely used in the assessment and management of financial risk,
including the estimation of quantile-based measures such as the
Value-at-Risk (VaR) and the Expected Shortfall (ES). Spurred by the need of
various financial institutions and regulatory authorities, quantile-based
measures now play an important part in quantitative analysis and investment
decision making. For this reason, estimating conditional quantiles of
financial time series is crucial to both academic researchers and
professional practitioners in many areas of economics and finance. Furthermore, as
conditional quantiles can be directly estimated by quantile regression %
\citep{Koenker_Bassett1978}, it is especially appealing to study the
conditional quantile inference for conditional heteroscedastic models via
quantile regression.

Among the large number of conditional heteroscedastic models, arguably the
most popular and influential one is \citeauthor{Bollerslev1986}'s %
\citeyearpar{Bollerslev1986} GARCH model, since its specification is intuitive,
parsimonious and readily interpretable. It has proven highly
successful in capturing the volatility clustering of financial time series,
and therefore has been frequently integrated into the areas of asset
pricing, asset management and financial risk management.  The GARCH$(p,q)$ model can
be written as
\begin{equation}
x_{t}=\sqrt{h_{t}}\eta _{t},\hspace{5mm}h_{t}=\alpha_{0}+\sum_{i=1}^{q}%
\alpha _{i}x_{t-i}^{2}+\sum_{j=1}^{p}\beta _{j}h_{t-j},  \label{garch}
\end{equation}
where $\{\eta_t\}$ is a sequence of independent and identically distributed (%
$i.i.d.$) innovations with mean zero and variance one. Despite the
fast-growing interest in conditional quantile inference for time series
models \citep{Koenker2005}, the literature on quantile regression
 for the GARCH model is relatively sparse due to
technical difficulties in the estimation. Specifically, consider the conditional quantile of the GARCH process given by \eqref{garch},
\begin{equation}  \label{gquantile}
Q_{\tau}(x_t|\mathcal{F}_{t-1})= Q_{\tau, \eta}\sqrt{\alpha_0+\sum_{i=1}^q%
\alpha_ix_{t-i}^2+\sum_{j=1}^p\beta_jh_{t-j}},\quad 0<\tau<1,
\end{equation}
where $Q_{\tau, \eta}$ is the $\tau$th quantile of $\eta_t$, and $\mathcal{F}%
_{t}$ is the information set available at time $t$. The
\emph{square-root} function in \eqref{gquantile}, together with the
non-smooth loss function in quantile regression, $\rho _{\tau
}(x)=x[\tau -I(x<0)]$, leads to a non-smooth objective function which is
non-convex even for the ARCH special case.  It is this feature that causes the challenges in asymptotic derivation and numerical optimization, and the problem is even more complicated in view of the recursive structure of the conditional variances $\{h_{t}\}$.

On account of these difficulties, the previous literature considered quantile regression
estimation for \citeauthor{Taylor1986}'s \citeyearpar{Taylor1986} linear ARCH
(LARCH) or linear GARCH (LGARCH) models. In particular, an LGARCH($p,q$)
model has the following form,
\begin{equation}
y_{t}=\sigma _{t}\varepsilon _{t},\hspace{5mm}\sigma _{t}=\alpha
_{0}+\sum_{i=1}^{q}\alpha _{i}|y_{t-i}|+\sum_{j=1}^{p}\beta _{j}\sigma
_{t-j},  \label{lgarch}
\end{equation}%
where $\{\varepsilon _{t}\}$ is a sequence of $i.i.d.$ innovations with mean
zero. Its conditional quantile has a much simpler form,
\begin{equation}
Q_{\tau }(y_{t}|\mathcal{F}_{t-1})=\left( \alpha _{0}+\sum_{i=1}^{q}\alpha
_{i}|y_{t-i}|+\sum_{j=1}^{p}\beta _{j}\sigma _{t-j}\right) Q_{\tau
,\varepsilon },\quad 0<\tau <1,  \label{lquantile}
\end{equation}%
where $Q_{\tau ,\varepsilon }$ is the $\tau $th quantile of $\varepsilon
_{t} $. \cite{Koenker_Zhao1996} first considered the conditional quantile
estimation for the LARCH($q$) model, which, without any $\sigma _{t-j}$
involved in \eqref{lquantile}, reduces to a linear quantile
regression problem. Quantile regression for the LGARCH model, in contrast, is more troublesome due to the recursive structure of the conditional
scales $\{\sigma_t\}$. To tackle this, \cite{Xiao_Koenker2009}
proposed a two-stage scheme, where they replaced the unobservable $\sigma _{t-j}$'s
in \eqref{lquantile} with some initial estimates first, enabling a linear quantile regression at the second stage. Nevertheless, most practitioners and researchers still prefer \citeauthor{Bollerslev1986}'s \citeyearpar{Bollerslev1986} GARCH model in \eqref{garch}. For this reason, \cite{Lee_Noh2013} studied the asymptotic
properties of a quantile regression estimator for the GARCH model, without addressing the feasibility of the numerical optimization for this estimator. For a detailed discussion on the algorithmic issues in quantile regression, see \cite{Koenker_Park1996}.

The purpose of this paper is to develop an easy-to-implement approach to the
conditional quantile estimation and inference for \citeauthor{Bollerslev1986}%
's \citeyearpar{Bollerslev1986} original GARCH model given by \eqref{garch}.
To overcome the aforementioned difficulties, we design
the following transformation $T:\mathbb{R}\rightarrow \mathbb{R}$ for the
conditional quantile in \eqref{gquantile},
\begin{equation}
T(x)=x^{2}\sgn(x),  \label{transform}
\end{equation}%
where $\sgn(\cdot )$ is the sign function. Note that there are two
desirable properties of $T(\cdot)$:

\begin{itemize}
\item[(a)] it is the inverse of the square-root function \emph{in some sense};

\item[(b)] it is continuous and nondecreasing on $\mathbb{R}$.
\end{itemize}

Owing to this design of $T(\cdot)$, the conditional quantile of the
transformed sequence $\{T(x_{t})\}$ resembles that of the LGARCH process $%
\{y_{t}\} $ in \eqref{lquantile}, in that
\begin{equation}
Q_{\tau }[T(x_{t})|\mathcal{F}_{t-1}]=\left( \alpha
_{0}+\sum_{i=1}^{q}\alpha _{i}x_{t-i}^{2}+\sum_{j=1}^{p}\beta
_{j}h_{t-j}\right) T(Q_{\tau ,\eta }),  \label{lquantile2}
\end{equation}%
where $x_{t-i}^2=|T(x_{t-i})|$. This connects the conditional quantile
inference of the GARCH model directly to that of the LGARCH model. As
a result of this connection, we can estimate $Q_{\tau }(x_{t}|\mathcal{F}_{t-1})$  through estimating $Q_{\tau }[T(x_{t})|\mathcal{F}_{t-1}]$. Specifically, we can first estimate $Q_{\tau }[T(x_{t})|\mathcal{F}_{t-1}]$  via linear quantile regression with some initial estimates of $\{h_{t}\}$.  Then, by applying the inverse transformation $T^{-1}(\cdot )$ to the estimator of $Q_{\tau }[T(x_{t})|\mathcal{F}_{t-1}]$, we can obtain that of $Q_{\tau }(x_{t}|\mathcal{F}_{t-1})$,  owing to the monotonicity of the transformation.

The quantile regression based on \eqref{lquantile2} requires appropriate initial
estimates of the conditional variances $\{h_{t}\}$. In \cite%
{Xiao_Koenker2009}, the conditional scales $\{\sigma _{t}\}$ of the LGARCH
process \eqref{lgarch} are estimated based on a sieve approximation of $%
\sigma _{t}$ with an $m$th-order linear ARCH model: $\sigma _{t}=\gamma
_{0}+\sum_{j=1}^{m}\gamma _{j}|y_{t-j}|$,  with $m\rightarrow \infty$. A similar sieve approximation may
be used on the GARCH model (\ref{garch}) based on $h_{t}=\gamma
_{0}+\sum_{j=1}^{m}\gamma _{j}x_{t-j}^{2}$.
However, the tunning parameter $m$ heavily affects the numerical stability
of the procedure: e.g., larger $\alpha _{i}$ and $\beta _{j}$ would require
bigger $m$, but unnecessarily large $m$ can introduce too much noise into
the estimation; see the Monte Carlo evidence in Section 5.1.
On account of this, we estimate $\{h_{t}\}$ by the
Gaussian quasi-maximum likelihood estimator (QMLE) for the GARCH model. The
asymptotic normality of this estimator under mild technical conditions is
established by \cite{Francq_Zakoian2004}, and it is easier to implement as well as
numerically more stable than the sieve method.   Therefore, in this paper,  a hybrid conditional quantile estimator for the GARCH model  is constructed based on two
estimators of different nature: the Gaussian QMLE, which incorporates the global model structure, and the
quantile regression estimator, which approximates the conditional quantiles
locally.

We derive the asymptotic properties of the proposed estimator and statistics. These limiting results facilitate the statistical inference in this paper.
On the other hand, a sparsity/density function enters the asymptotic distribution of
the quantile regression estimator, and any feasible inference procedure
requires that the density is handled appropriately. Estimation of the
density function, although possible, is usually complicated and depends on
additional tunning parameters. The preliminary estimation of the density
function seriously affects the finite-sample performance of the inference
procedures. For this reason, we propose a bootstrap method to approximate
the distribution.

\cite{Jin_Ying_Wei2001} considered a bootstrap method by perturbing the
minimand of the objective function with random weights, which is especially
useful for time series models as the observations are ordered by time;
see also \cite{Rao_Zhao1992}, \cite{Li_Leng_Tsai2014} and \cite{Zhu2016}.
Applying this method to our context, we may conduct a randomly weighted QMLE
first, followed by a randomly weighted linear quantile regression. Nonetheless,  since the sparsity/density function is not involved in the asymptotic distribution of the QMLE, the first bootstrapping step is actually unnecessary. In view of this, we propose a mixed method: we suggest replacing the
first step with a sample averaging, so that the time-consuming optimization need only be
performed in the second bootstrapping step. A significant reduction in the
computation time hence results.

The asymptotic results and the proposed bootstrap method are useful for conditional quantile inference. For example, the bootstrapping procedure enables us to construct confidence intervals for
the fitted conditional quantiles, which may be especially interesting in
practice.  Furthermore, adopting Box-Jenkins' three-stage modeling strategy %
\citep{Box_Jenkins_Reinsel2008}, we consider diagnostic checking for the
fitted conditional quantiles. For conditional heteroscedastic models,
diagnostic tools based on the sample autocorrelation function (ACF)
of squared residuals \citep{Li_Mak1994} or absolute residuals %
\citep{Li_Li2005} are commonly used; see \cite{Li2004} for a review on
diagnostic checks of time series. In conditional quantile inference,
\cite{Li_Li_Tsai2015} proposed the quantile autocorrelation function (QACF)
and used it to develop goodness-of-fit tests for quantile autoregressive models %
\citep{Koenker_Xiao2006}.  Motivated by these, we construct diagnostic tools for the fitted conditional quantiles by introducing a suitable residual QACF in this paper.

The rest of the paper is organized as follows. Section 2 introduces the
hybrid conditional quantile estimator for GARCH models, and Section 3
proposes the mixed bootstrapping approximation procedure. Section 4
considers diagnostic checking for the fitted conditional quantiles.
Section 5 conducts extensive simulation experiments to assess the finite-sample
performance of the proposed inference tools; a comparison with existing
conditional quantile estimators is also provided. Section 6 presents
an empirical application, and Section 7 gives a short conclusion and
discussion. All technical details are relegated to the appendix. Throughout
the paper, $\rightarrow_d$ denotes the convergence in distribution, $o_p(1)$
denotes a sequence of random variables converging to zero in probability,
and the notation $o_p^*(1)$ corresponds to the bootstrapped probability
space.

\section{The Proposed Hybrid Conditional Quantile Estimation Procedure}

Let $\{x_t\}$ be a strictly stationary and ergodic time series generated by
the GARCH model in \eqref{garch}, where $\alpha_0>0$, $\alpha_i\geq 0$ for $%
1\leq i \leq q$, $\beta_j\geq 0$ for $1\leq j \leq p$; see \cite%
{Bollerslev1986}. The necessary and sufficient condition for the existence
of a unique strictly stationary and ergodic solution to this model is given
in \cite{Bougerol_Picard1992}.

Denote by $\mathcal{F}_{t}$ the $\sigma $-field generated by $%
\{x_{t},x_{t-1},\ldots \}$. Let $y_{t}=T(x_{t})$ where $T(\cdot )$ is
defined by \eqref{transform}, and denote $b_{\tau }=T(Q_{\tau ,\eta })$ with
$Q_{\tau ,\eta }$ being the $\tau $th quantile of $\eta _{t}$. From %
\eqref{lquantile2}, the $\tau$th quantile of the transformed variable $y_{t}$
conditional on $\mathcal{F}_{t-1}$ is
\begin{equation}
Q_{\tau }(y_{t}|\mathcal{F}_{t-1})=b_{\tau }\left(
\alpha_{0}+\sum_{i=1}^{q}\alpha
_{i}x_{t-i}^{2}+\sum_{j=1}^{p}\beta_{j}h_{t-j}\right) =\theta _{\tau
}^{\prime }z_{t},\hspace{5mm}0<\tau <1,  \label{model}
\end{equation}
where
\begin{equation*}
z_{t}=(1,x_{t-1}^{2},\dots,x_{t-q}^{2},h_{t-1},\dots,h_{t-p})^{\prime}\quad%
\text{and}\quad\theta _{\tau }=b_{\tau }(\alpha _{0},\alpha
_{1},\dots,\alpha_{q},\beta _{1},\dots,\beta _{p})^{\prime }.
\end{equation*}
If $\{h_{t}\}$ were known, then $Q_{\tau }(y_{t}|\mathcal{F}_{t-1})$ would
be linear in $\theta _{\tau }$, and one could estimate $Q_{\tau }(y_{t}|%
\mathcal{F}_{t-1})$ via a linear quantile regression on the transformed
model. In practice, this quantity can also be estimated with appropriate
initial estimates of $\{h_{t}\}$.

Denote by $\theta=(\alpha_0,\alpha_1,\dots, \alpha_q,
\beta_1,\dots,\beta_p)^{\prime}$ the parameter vector of model \eqref{garch}%
. Let $0<\underline{w}<\overline{w}$, $0<\rho_0<1$, $p\underline{w}<\rho_0$,
and define
\begin{align*}
\Theta=\{\theta: \beta_1+\cdots+\beta_p\leq \rho_0, \; \underline{w}%
&\leq\min(\alpha_0, \alpha_1, \dots, \alpha_q, \beta_1,\dots, \beta_p) \\
&\leq\max(\alpha_0, \alpha_1, \dots, \alpha_q, \beta_1,\dots, \beta_p)\leq%
\overline{w}\} \subset \mathbb{R}_+^{p+q+1},
\end{align*}
where $\mathbb{R}_+=(0,\infty)$; see \cite{Berkes_Horvath2004}. The true
value of $\theta$ is denoted by $\theta_{0}=(\alpha_{00},\alpha_{01},\dots,%
\alpha_{0q}, \beta_{01},\dots, \beta_{0p})^{\prime}$. Moreover, we define
the functions $h_t(\theta)$ recursively by
\begin{equation}  \label{aeq1}
h_t(\theta)=\alpha_0+\sum_{i=1}^q\alpha_ix_{t-i}^2+\sum_{j=1}^p%
\beta_jh_{t-j}(\theta).
\end{equation}
Note that $h_t(\theta_0)=h_t$. As \eqref{aeq1} depends on
infinite past observations, initial values for $\{x_0^2,\dots, x_{1-q}^2,
h_0, \dots, h_{1-p}\}$ are needed. This however does not affect our
asymptotic results. We set all initial values to $n^{-1}\sum_{t=1}^{n}x_t^2$
and denote the resulting $h_t(\theta)$ by $\widetilde{h}_t(\theta)$.

We propose the hybrid conditional quantile estimation procedure as follows.
\begin{itemize}
\item \textit{Step E1 (Initial estimation of the original model).} Perform
the Gaussian quasi-maximum likelihood estimation (QMLE) of model %
\eqref{garch},
\begin{equation}  \label{QMLE}
\widetilde{\theta}_n=\argmin_{\theta\in\Theta}\sum_{t=1}^{n}\widetilde{\ell}%
_t(\theta),
\end{equation}
where $\widetilde{\ell}_t(\theta)=x_t^2/\widetilde{h}_t(\theta)+\log%
\widetilde{h}_t(\theta)$; see \cite{Francq_Zakoian2004}. Then compute the initial
estimates of $\{h_t\}$ as $\widetilde{h}_t=\widetilde{h}_t(\widetilde{\theta}%
_n)$.

\item \textit{Step E2 (Quantile regression of the transformed model).}
Perform the weighted linear quantile regression of $y_{t}$ on $\widetilde{z}%
_{t}=(1,x_{t-1}^{2},\dots,x_{t-q}^{2},\widetilde{h}_{t-1},\ldots ,\widetilde{%
h}_{t-p})^{\prime }$ at a specified quantile level $\tau$,
\begin{equation}
\widehat{\theta }_{\tau n}=\argmin_{\theta _{\tau }}\sum_{t=1}^{n}\frac{1}{%
\widetilde{h}_{t}}\rho _{\tau }(y_{t}-\theta _{\tau }^{\prime}\widetilde{z}%
_{t}).  \label{criterion}
\end{equation}
Thus, the $\tau$th conditional quantile of $y_{t}$ can be estimated by $%
\widehat{Q}_{\tau }(y_{t}|\mathcal{F}_{t-1})=\widehat{\theta }_{\tau
n}^{\prime}\widetilde{z}_{t}$.

\item \textit{Step E3 (Conditional quantile estimation for the original time
series).} Estimate the $\tau$th conditional quantile of $x_{t}$ by $\widehat{%
Q}_{\tau }(x_{t}|\mathcal{F}_{t-1})=T^{-1}(\widehat{\theta }_{\tau
n}^{\prime }\widetilde{z}_{t})$, where $T^{-1}(x)=\sqrt{|x|}\sgn(x)$ is the
inverse function of $T(\cdot)$.
\end{itemize}

\smallskip For convenience of the asymptotic analysis,
we make the following assumptions.

\begin{assum}
\label{assum1} (i) $\theta_0$ is in the interior of $\Theta$; (ii) $\eta_t^2$
has a non-degenerate distribution with $E\eta_t^2=1$; (iii) The polynomials $%
\sum_{i=1}^q\alpha_i x^i$ and $1-\sum_{j=1}^p\beta_j x^j$ have no common
root; (iv) $E\eta_t^{4}<\infty$.
\end{assum}

Assumption \ref{assum1} is used by \cite{Francq_Zakoian2004} to ensure the
consistency and asymptotic normality of the Gaussian QMLE $\widetilde{\theta}%
_n$, and is known as the sharpest result. It implies only a finite fractional moment of $x_t$, i.e., $E|x_t|^{2\delta_0}<\infty$ for some $\delta_0>0$ %
\citep{Berkes_Istvan_Horvath2003, Francq_Zakoian2004}. For the GARCH model,
imposing a higher-order moment condition on $x_t$ would reduce the available parameter space; see \citet[Chapter~2.4.1]{Francq_Zakoian2010}.

\begin{assum}
\label{assum2} The density $f(\cdot)$ of $\varepsilon_t=T(\eta_t)$ is
positive and differentiable almost everywhere on $\mathbb{R}$, with its
derivative $\dot{f}$ satisfying that $\sup_{x\in \mathbb{R}}|\dot{f}%
(x)|<\infty$.
\end{assum}

Assumption \ref{assum2} is made for  brevity of the technical proofs, while it is
sufficient to restrict the positiveness of $f(\cdot)$ and the boundedness of
$|\dot{f}(\cdot)|$ in a small and fixed interval $[b_{\tau}-r, b_{\tau}+r]$
for some $r>0$.

Let $\kappa_1=E[\eta_t^2I(\eta_t< Q_{\tau, \eta})]-\tau$ and $%
\kappa_2=E\eta_t^4-1$. Define the following $(p+q+1)\times(p+q+1)$ matrices:
\begin{equation*}
J=E\left[\frac{1}{h_t^2}\frac{\partial h_t(\theta_0)}{\partial\theta}\frac{%
\partial h_t(\theta_0)}{\partial\theta^\prime}\right],\quad
\Omega_0=E(z_tz_t^\prime),
\end{equation*}
and for $i=1$ and 2,
\begin{equation*}
\Omega_i=E\left (\frac{z_tz_t^\prime}{h_t^{i}}\right ), \quad H_i=E\left [%
\frac{z_t}{h_t^i}\frac{\partial h_t(\theta_0)}{\partial\theta^\prime}\right]%
, \quad\text{and}\quad \Gamma_i=E\left [\frac{z_t}{h_t^i}\sum_{j=1}^{p}%
\beta_{0j}\frac{\partial h_{t-j}(\theta_0)}{\partial\theta^\prime}\right].
\end{equation*}
The asymptotic distribution of the quantile regression estimator $\widehat{%
\theta }_{\tau n}$ is given as follows.

\begin{thm}
\label{thm1} Under Assumptions \ref{assum1} and \ref{assum2},
\begin{equation*}  \label{estimator}
\sqrt{n}(\widehat{\theta}_{\tau n} -{\theta}_{\tau 0}) \rightarrow_d
N(0,\Sigma_1),
\end{equation*}
where $\theta_{\tau 0}=b_{\tau}\theta_0$ and
\begin{equation*}
\Sigma_1=\Omega_2^{-1}\left [\frac{\tau-\tau^2}{f^2(b_{\tau})}\Omega_2+
\frac{\kappa_1 b_{\tau}}{f(b_{\tau})}(\Gamma_2
J^{-1}H_2^\prime+H_2J^{-1}\Gamma_2^\prime)+\kappa_2 b_{\tau}^2 \Gamma_2
J^{-1}\Gamma_2^\prime\right ]\Omega_2^{-1}.
\end{equation*}
\end{thm}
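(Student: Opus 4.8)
The plan is to obtain a joint Bahadur-type linear representation for the second-stage estimator $\widehat{\theta}_{\tau n}$ that explicitly carries the first-stage QMLE error $\sqrt{n}(\widetilde{\theta}_n-\theta_0)$, and then invoke a martingale central limit theorem. Writing $e_t=y_t-\theta_{\tau 0}'z_t=h_t(\varepsilon_t-b_\tau)$ with $\varepsilon_t=T(\eta_t)$, the first observation is that $I(e_t<0)=I(\varepsilon_t<b_\tau)=I(\eta_t<Q_{\tau,\eta})$ is Bernoulli$(\tau)$ and independent of $\mathcal{F}_{t-1}$, which makes the leading score a martingale difference. Since the criterion in \eqref{criterion} is convex in $\theta_\tau$, Knight's identity together with the convexity (argmax) lemma of Pollard yields both consistency and the representation once the reparametrized objective $Z_n(u)$, $u=\sqrt{n}(\theta_\tau-\theta_{\tau 0})$, is analyzed locally. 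The quadratic part of $Z_n(u)$ has conditional mean $\tfrac12 u'[\tfrac1n\sum_t h_t^{-2}f(b_\tau)z_tz_t']u$, converging by ergodicity to $\tfrac12 f(b_\tau)u'\Omega_2 u$; here the conditional density of $e_t$ at the origin equals $f(b_\tau)/h_t$, and the weight $\widetilde{h}_t^{-1}$ supplies the extra factor $h_t^{-1}$.

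The heart of the argument is the linear term $W_n=n^{-1/2}\sum_t \widetilde{h}_t^{-1}\widetilde{z}_t[\tau-I(\widetilde{e}_t<0)]$, in which $\widetilde{z}_t$, $\widetilde{h}_t$ and $\widetilde{e}_t$ all carry the first-stage error. I would expand $\widetilde{h}_{t-j}-h_{t-j}\approx \partial h_{t-j}(\theta_0)/\partial\theta'\,(\widetilde{\theta}_n-\theta_0)$ (the initial-value discrepancy $\widetilde{h}_t(\theta)-h_t(\theta)$ decays geometrically and is asymptotically negligible, as in the GARCH QMLE theory), so that $\widetilde{e}_t=e_t-\delta_t$ with $\delta_t=b_\tau\sum_j\beta_{0j}\,\partial h_{t-j}(\theta_0)/\partial\theta'\,(\widetilde{\theta}_n-\theta_0)$, whence $I(\widetilde{e}_t<0)=I(e_t<\delta_t)$. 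Because $\tau-I(e_t<0)$ has conditional mean zero, replacing the weight and regressor by their infeasible counterparts in the part multiplying $\tau-I(e_t<0)$ costs only $o_p(1)$ (the resulting sample average of $\mathcal{F}_{t-1}$-measurable weights times a mean-zero factor vanishes by the ergodic theorem, times the $O_p(1)$ first-stage error). The genuine contribution therefore comes from the indicator increment $I(e_t<\delta_t)-I(e_t<0)$.

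The main obstacle is precisely this non-smooth indicator increment with a data-dependent, $O_p(n^{-1/2})$ shift $\delta_t$. The plan is to establish the stochastic-equicontinuity statement
\begin{equation*}
\frac{1}{\sqrt{n}}\sum_{t=1}^n \frac{z_t}{h_t}\big[I(e_t<\delta_t)-I(e_t<0)\big]=f(b_\tau)\,b_\tau\,\Gamma_2\,\sqrt{n}(\widetilde{\theta}_n-\theta_0)+o_p(1),
\end{equation*}
by showing that the empirical process indexed by the localized first-stage parameter is asymptotically equicontinuous (via a maximal inequality / bracketing bound that uses the positivity of $f$ and the boundedness of $|\dot{f}|$ in Assumption \ref{assum2}) and then evaluating its compensator, whose conditional mean contributes $f(b_\tau)/h_t$ and whose spatial average converges to $\Gamma_2=E[h_t^{-2}z_t\sum_j\beta_{0j}\partial h_{t-j}(\theta_0)/\partial\theta']$. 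Combining the pieces gives $W_n=W_n^0-f(b_\tau)b_\tau\Gamma_2\sqrt{n}(\widetilde{\theta}_n-\theta_0)+o_p(1)$ with $W_n^0=n^{-1/2}\sum_t h_t^{-1}z_t[\tau-I(e_t<0)]$, and the convexity lemma then yields $\sqrt{n}(\widehat{\theta}_{\tau n}-\theta_{\tau 0})=f(b_\tau)^{-1}\Omega_2^{-1}W_n+o_p(1)$.

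Finally I would substitute the QMLE representation of \cite{Francq_Zakoian2004}, namely $\sqrt{n}(\widetilde{\theta}_n-\theta_0)=J^{-1}n^{-1/2}\sum_t h_t^{-1}(\partial h_t(\theta_0)/\partial\theta)(\eta_t^2-1)+o_p(1)$, giving $\sqrt{n}(\widehat{\theta}_{\tau n}-\theta_{\tau 0})=n^{-1/2}\sum_t\xi_t+o_p(1)$ with $\xi_t=f(b_\tau)^{-1}\Omega_2^{-1}A_t-b_\tau\Omega_2^{-1}\Gamma_2 J^{-1}B_t$, where $A_t=h_t^{-1}z_t[\tau-I(e_t<0)]$ and $B_t=h_t^{-1}(\partial h_t(\theta_0)/\partial\theta)(\eta_t^2-1)$. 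Both $A_t$ and $B_t$ are stationary ergodic martingale differences with finite second moments under Assumption \ref{assum1}(iv), so the martingale CLT applies with $\Sigma_1=E(\xi_t\xi_t')$. The covariance blocks follow from the independence of $\eta_t$ and $\mathcal{F}_{t-1}$: $E(A_tA_t')=(\tau-\tau^2)\Omega_2$, $E(B_tB_t')=\kappa_2 J$, and—using $I(e_t<0)=I(\eta_t<Q_{\tau,\eta})$ with $E(\eta_t^2-1)=0$ so that $E[(\tau-I(e_t<0))(\eta_t^2-1)]=-\kappa_1$—the cross block $E(A_tB_t')=-\kappa_1 H_2$. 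Assembling these three blocks reproduces exactly the stated sandwich form of $\Sigma_1$, completing the argument.
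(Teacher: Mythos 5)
Your proposal is correct and follows essentially the same route as the paper's own proof: Knight's identity combined with the convexity (argmax) lemma, geometric-decay control of the initialization effects (the paper's Lemma \ref{lem3}), a stochastic-equicontinuity argument for the non-smooth indicator increments that extracts the first-stage drift $b_{\tau}f(b_{\tau})\Gamma_{2}\sqrt{n}(\widetilde{\theta}_{n}-\theta_{0})$, substitution of the QMLE representation of \cite{Francq_Zakoian2004}, and a martingale CLT, yielding exactly the paper's Bahadur representation \eqref{bahadur2}, with your covariance blocks $E(A_{t}A_{t}^{\prime})=(\tau-\tau^{2})\Omega_{2}$, $E(A_{t}B_{t}^{\prime})=-\kappa_{1}H_{2}$ and $E(B_{t}B_{t}^{\prime})=\kappa_{2}J$ assembling precisely to $\Sigma_{1}$. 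The only (immaterial) difference is bookkeeping: the paper expands the objective difference $L_{n}(\theta_{\tau 0}+n^{-1/2}u)-\breve{L}_{n}(\theta_{\tau 0})$, so the drift appears in the quadratic part and the $u$-free $O_{p}(1)$ terms $T_{2n},T_{3n}$ drop out by convexity, whereas you expand the score process and obtain the same drift from the compensator of $I(e_{t}<\delta_{t})-I(e_{t}<0)$.
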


\smallskip We have used the weighted quantile regression %
\eqref{criterion} for the sake of efficiency, since $%
y_{t}-Q_{\tau}(y_{t}|\mathcal{F}_{t-1})=h_{t}(\varepsilon _{t}-b_{\tau })$.
Alternatively, the following unweighted quantile regression may be
considered in Step E2,
\begin{equation*}
\widecheck{\theta }_{\tau n}=\argmin_{\theta _{\tau }}\sum_{t=1}^{n}\rho
_{\tau }(y_{t}-\theta _{\tau }^{\prime }\widetilde{z}_{t});
\end{equation*}
see also \cite{Xiao_Koenker2009}. The following corollary provides the asymptotic distribution of the
unweighted quantile regression estimator $\widecheck{\theta }_{\tau n}$.

\begin{cor}
\label{cor1} If $E|x_t|^{4+\iota_0}<\infty$ for some $\iota_0>0$, and
Assumptions \ref{assum1} and \ref{assum2} hold, then
\begin{equation*}  \label{estimator_unw}
\sqrt{n}(\widecheck{\theta}_{\tau n} -{\theta}_{\tau 0}) \rightarrow_d
N(0,\Sigma_2),
\end{equation*}
where
\begin{equation*}
\Sigma_2=\Omega_1^{-1}\left [\frac{\tau-\tau^2}{f^2(b_{\tau})}\Omega_0+
\frac{\kappa_1 b_{\tau}}{f(b_{\tau})}(\Gamma_1
J^{-1}H_1^\prime+H_1J^{-1}\Gamma_1^\prime)+\kappa_2 b_{\tau}^2 \Gamma_1
J^{-1}\Gamma_1^\prime\right ]\Omega_1^{-1}.
\end{equation*}
\end{cor}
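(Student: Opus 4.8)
The plan is to follow the proof of Theorem~\ref{thm1} almost verbatim, removing the weights $1/\widetilde h_t$ from \eqref{criterion} and compensating for their loss with the stronger moment bound $E|x_t|^{4+\iota_0}<\infty$. Writing $u=\sqrt n(\theta_\tau-\theta_{\tau 0})$ and using the convexity of $\rho_\tau$ together with Knight's identity, I would expand the recentred objective
$$\mathbb{L}_n(u)=\sum_{t=1}^n\left[\rho_\tau\!\left(y_t-\theta_{\tau 0}'\widetilde z_t-n^{-1/2}u'\widetilde z_t\right)-\rho_\tau\!\left(y_t-\theta_{\tau 0}'\widetilde z_t\right)\right]$$
into a linear score part $u'\widecheck S_n$ and a quadratic curvature part. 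Since $y_t-\theta_{\tau 0}'z_t=h_t(\varepsilon_t-b_\tau)$ has conditional density $f(\cdot/h_t)/h_t$, which equals $f(b_\tau)/h_t$ at the origin, a uniform law of large numbers sends the curvature part to $\tfrac12 f(b_\tau)u'\Omega_1 u$; here $\Omega_1=E(z_tz_t'/h_t)$ takes the place of the $\Omega_2$ of the weighted case, and $D:=f(b_\tau)\Omega_1$ is the curvature matrix.

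The score $\widecheck S_n=-n^{-1/2}\sum_t\widetilde z_t\psi_\tau(y_t-\theta_{\tau 0}'\widetilde z_t)$, with $\psi_\tau(e)=\tau-I(e<0)$, carries the two-stage structure and I would split it into three pieces. The leading piece is the martingale sum $S_{1n}=n^{-1/2}\sum_t z_t\psi_t$ with $\psi_t=I(\eta_t<Q_{\tau,\eta})-\tau$. The piece $-n^{-1/2}\sum_t(\widetilde z_t-z_t)\psi_\tau(\cdot)$ I expect to be $o_p(1)$, because $\widetilde z_t-z_t=O_p(n^{-1/2})$ multiplies a normalized martingale sum. The remaining piece comes from perturbing the argument of $\psi_\tau$ by $\delta_t:=\theta_{\tau 0}'(\widetilde z_t-z_t)\approx b_\tau\sum_{j=1}^p\beta_{0j}\,\partial h_{t-j}(\theta_0)/\partial\theta'\,(\widetilde\theta_n-\theta_0)$; substituting the QMLE Bahadur representation $\sqrt n(\widetilde\theta_n-\theta_0)=J^{-1}n^{-1/2}\sum_t(\eta_t^2-1)h_t^{-1}\partial h_t(\theta_0)/\partial\theta+o_p(1)$ of \cite{Francq_Zakoian2004}, together with $n^{-1}\sum_t(z_t/h_t)\sum_j\beta_{0j}\partial h_{t-j}/\partial\theta'\to\Gamma_1$, turns this piece into $f(b_\tau)b_\tau\Gamma_1 J^{-1}n^{-1/2}\sum_t(\eta_t^2-1)h_t^{-1}\partial h_t/\partial\theta$. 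Collecting terms, $\widecheck S_n$ becomes a single normalized sum of martingale differences $a_t+b_t$, where $a_t=z_t\psi_t$ and $b_t=f(b_\tau)b_\tau\Gamma_1 J^{-1}(\eta_t^2-1)h_t^{-1}\partial h_t/\partial\theta$.

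A martingale central limit theorem then applies, and the asymptotic variance $V=E[(a_t+b_t)(a_t+b_t)']$ splits using $E(\psi_t^2)=\tau-\tau^2$, the cross moment $E[\psi_t(\eta_t^2-1)]=E[\eta_t^2 I(\eta_t<Q_{\tau,\eta})]-\tau=\kappa_1$, and $E[(\eta_t^2-1)^2]=\kappa_2$, yielding the three blocks $(\tau-\tau^2)\Omega_0$, $\kappa_1 f(b_\tau)b_\tau(\Gamma_1 J^{-1}H_1'+H_1 J^{-1}\Gamma_1')$ and $\kappa_2 f^2(b_\tau)b_\tau^2\Gamma_1 J^{-1}\Gamma_1'$, where $H_1=E[(z_t/h_t)\partial h_t/\partial\theta']$. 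The Bahadur representation $\widecheck u=-D^{-1}\widecheck S_n+o_p(1)$ with $D^{-1}=f(b_\tau)^{-1}\Omega_1^{-1}$ then gives asymptotic variance $D^{-1}V D^{-1}$, which upon sandwiching reduces precisely to $\Sigma_2$.

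The main obstacle is the loss of the bounded weights. In the weighted problem the factor $1/h_t$ makes $z_t/h_t$ bounded, since $x_{t-i}^2/h_t\le\alpha_{0i}^{-1}$, so Assumption~\ref{assum1} (which grants only a finite fractional moment of $x_t$) suffices; without the weights the relevant quantities involve $z_t$ itself, and already $\Omega_0=E(z_tz_t')$ requires $Ex_t^4<\infty$. The condition $E|x_t|^{4+\iota_0}<\infty$ is exactly what is needed to verify a Lyapunov condition for the martingale central limit theorem applied to $a_t=z_t\psi_t$, and to establish the stochastic equicontinuity that legitimizes replacing the random $\widetilde\theta_n$ by its limit inside the non-smooth indicator $\psi_\tau$. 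Controlling this equicontinuity remainder for the non-differentiable loss, rather than the clean algebra of the limiting variance, is where the genuine work lies.
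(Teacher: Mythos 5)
Your proposal takes essentially the same route as the paper's own proof, which simply reruns the argument for Theorem \ref{thm1} with $\widetilde{h}_t^{-1}$ replaced by one and invokes the stronger moment condition exactly where the boundedness of $z_t/h_t$ (and hence Lemma \ref{lem2}) is lost --- most critically in the counterpart of \eqref{forCor1}, the second-moment bound on $\eta_t(v)$ in the equicontinuity/chaining step for the non-smooth quadratic remainder, handled via the H\"{o}lder inequality --- and you correctly locate the genuine work in precisely that remainder control, with the variance algebra matching $\Sigma_2$. One small misattribution: the $\iota_0$ surplus is not needed for a Lyapunov condition, since the stationary ergodic martingale-difference CLT used in the paper requires only $E\|z_t\|^2<\infty$, i.e.\ $Ex_t^4<\infty$; the extra $\iota_0$ is consumed in the H\"{o}lder bounds for the equicontinuity terms, as the paper's proof notes.
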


\smallskip In contrast to Theorem \ref{thm1}, Corollary \ref{cor1} requires $E|x_t|^{4+\iota_0}<\infty$
which entails a smaller available parameter space $\Theta$. Moreover, in the
ARCH case, the asymptotic covariance matrices $\Sigma_1$ and $\Sigma_2$
reduce to $(\tau-\tau^2)\Omega_2^{-1}/f^2(b_{\tau})$ and $%
(\tau-\tau^2)\Omega_1^{-1}\Omega_0\Omega_1^{-1}/f^2(b_{\tau})$, respectively, where it can be verified that $\Sigma_2-\Sigma_1$ is nonnegative definite, i.e., $%
\widehat{\theta}_{\tau n}$ is asymptotically more efficient than $%
\widecheck{\theta}_{\tau n}$. For the GARCH case, a theoretical comparison
becomes much more difficult, but our Monte Carlo evidence in
Section 5.2 demonstrates that the weighted estimator $%
\widehat{\theta }_{\tau n}$ is generally superior in finite samples. For
this reason, we focus on the weighted estimator $\widehat{\theta }_{\tau n}$
in our later discussions.

The asymptotic result for the $\tau$th
conditional quantile estimator of $y_{n+1}$ is given in the next corollary.

\begin{cor}
\label{cor2} Under the conditions in Theorem \ref{thm1}, it holds that
\begin{equation*}
\widehat{Q}_{\tau}(y_{n+1}|\mathcal{F}_{n})-Q_{\tau}(y_{n+1}|\mathcal{F}%
_{n})=u_{n+1}^{\prime }(\widetilde{\theta }_{n}-\theta
_{0})+z_{n+1}^{\prime}(\widehat{\theta }_{\tau n}-\theta _{\tau
0})+o_{p}(n^{-1/2}),
\end{equation*}
where $u_{n+1}=b_{\tau }\sum_{j=1}^{p}\beta _{0j}{\partial
h_{n+1-j}(\theta_{0})}/{\partial \theta }$.
\end{cor}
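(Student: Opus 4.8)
The plan is to reduce the one-step-ahead prediction error to the two estimation errors by an exact algebraic decomposition, and then to dispose of the cross and higher-order pieces using the $\sqrt{n}$-consistency of $\widetilde{\theta}_n$ and $\widehat{\theta}_{\tau n}$ together with standard GARCH volatility-derivative bounds. Recall that $\widehat{Q}_{\tau}(y_{n+1}|\mathcal{F}_n)=\widehat{\theta}_{\tau n}'\widetilde{z}_{n+1}$ and $Q_{\tau}(y_{n+1}|\mathcal{F}_n)=\theta_{\tau 0}'z_{n+1}$, where $\widetilde{z}_{n+1}$ and $z_{n+1}$ differ only in their last $p$ coordinates, namely $\widetilde{h}_{n+1-j}$ versus $h_{n+1-j}$ for $1\le j\le p$. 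Adding and subtracting $\theta_{\tau 0}'\widetilde{z}_{n+1}$ gives
\[
\widehat{\theta}_{\tau n}'\widetilde{z}_{n+1}-\theta_{\tau 0}'z_{n+1}
=(\widehat{\theta}_{\tau n}-\theta_{\tau 0})'\widetilde{z}_{n+1}
+\theta_{\tau 0}'(\widetilde{z}_{n+1}-z_{n+1}),
\]
and I would handle the two summands separately. Since $\theta_{\tau 0}=b_{\tau}(\alpha_{00},\dots,\alpha_{0q},\beta_{01},\dots,\beta_{0p})'$ and $\widetilde{z}_{n+1}-z_{n+1}$ is supported on the $h$-block, the second summand equals $b_{\tau}\sum_{j=1}^p\beta_{0j}(\widetilde{h}_{n+1-j}-h_{n+1-j})$.

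The crux is a first-order expansion of $\widetilde{h}_{n+1-j}-h_{n+1-j}=\widetilde{h}_{n+1-j}(\widetilde{\theta}_n)-h_{n+1-j}(\theta_0)$, which I would split as
\[
\big[\widetilde{h}_{n+1-j}(\widetilde{\theta}_n)-h_{n+1-j}(\widetilde{\theta}_n)\big]
+\big[h_{n+1-j}(\widetilde{\theta}_n)-h_{n+1-j}(\theta_0)\big].
\]
The first bracket is the effect of the initial values; by the geometric decay of the influence of initial conditions in the recursion \eqref{aeq1}, uniformly over $\Theta$, it is of order $\rho^{\,n+1-j}$ for some $\rho\in(0,1)$ and hence $o_p(n^{-1/2})$ because $n+1-j\to\infty$. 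The second bracket I would expand by the mean value theorem about $\theta_0$, producing the linear term $\partial h_{n+1-j}(\theta_0)/\partial\theta'\,(\widetilde{\theta}_n-\theta_0)$ plus a quadratic remainder of the form $(\widetilde{\theta}_n-\theta_0)'\,[\partial^2 h_{n+1-j}(\theta^*)/\partial\theta\partial\theta']\,(\widetilde{\theta}_n-\theta_0)$. Invoking the $\sqrt{n}$-consistency of the QMLE from \cite{Francq_Zakoian2004}, so that $\widetilde{\theta}_n-\theta_0=O_p(n^{-1/2})$, and the fact that the volatility Hessian $\partial^2 h_t/\partial\theta\partial\theta'$ forms a stationary sequence that is bounded in probability under Assumption \ref{assum1}, the remainder at the single index $n+1-j$ is $O_p(1)\cdot O_p(n^{-1})=o_p(n^{-1/2})$. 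Collecting the linear parts and summing over $j$ yields $\theta_{\tau 0}'(\widetilde{z}_{n+1}-z_{n+1})=u_{n+1}'(\widetilde{\theta}_n-\theta_0)+o_p(n^{-1/2})$ with $u_{n+1}=b_{\tau}\sum_{j=1}^p\beta_{0j}\,\partial h_{n+1-j}(\theta_0)/\partial\theta$.

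For the first summand I would replace $\widetilde{z}_{n+1}$ by $z_{n+1}$ at a $o_p(n^{-1/2})$ cost: the difference is $(\widehat{\theta}_{\tau n}-\theta_{\tau 0})'(\widetilde{z}_{n+1}-z_{n+1})$, a product of two $O_p(n^{-1/2})$ factors (the first by Theorem \ref{thm1}, the second by the expansion just obtained), hence $O_p(n^{-1})$. This gives $(\widehat{\theta}_{\tau n}-\theta_{\tau 0})'\widetilde{z}_{n+1}=z_{n+1}'(\widehat{\theta}_{\tau n}-\theta_{\tau 0})+o_p(n^{-1/2})$, and adding the two contributions produces the claimed identity. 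I expect the main obstacle to be the technical bookkeeping of the second paragraph: verifying that the single-index Hessian $\partial^2 h_{n+1-j}(\theta^*)/\partial\theta\partial\theta'$ is tight uniformly over a shrinking neighborhood of $\theta_0$, and that the initial-value discrepancy decays geometrically. Both rest on the moment and decay bounds for the GARCH volatility derivatives already available in the QMLE theory of \cite{Francq_Zakoian2004}, so the argument becomes routine once those bounds are invoked.
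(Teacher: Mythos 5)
Your proposal is correct and follows essentially the same route as the paper: the paper's own proof of Corollary \ref{cor2} is a one-line appeal to the $\sqrt{n}$-consistency of $\widetilde{\theta}_n$ and $\widehat{\theta}_{\tau n}$, Lemma \ref{lem3} (geometric decay of the initial-value discrepancy $\widetilde{h}_t(\theta)-h_t(\theta)$ and of its gradient), and a Taylor expansion of $h_{n+1-j}(\theta)$ about $\theta_0$ --- precisely the three ingredients you assemble, with the quadratic remainder and the cross term $(\widehat{\theta}_{\tau n}-\theta_{\tau 0})'(\widetilde{z}_{n+1}-z_{n+1})$ disposed of as $O_p(n^{-1})$. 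Your extra care with the tightness of the single-index Hessian over the shrinking neighborhood (via stationarity and the Lemma \ref{lem2}-type bounds) is exactly the bookkeeping the paper leaves implicit, so there is no gap.
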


When $b_{\tau }\neq 0$, we further have the result for the $\tau $%
th conditional quantile estimator of $x_{n+1}$ as follows,
\begin{equation}
\widehat{Q}_{\tau }(x_{n+1}|\mathcal{F}_{n})-Q_{\tau }(x_{n+1}|\mathcal{F}%
_{n})=\frac{u_{n+1}^{\prime }(\widetilde{\theta }_{n}-\theta
_{0})+z_{n+1}^{\prime }(\widehat{\theta }_{\tau n}-\theta _{\tau 0})}{2\sqrt{%
|b_{\tau }h_{n+1}|}}+o_{p}(n^{-1/2}).  \label{quantx}
\end{equation}

\section{A Mixed Bootstrapping Procedure}

The asymptotic results in Section 2 facilitate statistical
inference based on the conditional quantile estimation. However, the limiting covariance matrix $\Sigma _{1}$ in Theorem \ref{thm1} depends on the sparsity
function $1/f(b_{\tau })$, whose estimation  is
complicated and sensitive to additional tuning parameters. In this section, we
propose a mixed bootstrapping procedure for approximating the asymptotic
distribution of $\widehat{\theta }_{\tau n}$, and
further construct confidence intervals for the conditional quantiles.

We first consider the random-weighting bootstrap method. Notice that the
QMLE $\widetilde{\theta }_{n}$ contributes to the asymptotic distribution of
the quantile regression estimator $\widehat{\theta }_{\tau n}$
in the way that
\begin{equation*}
\sqrt{n}(\widehat{\theta }_{\tau n}-\theta _{\tau 0})=\frac{\Omega _{2}^{-1}%
}{f(b_{\tau })}T_{1n}-b_{\tau }\Omega _{2}^{-1}\Gamma _{2}\sqrt{n}(%
\widetilde{\theta }_{n}-\theta _{0})+o_{p}(1),
\end{equation*}%
where $T_{1n}=n^{-1/2}\sum_{t=1}^{n}\psi _{\tau }(\varepsilon _{t}-b_{\tau })%
{z_{t}}/{h_{t}}$, as implied by the proof of Theorem \ref{thm1}. This suggests that the random-weighting bootstrap
needs to be employed for both $\widetilde{\theta }_{n}$ and $\widehat{\theta
}_{\tau n}$, and hence leads to the following bootstrapping procedure:
\begin{itemize}
\item \textit{Step B1.} In parallel with Step E1, perform the randomly
weighted QMLE,
\begin{equation}  \label{wQMLE}
\widetilde{\theta}_n^*=\argmin_{\theta\in\Theta}\sum_{t=1}^{n}\omega_t%
\widetilde{\ell}_t(\theta),
\end{equation}
where $\{\omega_t\}$ are $i.i.d.$ non-negative random weights with mean and
variance both equal to one, and then compute the initial estimates of $\{h_t\}$
as $\widetilde{h}_t^*=\widetilde{h}_t(\widetilde{\theta}_n^*)$.

\item \textit{Step B2.} Resembling Step E2, perform the randomly weighted
quantile regression,
\begin{equation}  \label{wcriterion}
\widehat{\theta}_{\tau n}^* =\argmin_{\theta_{\tau}}\sum_{t=1}^n\frac{%
\omega_t}{\widetilde{h}_t}\rho_{\tau}(y_t-\theta_{\tau}^{\prime}\widetilde{z}%
_t^*),
\end{equation}
where $\widetilde{z}_t^*=(1,x_{t-1}^2,\dots,x_{t-q}^2, \widetilde{h}%
_{t-1}^*,\dots,\widetilde{h}_{t-p}^*)^{\prime}$.

\item \textit{Step B3.} Analogous to Step E3, calculate the $\tau$th
conditional quantile estimate $\widehat{Q}_{\tau }^*(x_{t}|\mathcal{F}%
_{t-1})=T^{-1}(\widehat{\theta }_{\tau n}^{*\prime }\widetilde{z}_{t}^*)$.
\end{itemize}

\smallskip As a result,  the distribution of $(\widehat{%
\theta }_{\tau n}-\theta _{\tau 0})$ can be approximated by that of $(\widehat{\theta }_{\tau
n}^{\ast }-\widehat{\theta }_{\tau n})$. However, the numerical optimization \eqref{wQMLE} is in fact unnecessary, and can be time-consuming given the large
number of bootstrap replications. Instead of adopting the above procedure,
we next consider a mixed bootstrap method.

The randomly weighted QMLE $\widetilde{\theta}_n^*$ in \eqref{wQMLE}  is calculated for the purpose of approximating the
asymptotic distribution of $\widetilde{\theta}_n$. This is because it can be verified
that
\begin{equation*}
\sqrt{n}(\widetilde{\theta}_n^*-\widetilde{\theta}_n)=-\frac{J^{-1}}{\sqrt{n}%
}\sum_{t=1}^{n}(\omega_t-1)\left (1-\frac{|y_t|}{h_t}\right )\frac{1}{h_t}%
\frac{\partial h_t(\theta_0)}{\partial\theta}+o_p^*(1),
\end{equation*}
which is comparable to the result from \cite{Francq_Zakoian2004} that
\begin{equation*}
\sqrt{n}(\widetilde{\theta}_{n}-\theta_{0})=-\frac{J^{-1}}{\sqrt{n}}%
\sum_{t=1}^{n}\left (1-\frac{|y_t|}{h_t}\right )\frac{1}{h_t}\frac{\partial
h_t(\theta_0)}{\partial\theta}+o_p(1).
\end{equation*}
Note that the density $f(\cdot)$ is not involved in the above representations.
On the other hand, the matrix $J=E\{h_t^{-2}[\partial
h_t(\theta_0)/\partial\theta][\partial
h_t(\theta_0)/\partial\theta^\prime]\} $ can be estimated consistently by $%
\widetilde{J}=n^{-1}\sum_{t=1}^{n}\widetilde{h}_t^{-2}[\partial \widetilde{h}%
_t(\widetilde{\theta}_n)/\partial\theta] [\partial\widetilde{h}_t(\widetilde{%
\theta}_n)/\partial\theta^\prime]$. These indicate that the minimization \eqref{wQMLE} in
Step B1 can be simply replaced by a sample averaging:

\begin{itemize}
\item \textit{Step B1$^\prime$.} Calculate the estimator $\widetilde{\theta}%
_n^*$ by
\begin{equation}  \label{boots}
\widetilde{\theta}_n^*=\widetilde{\theta}_n-\frac{\widetilde{J}^{-1}}{n}%
\sum_{t=1}^{n}(\omega_t-1)\left(1-\frac{|y_t|}{\widetilde{h}_t}\right)\frac{1%
}{\widetilde{h}_t}\frac{\partial\widetilde{h}_t(\widetilde{\theta}_n)}{%
\partial\theta}.
\end{equation}
\end{itemize}

Combining Steps B1$^{\prime }$, B2
and B3, we propose a mixed bootstrapping procedure.  Its theoretical
justification is provided as follows.

\begin{assum}
\label{assum3} The random weights $\{\omega_t\}$ are $i.i.d.$ non-negative
random variables with mean and variance both equal to one, satisfying $%
E|\omega_t|^{2+\kappa_0}<\infty$ for some $\kappa_0>0$.
\end{assum}

\begin{thm}
\label{thm3} Suppose that $E|\eta_t|^{4+2\nu_0}<\infty$ for some $\nu_0>0$
and Assumptions \ref{assum1}-\ref{assum3} hold. Then, conditional on $%
\mathcal{F}_{n}$, $\sqrt{n}(\widehat{\theta}_{\tau n}^{*}-\widehat{\theta}%
_{\tau n}) \rightarrow_d N(0,\Sigma_1)$ in probability as $%
n\rightarrow\infty $, where $\Sigma_1$ is defined as in Theorem \ref{thm1}.
\end{thm}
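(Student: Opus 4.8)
The plan is to transfer the Bahadur representation underlying Theorem~\ref{thm1} into the bootstrap world. As the text records, the proof of Theorem~\ref{thm1} gives
\[
\sqrt{n}(\widehat{\theta}_{\tau n}-\theta_{\tau 0})=\frac{\Omega_2^{-1}}{f(b_\tau)}T_{1n}-b_\tau\Omega_2^{-1}\Gamma_2\sqrt{n}(\widetilde{\theta}_n-\theta_0)+o_p(1),
\]
with $T_{1n}=n^{-1/2}\sum_{t=1}^n\psi_\tau(\varepsilon_t-b_\tau)z_t/h_t$. First I would establish the parallel expansion for the randomly weighted estimator,
\[
\sqrt{n}(\widehat{\theta}_{\tau n}^*-\theta_{\tau 0})=\frac{\Omega_2^{-1}}{f(b_\tau)}T_{1n}^*-b_\tau\Omega_2^{-1}\Gamma_2\sqrt{n}(\widetilde{\theta}_n^*-\theta_0)+o_p^*(1),
\]
where $T_{1n}^*=n^{-1/2}\sum_{t=1}^n\omega_t\psi_\tau(\varepsilon_t-b_\tau)z_t/h_t$. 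Subtracting then yields
\[
\sqrt{n}(\widehat{\theta}_{\tau n}^*-\widehat{\theta}_{\tau n})=\frac{\Omega_2^{-1}}{f(b_\tau)}(T_{1n}^*-T_{1n})-b_\tau\Omega_2^{-1}\Gamma_2\sqrt{n}(\widetilde{\theta}_n^*-\widetilde{\theta}_n)+o_p^*(1),
\]
in which both driving terms are linear in the centered weights $\omega_t-1$.

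To derive the starred expansion I would rerun the argument for Theorem~\ref{thm1} on the weighted, perturbed criterion~\eqref{wcriterion}. Since $\rho_\tau$ is convex, that criterion is convex in $\theta_\tau$, so the same convexity device used in the proof of Theorem~\ref{thm1} reduces matters to the finite-dimensional quadratic approximation of the objective around $\theta_{\tau 0}$. Applying Knight's identity, the subgradient term at $\theta_{\tau 0}$ becomes the weighted score $T_{1n}^*$, while the curvature term reproduces $f(b_\tau)\Omega_2$: because $\omega_t$ has mean one and is independent of the data, a bootstrap law of large numbers returns the same Hessian as in the unweighted case. The dependence of the fitted regressors $\widetilde{h}_{t-j}^*=\widetilde{h}_{t-j}(\widetilde{\theta}_n^*)$ on the bootstrapped QMLE, expanded to first order in $\widetilde{\theta}_n^*-\theta_0$, contributes exactly the $-b_\tau\Omega_2^{-1}\Gamma_2\sqrt{n}(\widetilde{\theta}_n^*-\theta_0)$ drift, with $\Gamma_2$ absorbing $\sum_{j}\beta_{0j}\partial h_{t-j}/\partial\theta$. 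Substituting the Step~B1$^{\prime}$ formula~\eqref{boots}, and using $\widetilde{J}\rightarrow_p J$, $\widetilde{h}_t\rightarrow h_t$ and $|y_t|/h_t=\eta_t^2$, gives $\sqrt{n}(\widetilde{\theta}_n^*-\widetilde{\theta}_n)=-J^{-1}n^{-1/2}\sum_{t=1}^n(\omega_t-1)(1-\eta_t^2)h_t^{-1}\partial h_t(\theta_0)/\partial\theta+o_p^*(1)$.

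Combining the pieces, I would write $\sqrt{n}(\widehat{\theta}_{\tau n}^*-\widehat{\theta}_{\tau n})=n^{-1/2}\sum_{t=1}^n(\omega_t-1)\xi_t+o_p^*(1)$ with
\[
\xi_t=\frac{\Omega_2^{-1}}{f(b_\tau)}\psi_\tau(\varepsilon_t-b_\tau)\frac{z_t}{h_t}+b_\tau\Omega_2^{-1}\Gamma_2 J^{-1}(1-\eta_t^2)\frac{1}{h_t}\frac{\partial h_t(\theta_0)}{\partial\theta}.
\]
Conditional on $\mathcal{F}_n$ the $\xi_t$ are fixed and the summands are independent across $t$ with mean zero. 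A conditional Lindeberg--Feller central limit theorem, invoked through the Cram\'{e}r--Wold device, then delivers the stated normality; the Lindeberg condition follows from Assumption~\ref{assum3} together with the moment bound $E|\eta_t|^{4+2\nu_0}<\infty$, which furnishes a $(2+\delta)$-moment of the summand. The conditional covariance equals $\var(\omega_t)\,n^{-1}\sum_{t=1}^n\xi_t\xi_t'=n^{-1}\sum_{t=1}^n\xi_t\xi_t'$, converging in probability to $E[\xi_t\xi_t']$ by ergodicity; using the independence of $\eta_t$ from $\mathcal{F}_{t-1}$ with $E[\psi_\tau^2(\varepsilon_t-b_\tau)]=\tau-\tau^2$, $E[(1-\eta_t^2)^2]=\kappa_2$, and $E[\psi_\tau(\varepsilon_t-b_\tau)(1-\eta_t^2)]=\kappa_1$, a direct calculation shows $E[\xi_t\xi_t']=\Sigma_1$.

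The main obstacle is the first step: establishing the starred Bahadur representation uniformly and in the bootstrap probability space. The non-smoothness of $\rho_\tau$ precludes a direct Taylor expansion, and the regressors $\widetilde{z}_t^*$ are random through the bootstrapped QMLE, so the weighted score is not a clean sum of conditionally independent terms. The delicate part is to show that replacing $\widetilde{z}_t^*$ by $z_t$ contributes only the $\Gamma_2$ drift plus $o_p^*(1)$, uniformly over a shrinking neighborhood of $\theta_{\tau 0}$; this is where the convexity argument must be paired with a stochastic equicontinuity bound for the weighted empirical process. A secondary point is bookkeeping: every stochastic-order and law-of-large-numbers statement has to hold conditionally on the data in probability rather than almost surely, so the ergodic and weighted limit theorems must be applied in their conditional bootstrap forms.
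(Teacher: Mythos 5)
Your proposal follows essentially the same route as the paper's proof: a bootstrap-world Bahadur representation obtained by rerunning the convexity/Knight-identity argument of Theorem \ref{thm1} on the weighted criterion \eqref{wcriterion} (the paper does this via redefined two-argument $A_{int}$ and $B_{int}$ decompositions), the Step B1$^{\prime}$ linearization $\sqrt{n}(\widetilde{\theta}_n^{*}-\widetilde{\theta}_n)=-J^{-1}n^{-1/2}\sum_{t=1}^{n}(\omega_t-1)(1-|y_t|/h_t)h_t^{-1}\partial h_t(\theta_0)/\partial\theta+o_p^{*}(1)$ using $\widetilde{J}=J+o_p(1)$, and a conditional Lindeberg central limit theorem with the Cram\'{e}r--Wold device applied to the $(\omega_t-1)$-weighted influence terms, with the Liapounov condition supplied by Assumption \ref{assum3} and $E|\varepsilon_t|^{2+\nu_0}<\infty$. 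Your combined influence function $\xi_t$ and the resulting covariance calculation (using $\kappa_1$, $\kappa_2$, $H_2$, $\Gamma_2$, $J$) agree with the paper's stacked-vector formulation $d_t$, so the argument is correct and matches the paper's proof in all essentials.
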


\begin{cor}
\label{cor3} Under the conditions of Theorem \ref{thm3}, it holds that
\begin{equation*}
\widehat{Q}_{\tau }^{\ast}(y_{n+1}|\mathcal{F}_{n})-\widehat{Q}%
_{\tau}(y_{n+1}|\mathcal{F}_{n})=u_{n+1}^{\prime }(\widetilde{\theta }%
_{n}^{\ast }-\widetilde{\theta }_{n})+z_{n+1}^{\prime }(\widehat{\theta }%
_{\tau n}^{\ast}-\widehat{\theta }_{\tau n})+o_{p}^{\ast }(n^{-1/2}),
\end{equation*}
where $u_{n+1}$ is defined as in Corollary \ref{cor2}.
\end{cor}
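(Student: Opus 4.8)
The plan is to replicate the argument behind Corollary~\ref{cor2}, but to carry each step through in the bootstrap probability space conditional on $\mathcal{F}_n$. Since $\widehat{Q}_{\tau}^{\ast}(y_{n+1}|\mathcal{F}_n)=\widehat{\theta}_{\tau n}^{\ast\prime}\widetilde{z}_{n+1}^{\ast}$ and $\widehat{Q}_{\tau}(y_{n+1}|\mathcal{F}_n)=\widehat{\theta}_{\tau n}^{\prime}\widetilde{z}_{n+1}$, where $\widetilde{z}_{n+1}^{\ast}$ and $\widetilde{z}_{n+1}$ are $\mathcal{F}_n$-measurable and differ only in their last $p$ coordinates (namely $\widetilde{h}_{n+1-j}^{\ast}$ versus $\widetilde{h}_{n+1-j}$, $j=1,\dots,p$), I would begin by adding and subtracting the cross term $\widehat{\theta}_{\tau n}^{\prime}\widetilde{z}_{n+1}^{\ast}$ to obtain
\begin{equation*}
\widehat{Q}_{\tau}^{\ast}(y_{n+1}|\mathcal{F}_n)-\widehat{Q}_{\tau}(y_{n+1}|\mathcal{F}_n)=(\widehat{\theta}_{\tau n}^{\ast}-\widehat{\theta}_{\tau n})^{\prime}\widetilde{z}_{n+1}^{\ast}+\widehat{\theta}_{\tau n}^{\prime}(\widetilde{z}_{n+1}^{\ast}-\widetilde{z}_{n+1}).
\end{equation*}
The two summands will deliver the quantile-regression contribution $z_{n+1}^{\prime}(\widehat{\theta}_{\tau n}^{\ast}-\widehat{\theta}_{\tau n})$ and the QMLE contribution $u_{n+1}^{\prime}(\widetilde{\theta}_n^{\ast}-\widetilde{\theta}_n)$, respectively.

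For the first summand, I would invoke the bootstrap consistency $\widetilde{\theta}_n^{\ast}\rightarrow\theta_0$ (in probability in the bootstrap space, as implied by Step~B1$^{\prime}$ and Theorem~\ref{thm3}) together with the vanishing influence of the initial values to show $\widetilde{z}_{n+1}^{\ast}-z_{n+1}=o_p^{\ast}(1)$, i.e.\ each $\widetilde{h}_{n+1-j}^{\ast}\rightarrow h_{n+1-j}$. Combining this with the rate $\widehat{\theta}_{\tau n}^{\ast}-\widehat{\theta}_{\tau n}=O_p^{\ast}(n^{-1/2})$ supplied by Theorem~\ref{thm3} gives $(\widehat{\theta}_{\tau n}^{\ast}-\widehat{\theta}_{\tau n})^{\prime}\widetilde{z}_{n+1}^{\ast}=z_{n+1}^{\prime}(\widehat{\theta}_{\tau n}^{\ast}-\widehat{\theta}_{\tau n})+o_p^{\ast}(n^{-1/2})$.

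For the second summand, only the $h$-coordinates contribute, so it equals $\sum_{j=1}^{p}\widehat{\beta}_{\tau n,j}(\widetilde{h}_{n+1-j}^{\ast}-\widetilde{h}_{n+1-j})$, where $\widehat{\beta}_{\tau n,j}$ is the coefficient of $\widetilde{h}_{n+1-j}$ in $\widehat{\theta}_{\tau n}$ and converges to $b_{\tau}\beta_{0j}$. I would then Taylor-expand $\widetilde{h}_{n+1-j}(\cdot)$ about $\widetilde{\theta}_n$, giving $\widetilde{h}_{n+1-j}^{\ast}-\widetilde{h}_{n+1-j}=[\partial\widetilde{h}_{n+1-j}(\bar{\theta})/\partial\theta^{\prime}](\widetilde{\theta}_n^{\ast}-\widetilde{\theta}_n)$ for an intermediate point $\bar{\theta}$, replace the derivative by its limit $\partial h_{n+1-j}(\theta_0)/\partial\theta$, and apply $\widetilde{\theta}_n^{\ast}-\widetilde{\theta}_n=O_p^{\ast}(n^{-1/2})$. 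This identifies the second summand as $u_{n+1}^{\prime}(\widetilde{\theta}_n^{\ast}-\widetilde{\theta}_n)+o_p^{\ast}(n^{-1/2})$ with $u_{n+1}=b_{\tau}\sum_{j=1}^{p}\beta_{0j}\partial h_{n+1-j}(\theta_0)/\partial\theta$ as in Corollary~\ref{cor2}. Adding the two pieces yields the stated expansion.

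The main obstacle will be the rigorous control of the replacement errors in the bootstrap space: showing that $\widetilde{z}_{n+1}^{\ast}-z_{n+1}=o_p^{\ast}(1)$ and that the intermediate-point derivatives $\partial\widetilde{h}_{n+1-j}(\bar{\theta})/\partial\theta$ converge to $\partial h_{n+1-j}(\theta_0)/\partial\theta$ strongly enough that their products with the $O_p^{\ast}(n^{-1/2})$ estimation errors are genuinely $o_p^{\ast}(n^{-1/2})$. This relies on the uniform-in-$\theta$ moment and Lipschitz bounds on $\widetilde{h}_t(\theta)$ and its derivatives near $\theta_0$ already established for Theorems~\ref{thm1} and~\ref{thm3}, now applied conditionally on $\mathcal{F}_n$, together with the geometric decay of the initial-value discrepancy between $\widetilde{h}_t$ and $h_t$.
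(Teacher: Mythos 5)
Your proposal is correct and follows essentially the same route as the paper, whose proof of Corollary~\ref{cor3} is a one-line reference to the methods of Theorem~\ref{thm3} and Corollary~\ref{cor2} --- namely the decomposition of $\widehat{\theta}_{\tau n}^{\ast\prime}\widetilde{z}_{n+1}^{\ast}-\widehat{\theta}_{\tau n}^{\prime}\widetilde{z}_{n+1}$ into a quantile-regression and a QMLE contribution, a Taylor expansion of $\widetilde{h}_{n+1-j}(\cdot)$, the geometric initial-value bound of Lemma~\ref{lem3}, and the bootstrap root-$n$ rates $\widetilde{\theta}_{n}^{\ast}-\widetilde{\theta}_{n}=O_{p}^{\ast}(n^{-1/2})$ and $\widehat{\theta}_{\tau n}^{\ast}-\widehat{\theta}_{\tau n}=O_{p}^{\ast}(n^{-1/2})$ established in the proof of Theorem~\ref{thm3}. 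Your write-up simply makes explicit the steps the paper leaves implicit, and the replacement-error controls you flag are exactly those supplied by Lemmas~\ref{lem2} and~\ref{lem3}.
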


This, together with Corollary \ref{cor2} and the asymptotic results for $%
\widetilde{\theta }_{n}^{\ast }$ and $\widehat{\theta }_{\tau n}^{\ast}$ in
the proof of Theorem \ref{thm3}, indicates that confidence
intervals for the conditional quantile $Q_{\tau }(y_{n+1}|\mathcal{F}_{n})$ can be constructed using the bootstrap
sample $\{\widehat{Q}_{\tau }^{\ast }(y_{n+1}|\mathcal{F}_{n})\}$.
As a consequence, applying the monotonicity of $T^{-1}(\cdot )$, the corresponding
confidence intervals for $Q_{\tau}(x_{n+1}|\mathcal{F}_{n})=T^{-1}[Q_{\tau
}(y_{n+1}|\mathcal{F}_{n})]$ can be constructed based on the empirical
quantiles of $\widehat{Q}_{\tau }^{\ast }(x_{n+1}|\mathcal{F}%
_{n})=T^{-1}[\widehat{Q}_{\tau }^{\ast }(y_{n+1}|\mathcal{F}_{n})]$, irrespective of
the value of $b_{\tau}$.

We summarize the proposed bootstrapping procedure as follows.
\begin{itemize}
\item \textit{Step 1.} Generate \emph{i.i.d.} random weights $\{\omega_t\}$
from a non-negative distribution with mean and variance both equal to one.

\item \textit{Step 2.} Calculate $\widetilde{\theta}_{n}^*$ by \eqref{boots}, and subsequently perform the randomly weighted linear quantile regression in %
\eqref{wcriterion} to obtain $\widehat{\theta}_{\tau n}^*$.

\item \textit{Step 3.} Calculate $E^{(1)}=\sqrt{n}(\widehat{\theta}_{\tau
n}^{*}-\widehat{\theta}_{\tau n})$ and $Q^{(1)}=T^{-1}(\widehat{\theta}%
_{\tau n}^{*\prime}\widetilde{z}_{n+1}^{*})$. Then, repeat Steps 1-2 for $B-1$
times to obtain $\{E^{(1)}, \dots, E^{(B)}\}$ and $\{Q^{(1)}, \dots,
Q^{(B)}\}$.  The empirical distribution of $\{E^{(i)}\}_{i=1}^B$ can be
used to approximate the asymptotic distribution of $\sqrt{n}( \widehat{\theta%
}_{\tau n}-\theta_{\tau 0})$, and the empirical quantiles of $%
\{Q^{(i)}\}_{i=1}^B$ can be used to construct confidence intervals for $%
Q_{\tau}(x_{n+1}|\mathcal{F}_{n})$.
\end{itemize}

\section{Diagnostic Checking for Conditional Quantiles}

To further illustrate the potential applicability of the results in previous sections, we consider diagnostic checking for the fitted conditional
quantiles. To construct this test, we first introduce the
following weighted residuals:
\begin{equation}
\varepsilon _{t,\tau }=\frac{y_{t}-Q_{\tau }(y_{t}|\mathcal{F}_{t-1})}{h_{t}}%
=\varepsilon _{t}-b_{\tau },\quad t\in \mathbb{Z},  \label{residuals}
\end{equation}%
where $y_{t}=T(x_{t})$ and $\varepsilon _{t}=T(\eta _{t})$. If the conditional quantile $Q_{\tau }(x_{t}|\mathcal{F}_{t-1})$, and hence $Q_{\tau }(y_{t}|\mathcal{F}_{t-1})$, is correctly specified by (%
\ref{gquantile}) at quantile level $\tau$, then it follows that E$\left[ \psi
_{\tau }(\varepsilon _{t,\tau })|\mathcal{F}_{t-1}\right] =0$. Motivated by
the quantile autocorrelation function (QACF) in \cite{Li_Li_Tsai2015} and
the absolute residual ACF in \cite{Li_Li2005}, we define the QACF of $%
\{\varepsilon _{t,\tau }\}$ at lag $k$ as
\begin{equation*}
\rho _{k,\tau }=\text{qcor}_{\tau }\big\{\varepsilon _{t,\tau },|\varepsilon
_{t-k,\tau }|\big\}=\frac{E\big\{\psi _{\tau }(\varepsilon _{t,\tau
})|\varepsilon _{t-k,\tau }|\big\}}{\sqrt{(\tau -\tau ^{2})\sigma _{a,\tau
}^{2}}},\quad k=1,2,\dots ,
\end{equation*}%
where $\sigma _{a,\tau }^{2}=\var(|\varepsilon _{t,\tau }|)=E(|\varepsilon
_{t,\tau }|-\mu _{a,\tau })^{2}$, with $\mu _{a,\tau }=E|\varepsilon
_{t,\tau }|$. Thus, under the null hypothesis that $Q_{\tau }(y_{t}|\mathcal{F}_{t-1})$ is correctly
specified, it holds that $\rho _{k,\tau }=0$ for all $k$. We shall base our test on
this residual QACF.

For a given $\tau \in (0,1)$, let $\widehat{\theta }_{\tau n}$ be the quantile
regression estimate obtained in \eqref{criterion}, and $\widetilde{h}%
_{t}$ and $\widetilde{z}_{t}$ be the associated volatility and regressors used in the
quantile regression. We construct the following sample counterpart of the
weighted residuals in \eqref{residuals}:
\begin{equation*}
\widehat{\varepsilon }_{t,\tau }=\frac{y_{t}-\widehat{\theta }_{\tau
n}^{\prime }\widetilde{z}_{t}}{\widetilde{h}_{t}},\quad t=1,\dots ,n,
\end{equation*}%
Then, the corresponding residual QACF at lag $k$ is
\begin{equation*}
r_{k,\tau }=\frac{1}{\sqrt{(\tau -\tau ^{2})\widehat{\sigma }_{a,\tau }^{2}}}%
\cdot \frac{1}{n}\sum_{t=k+1}^{n}\psi _{\tau }(\widehat{\varepsilon }%
_{t,\tau })|\widehat{\varepsilon }_{t-k,\tau }|,
\end{equation*}%
where $\widehat{\sigma }_{a,\tau }^{2}=n^{-1}\sum_{t=1}^{n}(|\widehat{
\varepsilon }_{t,\tau }|-\widehat{\mu }_{a,\tau })^{2}$, with $\widehat{\mu }%
_{a,\tau }=n^{-1}\sum_{t=1}^{n}|\widehat{\varepsilon }_{t,\tau }|$.

Let $K$ be a predetermined positive integer, and denote $R=(r_{1,\tau
},\dots ,r_{K,\tau })^{\prime }$.
Under the null hypothesis, $R$ will be close to zero (in the sense that it
is a zero-mean random vector). If the null hypothesis is false, $R$ will
deviate from zero. A test statistic can be constructed upon appropriate
standardizations and transformations on $R$.

We first derive the asymptotic distribution of $R=(r_{1,\tau
},\dots,r_{K,\tau })^{\prime }$, which provides guidance for the
construction of the test. Let $\epsilon_{t}=(|\varepsilon_{t,\tau}|,
|\varepsilon_{t-1,\tau}|, \dots,|\varepsilon_{t-K+1,\tau}|)^\prime$ and $%
\Xi=E(\epsilon_{t}\epsilon_{t}^\prime)$. Define the following $K\times
(p+q+1)$ matrices:
\begin{equation*}
D_1=E\left (\frac{\epsilon_{t-1}z_t^\prime}{h_t}\right ),\quad D_2=E\left [%
\frac{\epsilon_{t-1}}{h_t}\sum_{j=1}^{p}\beta_{0j}\frac{\partial
h_{t-j}(\theta_0)}{\partial\theta^\prime}\right ], \quad\text{and}\quad
D_3=E\left [\frac{\epsilon_{t-1}}{h_t}\frac{\partial h_{t}(\theta_0)}{%
\partial\theta^\prime}\right ].
\end{equation*}
For simplicity, denote $P=D_2-D_1\Omega_2^{-1}\Gamma_2$, $%
Q=D_3-D_1\Omega_2^{-1}H_2$, and $\Omega_3=D_1\Omega_2^{-1}D_1^\prime$.

\begin{thm}
\label{thm2} If $E|\eta_t|^{4+2\nu_0}<\infty$ for some $\nu_0>0$ and
Assumptions \ref{assum1} and \ref{assum2} hold, then
\begin{equation*}
\sqrt{n}R\rightarrow_d N(0,\Sigma_3),
\end{equation*}
where
\begin{equation*}
\Sigma_3= \frac{\Xi-\Omega_3}{\sigma_{a,\tau}^2}+ \frac{\kappa_1
b_{\tau}f(b_{\tau})}{(\tau-\tau^2)\sigma_{a,\tau}^2}(QJ^{-1}P^%
\prime+PJ^{-1}Q^\prime)+\frac{\kappa_2 b_{\tau}^2f^2(b_{\tau})}{%
(\tau-\tau^2)\sigma_{a,\tau}^2}PJ^{-1}P^\prime.
\end{equation*}
\end{thm}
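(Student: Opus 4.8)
The plan is to derive an asymptotically linear (Bahadur-type) representation for each $r_{k,\tau}$ in terms of the \emph{infeasible} residual QACF built from the true weighted residuals $\varepsilon_{t,\tau}=\varepsilon_t-b_\tau$, together with correction terms that absorb the two sources of estimation error: the quantile regression estimator $\widehat\theta_{\tau n}$ and the QMLE $\widetilde\theta_n$ (the latter entering through $\widetilde h_t$ and $\widetilde z_t$). Writing $\psi_\tau(x)=\tau-I(x<0)$ and $\epsilon_{t-1}=(|\varepsilon_{t-1,\tau}|,\dots,|\varepsilon_{t-K,\tau}|)'$, I would first decompose the unnormalized $k$th numerator as
\begin{align*}
\frac1n\sum_{t}\psi_\tau(\widehat\varepsilon_{t,\tau})|\widehat\varepsilon_{t-k,\tau}|
&=\frac1n\sum_t\psi_\tau(\varepsilon_{t,\tau})|\varepsilon_{t-k,\tau}|
+\frac1n\sum_t[\psi_\tau(\widehat\varepsilon_{t,\tau})-\psi_\tau(\varepsilon_{t,\tau})]|\varepsilon_{t-k,\tau}|\\
&\quad+\frac1n\sum_t\psi_\tau(\varepsilon_{t,\tau})[|\widehat\varepsilon_{t-k,\tau}|-|\varepsilon_{t-k,\tau}|]+(\text{product remainder}).
\end{align*}
Under the null, $E[\psi_\tau(\varepsilon_{t,\tau})\mid\mathcal F_{t-1}]=0$, so the first sum is a stationary ergodic martingale difference average and is the leading term; the third sum and the product remainder are $o_p(n^{-1/2})$, because $\psi_\tau(\varepsilon_{t,\tau})$ is conditionally centered while the perturbation $|\widehat\varepsilon_{t-k,\tau}|-|\varepsilon_{t-k,\tau}|$ is $O_p(n^{-1/2})$ and, given the global estimators, essentially $\mathcal F_{t-1}$-measurable.

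The crux is the second sum, call it $B_k$. Since $\widehat\varepsilon_{t,\tau}<0\iff y_t<\widehat\theta_{\tau n}'\widetilde z_t$ and $\varepsilon_{t,\tau}<0\iff y_t<\theta_{\tau0}'z_t$ with $y_t=h_t\varepsilon_t$, I would smooth the indicator difference by its $\mathcal F_{t-1}$-conditional expectation and linearize through the conditional density of $y_t$, which equals $f(b_\tau)/h_t$ at the quantile (Assumption \ref{assum2}), obtaining $B_k=-f(b_\tau)n^{-1}\sum_t h_t^{-1}|\varepsilon_{t-k,\tau}|(\widehat\theta_{\tau n}'\widetilde z_t-\theta_{\tau0}'z_t)+o_p(n^{-1/2})$. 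Splitting $\widehat\theta_{\tau n}'\widetilde z_t-\theta_{\tau0}'z_t=(\widehat\theta_{\tau n}-\theta_{\tau0})'z_t+b_\tau\sum_{j=1}^p\beta_{0j}\{\partial h_{t-j}(\theta_0)/\partial\theta'\}(\widetilde\theta_n-\theta_0)+o_p(n^{-1/2})$ (using $\widetilde h_{t-j}-h_{t-j}\approx\{\partial h_{t-j}(\theta_0)/\partial\theta'\}(\widetilde\theta_n-\theta_0)$ and discarding the negligible initial-value effect) and applying the ergodic theorem to the sample averages, the stacked vector $B$ satisfies
\begin{equation*}
\sqrt n\,B=-f(b_\tau)D_1\sqrt n(\widehat\theta_{\tau n}-\theta_{\tau0})-f(b_\tau)b_\tau D_2\sqrt n(\widetilde\theta_n-\theta_0)+o_p(1),
\end{equation*}
where $D_1$ and $D_2$ appear exactly as defined before the theorem.

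Next I would substitute the two Bahadur representations already recorded in the paper: $\sqrt n(\widehat\theta_{\tau n}-\theta_{\tau0})=f(b_\tau)^{-1}\Omega_2^{-1}T_{1n}-b_\tau\Omega_2^{-1}\Gamma_2\sqrt n(\widetilde\theta_n-\theta_0)+o_p(1)$ from the proof of Theorem \ref{thm1}, and $\sqrt n(\widetilde\theta_n-\theta_0)=J^{-1}n^{-1/2}\sum_t(\eta_t^2-1)h_t^{-1}\{\partial h_t(\theta_0)/\partial\theta\}+o_p(1)$ from \cite{Francq_Zakoian2004} (since $|y_t|/h_t=\eta_t^2$). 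Collecting terms and using $P=D_2-D_1\Omega_2^{-1}\Gamma_2$, the entire statistic reduces to a single stationary ergodic martingale difference average $n^{-1/2}\sum_t\xi_t$ with
\begin{equation*}
\xi_t=\psi_\tau(\varepsilon_{t,\tau})\epsilon_{t-1}-D_1\Omega_2^{-1}\psi_\tau(\varepsilon_{t,\tau})\frac{z_t}{h_t}-f(b_\tau)b_\tau PJ^{-1}(\eta_t^2-1)\frac{1}{h_t}\frac{\partial h_t(\theta_0)}{\partial\theta}.
\end{equation*}
A martingale central limit theorem then gives $n^{-1/2}\sum_t\xi_t\rightarrow_d N(0,E[\xi_t\xi_t'])$, and dividing by the normalization $\sqrt{(\tau-\tau^2)}\,\sigma_{a,\tau}$ (with $\widehat\sigma_{a,\tau}^2\rightarrow_p\sigma_{a,\tau}^2$, whose estimation error is asymptotically negligible by Slutsky) yields $\sqrt n R\rightarrow_d N(0,\Sigma_3)$ with $\Sigma_3=E[\xi_t\xi_t']/\{(\tau-\tau^2)\sigma_{a,\tau}^2\}$. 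Evaluating $E[\xi_t\xi_t']$ by conditioning on $\mathcal F_{t-1}$ and using the scalar moments $E\psi_\tau^2=\tau-\tau^2$, $E[\psi_\tau(\varepsilon_{t,\tau})(\eta_t^2-1)]=-\kappa_1$, and $E(\eta_t^2-1)^2=\kappa_2$: the leading block produces $(\tau-\tau^2)(\Xi-\Omega_3)$, the cross terms between the quantile score and the QMLE score assemble $Q=D_3-D_1\Omega_2^{-1}H_2$ (via $D_3$ and $H_2$ carried by the covariance $-\kappa_1$), and the quadratic QMLE term produces $\kappa_2 f^2(b_\tau)b_\tau^2 PJ^{-1}P'$; normalizing reproduces the three blocks of $\Sigma_3$ exactly.

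The hard part will be the rigorous treatment of the non-smooth score $\psi_\tau$ in $B_k$: replacing the indicator difference by its conditional expectation and then by its density-based linearization requires a stochastic-equicontinuity (Bahadur) argument to control the martingale fluctuation of the centered increments uniformly in the $O_p(n^{-1/2})$ neighbourhood of $(\theta_{\tau0},\theta_0)$, together with uniform control of the density remainder (where Assumption \ref{assum2} enters) and of the replacement of $\widetilde h_t$ by $h_t(\widetilde\theta_n)$ and the truncation of initial values, whose effect decays geometrically. The strengthened moment condition $E|\eta_t|^{4+2\nu_0}<\infty$ is what guarantees finiteness of $\Xi$ and of the cross-moment matrices, and the Lindeberg condition for the martingale CLT.
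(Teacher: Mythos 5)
Your proposal follows essentially the same route as the paper's proof of Theorem \ref{thm2}: the identical three-term decomposition of the residual QACF numerator, the density-based linearization of the score-difference term into $-f(b_{\tau})\left[D_1\sqrt{n}(\widehat{\theta}_{\tau n}-\theta_{\tau 0})+b_{\tau}D_2\sqrt{n}(\widetilde{\theta}_{n}-\theta_{0})\right]$, negligibility of the other two terms, substitution of the Bahadur representations for $\widehat{\theta}_{\tau n}$ and the QMLE, and a martingale CLT whose covariance algebra (including $E[\psi_{\tau}(\varepsilon_{t,\tau})(\eta_t^2-1)]=-\kappa_1$ and $E(\eta_t^2-1)^2=\kappa_2$) reproduces $\Sigma_3$ exactly. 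The steps you defer as ``the hard part''---the uniform finite-covering/stochastic-equicontinuity control of the non-smooth terms and the Knight-identity treatment of the lagged-residual perturbation---are precisely what the paper carries out in detail, so your blueprint is correct and matches the paper's argument.
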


Suppose that $\widehat{\Sigma }_{3} $ is a consistent estimator of $\Sigma
_{3}$. Then Theorem \ref{thm2} implies that the following test statistic,
\begin{equation}  \label{test}
Q(K)=nR^{\prime}\widehat{\Sigma}_3^{-1}R,
\end{equation}
converges to the $\chi ^{2}$ distribution with $K$ degrees of freedom as $%
n\rightarrow \infty$. Nevertheless, in practice, it is difficult to estimate the
asymptotic covariance matrix $\Sigma_{3}$ as it involves $f(b_{\tau })$.
We propose a bootstrap method in a similar way to the previous section.

Let $\widehat{\varepsilon}_{t,\tau}^*=(y_t-\widehat{\theta}_{\tau
n}^{*\prime}\widetilde{z}_t^*)/\widetilde{h}_t$. To approximate the
asymptotic distribution of $R$ in Theorem \ref{thm2}, we calculate the
randomly weighted residual QACF at lag $k$ by
\begin{equation}  \label{r*}
r_{k,\tau}^*=\frac{1}{\sqrt{(\tau-\tau^2)\widehat{\sigma}_{a,\tau}^{2}}}\cdot%
\frac{1}{n}\sum_{t=k+1}^{n}\omega_t\psi_{\tau}(\widehat{\varepsilon}%
_{t,\tau}^*)|\widehat{\varepsilon}_{t-k,\tau}^*|.
\end{equation}
Let $R^*=(r_{1, \tau}^*, \dots, r_{K, \tau}^*)^\prime$. The bootstrapping
test follows from the next theorem.

\begin{thm}
\label{thm4} Suppose that the conditions in Theorem \ref{thm3} hold. Then,
conditional on $\mathcal{F}_{n}$, $\sqrt{n}(R^*-R) \rightarrow_d
N(0,\Sigma_3)$ in probability as $n\rightarrow\infty$, where $\Sigma_3$ is
defined as in Theorem \ref{thm2}.
\end{thm}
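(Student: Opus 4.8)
The plan is to prove Theorem~\ref{thm4} by repeating, in the bootstrap probability space, the argument behind Theorem~\ref{thm2}: I would reduce $\sqrt{n}(R^{\ast}-R)$ to a single weighted sum whose only randomness, conditional on $\mathcal{F}_{n}$, lies in $\{\omega_{t}\}$, and then apply a conditional Lindeberg--Feller central limit theorem. Since the normalizing factor $1/\sqrt{(\tau-\tau^{2})\widehat{\sigma}_{a,\tau}^{2}}$ is identical in $R^{\ast}$ and $R$ and converges in probability to $c=1/\sqrt{(\tau-\tau^{2})\sigma_{a,\tau}^{2}}$, the goal is the representation
\begin{equation*}
\sqrt{n}(R^{\ast}-R)=\frac{c}{\sqrt{n}}\sum_{t}(\omega_{t}-1)\xi_{t}+o_{p}^{\ast}(1),\quad \xi_{t}=\psi_{\tau}(\varepsilon_{t,\tau})\Big(\epsilon_{t-1}-D_{1}\Omega_{2}^{-1}\frac{z_{t}}{h_{t}}\Big)+f(b_{\tau})b_{\tau}PJ^{-1}(1-\eta_{t}^{2})\frac{1}{h_{t}}\frac{\partial h_{t}(\theta_{0})}{\partial\theta},
\end{equation*}
after which the theorem follows once I verify $c^{2}E(\xi_{t}\xi_{t}^{\prime})=\Sigma_{3}$.

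To reach this representation I would split, for each lag $k$,
\begin{equation*}
\omega_{t}\psi_{\tau}(\widehat{\varepsilon}_{t,\tau}^{\ast})|\widehat{\varepsilon}_{t-k,\tau}^{\ast}|-\psi_{\tau}(\widehat{\varepsilon}_{t,\tau})|\widehat{\varepsilon}_{t-k,\tau}|=(\omega_{t}-1)\psi_{\tau}(\widehat{\varepsilon}_{t,\tau})|\widehat{\varepsilon}_{t-k,\tau}|+\omega_{t}\big(\psi_{\tau}(\widehat{\varepsilon}_{t,\tau}^{\ast})|\widehat{\varepsilon}_{t-k,\tau}^{\ast}|-\psi_{\tau}(\widehat{\varepsilon}_{t,\tau})|\widehat{\varepsilon}_{t-k,\tau}|\big).
\end{equation*}
In the first group, replacing $\widehat{\varepsilon}_{t,\tau}$ by $\varepsilon_{t,\tau}$ is $o_{p}^{\ast}(1)$ because the indicator mismatches occur on an $O_{p}(n^{-1/2})$ proportion of $t$, leaving the leading martingale piece $c\,n^{-1/2}\sum_{t}(\omega_{t}-1)\psi_{\tau}(\varepsilon_{t,\tau})\epsilon_{t-1}$. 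For the second group, I note that $\widehat{\varepsilon}_{t,\tau}^{\ast}$ and $\widehat{\varepsilon}_{t,\tau}$ share the denominator $\widetilde{h}_{t}$, so their difference is a pure numerator effect,
\begin{equation*}
\widehat{\varepsilon}_{t,\tau}^{\ast}-\widehat{\varepsilon}_{t,\tau}=-\frac{z_{t}^{\prime}}{h_{t}}(\widehat{\theta}_{\tau n}^{\ast}-\widehat{\theta}_{\tau n})-\frac{b_{\tau}}{h_{t}}\sum_{j=1}^{p}\beta_{0j}\frac{\partial h_{t-j}(\theta_{0})}{\partial\theta^{\prime}}(\widetilde{\theta}_{n}^{\ast}-\widetilde{\theta}_{n})+o_{p}^{\ast}(n^{-1/2}).
\end{equation*}
The $|\cdot|$-difference then carries the martingale-difference factor $\psi_{\tau}(\widehat{\varepsilon}_{t,\tau})$, whose coefficient averages to $E[\psi_{\tau}(\varepsilon_{t,\tau})\sgn(\varepsilon_{t-k,\tau})z_{t-k}^{\prime}/h_{t-k}]=0$, so it is $o_{p}^{\ast}(1)$. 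For the $\psi_{\tau}$-difference, a stochastic-equicontinuity argument replaces the indicator increment by its conditional mean $f(b_{\tau})(\widehat{\varepsilon}_{t,\tau}^{\ast}-\widehat{\varepsilon}_{t,\tau})$, with $\omega_{t}$ set to its mean one (the surplus $\omega_{t}-1$ multiplies an increment nonzero on a vanishing fraction of $t$). Averaging against $|\varepsilon_{t-k,\tau}|$ and stacking over $k=1,\dots,K$ gives $-c\,f(b_{\tau})D_{1}\sqrt{n}(\widehat{\theta}_{\tau n}^{\ast}-\widehat{\theta}_{\tau n})-c\,f(b_{\tau})b_{\tau}D_{2}\sqrt{n}(\widetilde{\theta}_{n}^{\ast}-\widetilde{\theta}_{n})$.

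Next I would substitute the bootstrap expansions from Section~3 and the proof of Theorem~\ref{thm3},
\begin{equation*}
\sqrt{n}(\widehat{\theta}_{\tau n}^{\ast}-\widehat{\theta}_{\tau n})=\frac{\Omega_{2}^{-1}}{f(b_{\tau})}T_{1n}^{\ast}-b_{\tau}\Omega_{2}^{-1}\Gamma_{2}\sqrt{n}(\widetilde{\theta}_{n}^{\ast}-\widetilde{\theta}_{n})+o_{p}^{\ast}(1),\qquad \sqrt{n}(\widetilde{\theta}_{n}^{\ast}-\widetilde{\theta}_{n})=-J^{-1}S_{n}^{\ast}+o_{p}^{\ast}(1),
\end{equation*}
where $T_{1n}^{\ast}=n^{-1/2}\sum_{t}(\omega_{t}-1)\psi_{\tau}(\varepsilon_{t,\tau})z_{t}/h_{t}$ and $S_{n}^{\ast}=n^{-1/2}\sum_{t}(\omega_{t}-1)(1-\eta_{t}^{2})h_{t}^{-1}\partial h_{t}(\theta_{0})/\partial\theta$, using $|y_{t}|/h_{t}=\eta_{t}^{2}$. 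The $T_{1n}^{\ast}$ term merges with the leading martingale piece to form the factor $\epsilon_{t-1}-D_{1}\Omega_{2}^{-1}z_{t}/h_{t}$, while the two contributions proportional to $\sqrt{n}(\widetilde{\theta}_{n}^{\ast}-\widetilde{\theta}_{n})$ collapse into the single coefficient $P=D_{2}-D_{1}\Omega_{2}^{-1}\Gamma_{2}$; this is the bootstrap mirror of the decomposition that produces $\Sigma_{3}$ in Theorem~\ref{thm2}, and it yields the $\xi_{t}$ displayed above.

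Finally, conditional on $\mathcal{F}_{n}$ every factor entering $\xi_{t}$ is fixed, so $\{(\omega_{t}-1)\xi_{t}\}$ is an independent triangular array with mean zero; the Lindeberg--Feller CLT applies once the conditional covariance $c^{2}n^{-1}\sum_{t}\xi_{t}\xi_{t}^{\prime}$ converges, which it does by ergodicity to $c^{2}E(\xi_{t}\xi_{t}^{\prime})$, and once the Lindeberg condition holds, guaranteed by $E|\omega_{t}|^{2+\kappa_{0}}<\infty$ together with $E|\xi_{t}|^{2+\delta}<\infty$ for some $\delta>0$, the latter reducing to $E|\eta_{t}|^{4+2\nu_{0}}<\infty$. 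Using the independence of $\eta_{t}$ from $\mathcal{F}_{t-1}$ and the identities $E[\psi_{\tau}(\varepsilon_{t,\tau})^{2}\mid\mathcal{F}_{t-1}]=\tau-\tau^{2}$, $E[\psi_{\tau}(\varepsilon_{t,\tau})(1-\eta_{t}^{2})]=\kappa_{1}$ and $E[(1-\eta_{t}^{2})^{2}]=\kappa_{2}$, together with $E[(\epsilon_{t-1}-D_{1}\Omega_{2}^{-1}z_{t}/h_{t})(\epsilon_{t-1}-D_{1}\Omega_{2}^{-1}z_{t}/h_{t})^{\prime}]=\Xi-\Omega_{3}$ and the cross-moment $E[(\epsilon_{t-1}-D_{1}\Omega_{2}^{-1}z_{t}/h_{t})h_{t}^{-1}\partial h_{t}(\theta_{0})/\partial\theta^{\prime}]=D_{3}-D_{1}\Omega_{2}^{-1}H_{2}=Q$, I would obtain
\begin{equation*}
E(\xi_{t}\xi_{t}^{\prime})=(\tau-\tau^{2})(\Xi-\Omega_{3})+\kappa_{1}b_{\tau}f(b_{\tau})(QJ^{-1}P^{\prime}+PJ^{-1}Q^{\prime})+\kappa_{2}b_{\tau}^{2}f^{2}(b_{\tau})PJ^{-1}P^{\prime},
\end{equation*}
so that $c^{2}E(\xi_{t}\xi_{t}^{\prime})=\Sigma_{3}$, completing the proof; note in particular that $Q$ (hence $D_{3}$) enters only through this conditional cross-covariance between the quantile score and the QMLE score, exactly as $H_{2}$ enters $\Sigma_{1}$. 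I expect the main obstacle to be the stochastic-equicontinuity step for the non-smooth $\psi_{\tau}$ in the bootstrap space, where $\widehat{\theta}_{\tau n}^{\ast}$ and $\widetilde{\theta}_{n}^{\ast}$ depend on the entire weight vector: establishing the uniform (over $n^{-1/2}$-neighborhoods) replacement of indicator increments by their means, and simultaneously showing the explicit weights drop out of the density-driven drift, is the delicate part and is where the moment assumptions are consumed.
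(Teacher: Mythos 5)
Your proposal is correct and takes essentially the same route as the paper's proof: both reduce $\sqrt{n}(R^{*}-R)$ to the identical Bahadur representation $c\,n^{-1/2}\sum_{t}(\omega_{t}-1)\big\{\psi_{\tau}(\varepsilon_{t,\tau})\big(\epsilon_{t-1}-D_{1}\Omega_{2}^{-1}z_{t}/h_{t}\big)+b_{\tau}f(b_{\tau})PJ^{-1}(1-|\varepsilon_{t}|)h_{t}^{-1}\partial h_{t}(\theta_{0})/\partial\theta\big\}+o_{p}^{*}(1)$ --- the paper via the decomposition \eqref{deq1add} around the weighted true residuals, reusing the negligibility arguments behind \eqref{E1}, \eqref{E2b} and \eqref{E3} and then subtracting \eqref{deq17}, while you split around the fitted residuals and linearize the bootstrap-minus-fitted difference directly, an equivalent bookkeeping --- after which both substitute the expansions of $\sqrt{n}(\widehat{\theta}_{\tau n}^{*}-\widehat{\theta}_{\tau n})$ and $\sqrt{n}(\widetilde{\theta}_{n}^{*}-\widetilde{\theta}_{n})$ from the proof of Theorem \ref{thm3} and conclude by the conditional Lindeberg CLT with the Cram\'{e}r--Wold device. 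Your explicit verification that $c^{2}E(\xi_{t}\xi_{t}^{\prime})=\Sigma_{3}$ (via $\kappa_{1}$, $\kappa_{2}$, $\Xi-\Omega_{3}$ and $Q=D_{3}-D_{1}\Omega_{2}^{-1}H_{2}$, noting $|\varepsilon_{t}|=\eta_{t}^{2}$) correctly supplies the covariance computation the paper leaves implicit, and the stochastic-equicontinuity issue you flag at the end is precisely what the paper handles by adapting its uniform-in-parameter arguments from the proof of Theorem \ref{thm2} to the bootstrap space.
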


The bootstrapping test can be incorporated into the bootstrapping procedure proposed in Section 3. Specifically, in Step 3 of the procedure summarized therein, we can further calculate the vector $R^*$ by \eqref{r*} and subsequently obtain $T^{(1)}=\sqrt{n}(R^*-R)$. Then, by repeating Steps
1-2 for $B-1$ times, we can obtain $\{T^{(1)}, \dots, T^{(B)}\}$. As a result, we can approximate $\Sigma_3$ by the sample covariance matrix $\Sigma_3^*$ of $\{T^{(i)}\}_{i=1}^B$, and calculate the bootstrapping test
statistic $Q(K)$ accordingly.

If the value of $Q(K)$ exceeds the 95th theoretical percentile of $\chi_{K}^2$, then the null hypothesis that $r_{k,\tau}$ with $1\leq k\leq K$ are jointly insignificant is rejected.  We can also examine the
significance of the individual $r_{k,\tau}$'s, by checking if $\sqrt{n}r_{k,\tau}$ lies between the 2.5th and 97.5th empirical percentiles
of $\{T^{(i)}_k\}_{i=1}^B$, where $T^{(i)}_k$ is the $k$th element of $%
T^{(i)}$.

\section{Simulation Studies}

\subsection{Comparison with Existing Conditional Quantile Estimators}

We conduct Monte Carlo simulations to evaluate the finite-sample performance of the proposed estimation method and inference tools. This subsection focuses on the forecasting performance in comparison with existing condition quantile estimation methods for time series. The data $\{x_t\}_{t=1}^n$ are generated from the GARCH($1,1$) model,
\begin{equation}  \label{garch11}
x_t= \sqrt{h_t} \eta_t, \hspace{5mm} h_t=\alpha_0+\alpha_1x_{t-1}^2+%
\beta_1h_{t-1},
\end{equation}
where $\{\eta_t\}$ are $i.i.d.$ standard normal or standardized Student's $%
t_5$ with variance one. Two sets of parameters are
considered: $(\alpha_0, \alpha_1, \beta_1)=(0.1,0.8,0.15)$ (Model 1) and $%
(\alpha_0, \alpha_1, \beta_1)=(0.1,0.15,0.8)$ (Model 2). Note that Model 1 has
larger volatility, whereas the effect of shocks on
the volatility is more persistent in Model 2. We estimate the conditional quantiles at $\tau=0.05$ using various methods. The estimates of $Q_{\tau}(x_t|\mathcal{F}%
_{t-1})$ with $1\leq t\leq n$ are called the in-sample forecasts, while that of $%
Q_{\tau}(x_{n+1}|\mathcal{F}_{n})$ the out-of-sample forecast.

Particularly, as an alternative to Step E1,
we can adapt \citeauthor{Xiao_Koenker2009}'s \citeyearpar{Xiao_Koenker2009}
method to estimate the conditional variances $\{h_{t}\}$ by a sieve
approximation:
\begin{equation*}
h_{t}=\rho _{0}+\sum_{j=1}^{m}\rho _{j}x_{t-j}^{2}.
\end{equation*}%
Subsequently, estimates of $Q_{\tau }(x_{t}|\mathcal{F}_{t-1})$ can be obtained by
applying the transformation $T^{-1}(\cdot )$ to those of $Q_{\tau }(y_{t}|%
\mathcal{F}_{t-1})$, where $y_t=T(x_t)$; this method is denoted as $\text{%
QGARCH}_{1} $ and $\text{QGARCH}_{2}$ below. Following \cite%
{Xiao_Koenker2009}, we set the order of the sieve approximation to $%
m=3n^{1/4}$. In summary, we compare the following five methods:

\begin{itemize}
\item[a.] Hybrid: The hybrid conditional quantile estimator proposed in
Section 2, with weighted quantile regression in Step E2.

\item[b.] $\text{QGARCH}_{1}$: Estimation based on a sieve approximation at
the specific quantile level $\tau $ for the initial estimation of $\{h_{t}\}$, similar to the \textquotedblleft QGARCH1\textquotedblright\ method
in \cite{Xiao_Koenker2009}.

\item[c.] $\text{QGARCH}_{2}$: Estimation based on a sieve approximation
over multiple quantile levels $\tau _{i}=i/20$ for $i=1,2,\ldots ,19$, which are
combined via the minimum distance estimation, for the initial estimation of $%
\{h_{t}\}$, similar to the \textquotedblleft QGARCH2\textquotedblright\
method in \cite{Xiao_Koenker2009}.

\item[d.] CAViaR: The indirect GARCH($1,1$)-based CAViaR method in \cite%
{Engle_Manganelli2004}, using the Matlab code of grid-seaching from these
authors and the same settings of initial values for the optimization as in
their paper.

\item[e.] RiskM: The conventional RiskMetrics method, which assumes that the
data follow the Integrated GARCH(1,1) model: $x_{t}=\sqrt{h_{t}}\eta _{t}$, $%
h_{t}=0.06x_{t-1}^{2}+0.94h_{t-1}$, where $\{\eta _{t}\}$ are $i.i.d.$
standard normal; see, e.g, \cite{Morgan1996} and \cite{Tsay2010}.
\end{itemize}

We examine the in-sample and out-of-sample performance separately. Three
sample sizes, $n=200$, 500 and 1000, are considered, and 1000 replications
are generated for each sample size. For each setting, we compute the biases
and mean squared errors (MSEs) of the estimates by averaging individual
values over all time points and all samples. The results for Models 1 and 2
are reported in Tables \ref{table1a} and \ref{table1b} respectively.

Four findings from the tables are summarized as follows. Firstly, smaller
in-sample biases (or MSEs) are usually associated with smaller out-of-sample
biases (or MSEs). Secondly, the method with the smallest
absolute value of bias is the hybrid estimator when the innovations are
Gaussian, yet is the CAViaR estimator in the Student's $t_{5}$
cases. Interestingly, the in-sample bias (with sign) for cases with Student's $t_{5}$ distributed innovations is generally smaller than the corresponding number for the Gaussian cases,
possibly due to their heavy tails. Thirdly, for the MSE, the hybrid estimator
is clearly the best among all methods for Model 1, whereas CAViaR seems the
most favorable method for Model 2; however, these two methods are comparable as $n$ increases to 1000. Note that the
indirect CAViaR estimator of \cite{Engle_Manganelli2004} is essentially the
quantile regression estimator for GARCH models in \cite{Lee_Noh2013}.
Compared with CAViaR, the hybrid estimator relies on an initial estimation
that reduces efficiency, yet uses weights to improve efficiency. As
a result of these two effects, the efficiency gains from the weights will be
more pronounced when there are larger variations in the conditional variances $\{h_{t}\}$, namely the case of Model 1.

Lastly, for all methods except RiskM, the absolute value of in-sample bias
and in-sample MSE generally decrease as $n$ increases, while the
out-of-sample performance can be less stable. For the hybrid and CAViaR
estimators, the out-of-sample bias and MSE mostly decrease as $n$ increases,
with only a few exceptions in Student's $t_5$ cases. Nevertheless, the out-of-sample performance of the
QGARCH estimators can become very bad as $n$ increases to 1000 for both models with heavy-tailed innovations; e.g., the
out-of-sample MSEs of the QGARCH estimators can be as large as 10, as shown
in Table \ref{table1a}. This is due to only a few replications where the initial
estimates of $\{h_t\}$ are rather poor: the sieve approximation uses
unnecessarily large order $m$ that introduces too much noise. Note that $m$
increases with $n$, while smaller $\alpha_1$ and $\beta_1$ favor smaller $m$%
. Since the magnitude of $\beta_1$ has a greater impact on the choice of $m$
than $\alpha_1$, the problem is more severe in Model 1.

Overall, for the models we considered, the hybrid and grid-searching based
CAViaR estimators have the best forecasting performance, and they both outperform the sieve-based QGARCH estimators, while the RiskM estimator is the worst in most cases and is
especially unsatisfactory for Model 1. Finally, it is worth noting
that the hybrid estimator takes much less computation time than the
CAViaR estimator. For instance, for our 1000 replications of Model 1 with
Gaussian innovations and sample size $n=$1000, the CAViaR estimator
takes 15.6 minutes, while the proposed hybrid estimator takes only 2.8
minutes.

\subsection{Finite-Sample Performance of the Proposed Inference Tools}

This subsection consists of  three simulation experiments for evaluating the finite-sample performance of the proposed inference tools.

The first experiment compares the efficiency of the weighted quantile
regression estimator $\widehat{\theta}_{\tau n}$ and its unweighted
counterpart $\widecheck{\theta}_{\tau n}$. We generate the data  from the GARCH($1,1$) model in \eqref{garch11} with standard normal or standardized Student's $t_5$ distributed innovations $\{\eta_t\}$, using two sets of
parameters: $(\alpha_0, \alpha_1, \beta_1)=(0.4,0.2,0.2)$ (Model (a)) and $%
(\alpha_0, \alpha_1, \beta_1)=(0.4,0.2,0.6)$ (Model (b)). These settings ensure strict stationarity of $\{x_t\}$ with $%
E|x_t|^{4+\iota_0}<\infty$ for some $\iota_0>0$, as required in Corollary %
\ref{cor1}; see \cite{Ling_McAleer2002} for the existence of moments of
GARCH models. Particularly, the available parameter space of $\alpha_1$ is
severely restricted in the Student's $t_5$ case. The sample size is $n=2000$%
, and two quantile levels, $\tau=0.1$ and 0.25, are considered. Figure \ref%
{fig3} provides the box plots for the two estimators based on 1000
replications. It shows that the interquartile range of the weighted
estimator $\widehat{\theta}_{\tau n}$ is smaller than that of the unweighted
$\widecheck{\theta}_{\tau n}$ under all settings; the latter also suffers
from more severe outliers. The efficiency gains from the weights seem larger
for the Student's $t_5$ cases. Moreover,  for the unweighted estimator $\widecheck{\theta}_{\tau n}$, the sample median slightly deviates from the true value $\theta_{\tau 0}$ especially when the innovations are Student's $t_5$
distributed.  The results suggest that the weighted estimator is more efficient in finite samples.

The second experiment examines the finite-sample performance of the estimator $\widehat{\theta}_{\tau n}$ further, as well as that of the residual QACF $r_{k,\tau}$ and the bootstrapping approximations. The data are generated from the GARCH($1,1$) model in \eqref{garch11} with $(\alpha_0, \alpha_1, \beta_1)=(0.4,0.4,0.4)$ and the same settings for the innovations. The sample sizes
are $n=500$, $1000$ and $2000$, with $1000$ replications generated for each
sample size. Three distributions for the random weights $\{\omega_t\} $ in the bootstrapping procedure are considered: the standard
exponential distribution ($W_1$); the Rademacher distribution ($W_2$), which
takes the values 0 or 2, each with probability 0.5 \citep{Li_Leng_Tsai2014};
and Mammen's two-point distribution ($W_3$), which takes the value $(-\sqrt{5%
}+3)/2$ with probability $(\sqrt{5}+1)/2\sqrt{5}$ and the value $(\sqrt{5}%
+3)/2$ with probability $1-(\sqrt{5}+1)/2\sqrt{5}$ \citep{Mammen1993}. As in the previous experiment, we examine two quantile levels, $\tau=0.1$ and 0.25.

The biases, empirical standard deviations (ESDs) and asymptotic standard
deviations (ASDs) for $\widehat{\theta}_{\tau n}$ are reported in Table \ref%
{table2}, and those for the residual QACF $r_{k,\tau}$ at lags $k=2, 4$ and 6
are given in Table \ref{table3}. All ASDs are computed according to the proposed
bootstrapping procedure. From both tables, we have the following results: (1) the biases are small, and the ESDs and
ASDs are fairly close to each other as $n$ increases to 1000; (2) as $n$
increases, the biases and standard deviations decrease, and the ESDs get
closer to the corresponding ASDs; (3) the choice of random weights have
little influence on the bootstrapping approximations; (4) the performance of the bootstrapping approximations are similar for both quantile levels as $n$ increases to 1000.

The third experiment evaluates the empirical size and power of the proposed
bootstrapping test statistic $Q(K)$. The data generating
processes are
\begin{equation*}  \label{sim2}
x_t=\sqrt{h_t}\eta_t, \hspace{5mm}
h_t=0.4+0.2x_{t-1}^2+dx_{t-4}^2+0.2h_{t-1},
\end{equation*}
where $\{\eta_t\}$ follow the same distribution as in the previous experiment, and we consider
departure $d=0$, 0.3 or 0.6. We conduct the estimation assuming a GARCH($1,1$) model; thus, $d=0$ corresponds to the size of
the test, and $d\neq 0$ corresponds to the power. All other settings are
preserved from the previous experiment.

Table \ref{table4} reports the rejection rates of $Q(K)$ at the maximum lag $K=6$.
It shows a satisfactory performance of both the
size and the power. The sizes are close to the nominal rate $5\%$ for $n$ as small as 500, and the powers increase as either $n$ or
the departure $d$ increases. Interestingly, the powers at $\tau=0.1$ and 0.25 are close when $\{\eta_t\}$ are Gaussian, yet differ
notably when they are Student's $t_5$ distributed. Moreover, it is worth noting that when the lower quantile $\tau$ is smaller, the actual departure in the quantile
regression, namely $|b_{\tau}d|$, increases, yet meanwhile the
density $f(b_{\tau})$ decreases, i.e., there are fewer data points
around $b_{\tau}$. Hence, the effect of $\tau$ on the power is
mixed, and the simulation result suggests that the overall impact of $\tau$ varies with the innovation distribution. Lastly, the different random weights
distributions perform similarly, as in the previous experiment.

\section{Empirical Analysis}

This section demonstrates the usefulness of the proposed approach through
analyzing daily log returns of three stock market indices: the S\&P 500
index, the Dow 30 index, and the Hang Seng Index (HSI). The data are from
January 2, 2008 to June 30, 2016.

To begin with, we illustrate the estimation of the conditional quantiles of the daily log returns of the S\&P 500 index. The sequence of log
returns, denoted by $\{x_t\}$, has a sample size of 2139, and its time plot is shown in Figure \ref{fig1}. For the time being, we focus on $\tau=0.05$, which corresponds to the $5\%$ VaR. Throughout this section, the standard
exponential random weights are used in the bootstrapping approximations.

We first consider an ARCH(1) model for $\{x_t\}$. The Gaussian
QMLE gives following estimation result,
\begin{equation}  \label{archreal}
\widetilde{h}_t=2.608\times10^{-7}_{6.222%
\times10^{-6}}+0.864_{5.159}x_{t-1}^2,
\end{equation}
where the standard errors of the parameter estimates are written as the
subscripts. Note that both the intercept and the ARCH coefficient are
insignificant. We compute the initial estimates of $\{h_t\}$ by \eqref{archreal}. Subsequently, using the proposed estimation method, we have that the fitted conditional quantile of $\{y_t\}$,
with $y_t=T(x_t)$, is
\begin{equation*}
\widehat{Q}_{0.05}(y_t|\mathcal{F}_{t-1})
=-2.717\times10^{-4}_{7.554\times10^{-5}}-1.659_{1.499}x_{t-1}^2,
\end{equation*}
where the coefficient of $x_{t-1}^2$ is insignificant, while the intercept
is significant. We next check
the adequacy of the fitted conditional quantile. The left panel of Figure %
\ref{fig2} shows that the individual residual QACF exceeds the 95\% confidence
bounds at, e.g., lags 1, 6 and 14, by a relatively large margin. Moreover, the $p$-value of the bootstrapping test statistic $Q(K)$ with the maximum lag $K=6$ is as small as 0.039,
and those of $Q(K)$ with $K=12$, 18, 24 and 30 are 0.228, 0.174, 0.127 and
0.088, respectively. The diagnostic tests indicate that there may be
higher-order ARCH effects which are not captured by the fitted model.

In view of this, we next consider a GARCH($1,1$) model and estimate the conditional quantiles using the proposed estimation procedure.  As a result, the initial estimates of $\{h_t\}$ are calculated from
\begin{equation}  \label{garchreal}
\widetilde{h}_t=2.646\times10^{-6}_{7.793%
\times10^{-7}}+0.126_{0.018}x_{t-1}^2+0.858_{0.019}\widetilde{h}_{t-1},
\end{equation}
and the fitted conditional quantile function is
\begin{equation}  \label{qgarchreal}
\widehat{Q}_{0.05}(y_t|\mathcal{F}_{t-1})= -4.713\times 10^{-7}_{3.199\times
10^{-5}}-0.124_{0.261}x_{t-1}^2-3.007_{0.521}\widetilde{h}_{t-1}.
\end{equation}
Interestingly, while all parameter estimates in \eqref{garchreal} are significant, only the coefficient of $\widetilde{h}_{t-1}$ is significant in \eqref{qgarchreal}.
Note that unlike the ARCH model, the conditional variances $\{h_t\}$ are defined recursively in  the GARCH model.  Therefore, the coefficient of $x_{t-1}^2$ in \eqref{garchreal} incorporates ARCH effects of all orders, and its significance confirms that ARCH
effects exist. That is, $h_t$ is not constant, even though the coefficient of $x_{t-1}^2$ in \eqref{archreal} is insignificant. On the
contrary, the coefficient of $x_{t-1}^2$ in \eqref{qgarchreal} contains only the effect of $x_{t-1}^2$ on the conditional quantile, as the higher order ARCH effects are already incorporated into the initial estimates $\widetilde{h}_{t-1}$. The insignificant coefficient of $x_{t-1}^2$ in \eqref{qgarchreal} suggests that $x_{t-1}^2$ itself may have no contribution to the conditional quantile at $\tau=0.05$.

We next conduct diagnostic checking for the fitted conditional quantiles given by \eqref{qgarchreal}.  As shown in the right panel of Figure \ref{fig2}, the residual QACF only slightly stands out of the 95\%
confidence bounds at lags 3, 21 and 24, and falls within them at all the
other lags. Furthermore, the $p$-values of $Q(K)$ are always larger than 0.257 for $K=6$, 12, 18, 24 and 30, indicating the adequacy of the
fitted conditional quantiles.

Moreover, to examine the forecasting performance of the proposed approach, we consider the rolling forecast of conditional quantiles at $%
\tau=0.05$, i.e., the negative 5\% VaR over a one-day horizon. We first
conduct the estimation for the first two years' data (January 2, 2008 to
December 31, 2009) under the GARCH($1,1$) model
assumption, and compute the conditional quantile forecast for the next
trading day, i.e., the forecast of $Q_{\tau}(x_{n+1}|\mathcal{F}_n)$. Then
we advance the forecasting origin by one, and, with one more observation
included in the estimation subsample, repeat the foregoing procedure until we
reach the end of the data set. For each rolling step, we use the proposed
bootstrap method to construct the 95\% confidence interval for the conditional quantile. The rolling forecasts and the corresponding confidence intervals are provided in Figure \ref{fig1}. It shows that $x_t$ falls below the conditional quantile forecast only occasionally, which supports the reliable performance of the proposed estimation and bootstrap method.

Finally, to better evaluate the forecasting performance of the proposed method, we conduct a more extensive analysis using the daily log returns of all the three stock market indices, and provide a comparison with the various conditional quantile estimation methods employed in the simulation experiment in Section 5.1.  We examine two quantile levels: $\tau=0.05$ and 0.01, which correspond to the one-day 5\% and 1\% VaR. The foregoing rolling procedure is adopted, and forecasting performance is measured by the empirical coverage rate (ECR), namely the proportion of observations that fall below the corresponding conditional quantile forecasts. The sample sizes for the Dow 30 index and the HSI are 2139 and 2130 respectively.

Table \ref{table7} reports the ECRs for the whole forecasting period as well
as those for four separate subperiods: (1) January 4, 2010 to December 30, 2011; (2) January 3, 2012 to
December 31, 2013; (3) January 2, 2014 to December 31, 2015; and (4) January 4, 2016
to June 30, 2016. For the ECRs of the whole period, the proposed
hybrid method always gives the ECR closest to the nominal rate among all methods, for both 1\% and 5\%
VaR and all market indices, except the case of 1\% VaR for the HSI. Although the results for the subperiods
are more mixed, it seems that the hybrid method is still most likely to give
the best ECR. Specifically, the number of times a method gives the best ECR
(including ties) among all methods for any subperiod and any quantile
level are: 16 for Hybrid, 4 for $\text{QGARCH}%
_1$, 7 for $\text{QGARCH}_2$, 6 for CAViaR, and 7 for RiskM. The forecasting
performance therefore corroborates the usefulness of the proposed method.

\section{Conclusion and Discussion}

Although quantile regression by nature is highly relevant to the conditional
quantile estimation for time series models with conditional
heteroscedasticity, it has been a very challenging task  for the most important member of this family, i.e., \citeauthor{Bollerslev1986}'s \citeyearpar{Bollerslev1986} GARCH model.
The major technical difficulties are due to the presence of latent variables and the square-root form of the conditional quantile function of this model.  In this paper, we propose an easy-to-implement quantile regression method for this important model. Our approach rests upon a monotone transformation which directly links the conditional quantile of the GARCH model to that of the linear GARCH model whose structure is much more tractable. As a result, quantile regression estimation for
the GARCH model is made easy. Meanwhile, the original GARCH form enables
reliable initial estimation for the conditional variances via the Gaussian
QMLE.

Inference about the conditional quantile is conducted, including
construction of confidence intervals and diagnostic checks.  To approximate the asymptotic distributions of the proposed estimator and statistics, we introduce a new bootstrap method, and through replacing an optimization step with a sample averaging, we speed up the bootstrapping procedure  significantly.  To sum up, a complete approach to the conditional quantile inference for the widely used GARCH model is provided in this paper.

The proposed approach can be extended in several directions. First, it is
well known that financial time series can be so heavy-tailed that $E(\eta
_{t}^{4})=\infty $ \citep{Hall2003, Mittnik2003, Mikosch2000}. For such
cases, we may alternatively consider methods that are more robust
than the Gaussian QMLE for the initial estimation of the conditional
variances, including the least absolute deviation estimator of \cite{Peng_Yao2003} and the rank-based
estimator of \cite{Andrews2012} among others.
Second, our approach can be applied to
the conditional quantile estimation for other conditional heteroscedastic
models, including the asymmetric GJR-GARCH model \citep{GJR_1993}. Third,
although the multivariate GARCH model has been widely used for the
volatility modeling of multiple asset returns \citep{BEKK_1995}, the
conditional quantile estimation for the corresponding portfolio returns is
still an open problem. The proposed hybrid conditional quantile inference
procedure offers some preliminary ideas on this, and we will leave it for
future research.

\renewcommand{\thesection}{A} 
\setcounter{equation}{0} 

\section*{Appendix: Technical Details}

This appendix gives the proofs of Theorems \ref{thm1}-\ref{thm4},
Corollaries \ref{cor1}-\ref{cor3} and Equation \eqref{quantx}. Lemmas \ref%
{lem2} and \ref{lem3} contain some preliminary results for GARCH models
which will be repeatedly used later.

Throughout the appendix, $C$ is a generic positive constant which may take
different values at its different occurrences, and $C(M)$ is such a constant
whose value depends on $M$. We denote by $\|\cdot\|$ the norm of a matrix or
column vector, defined as $\|A\|=\sqrt{\text{tr}(AA^\prime)}=\sqrt{%
\sum_{i,j}|a_{ij}|^2}$. In addition, let $z_t(\theta)=(1,x_{t-1}^2,%
\dots,x_{t-q}^2, h_{t-1}(\theta),\dots,h_{t-p}(\theta))^{\prime}$, $%
\widetilde{z}_t(\theta)=(1,x_{t-1}^2,\dots,x_{t-q}^2, \widetilde{h}%
_{t-1}(\theta),\dots, \widetilde{h}_{t-p}(\theta))^{\prime}$, and, for
simplicity, write $z_t=z_t(\theta_{0})$, $\breve{z}_t=\widetilde{z}%
_t(\theta_{0})$, and $\widetilde{z}_t=\widetilde{z}_t(\widetilde{\theta}%
_{n}) $, where $\widetilde{\theta}_{n}$ is the Gaussian QMLE of model %
\eqref{garch}. In the proofs of Theorems \ref{thm3} and \ref{thm4}, the
notations $E^*$, $O_p^*(1)$ and $o_{p}^*(1)$ correspond to the bootstrapped
probability space.

\begin{lemma}
\label{lem2} Under Assumption \ref{assum1}, for any $\kappa>0$, there exists
a constant $c>0$ such that
\begin{align*}
& (i)\; E\left(\sup\left\{ \frac{h_t(\theta_2)}{h_t(\theta_1)}:
\|\theta_1-\theta_2\|\leq c, \; \theta_1, \theta_2\in \Theta \right \}\right
)^{\kappa}<\infty, \\
& (ii)\; E\sup\left \{ \left\|\frac{1}{h_t(\theta_1)} \frac{\partial
h_t(\theta_2)}{\partial\theta} \right\|^{\kappa} : \|\theta_1-\theta_2\|\leq
c, \; \theta_1, \theta_2\in \Theta \right \} < \infty, \\
& (iii) \; E\sup\left \{ \left\|\frac{1}{h_t(\theta_1)} \frac{\partial^2
h_t(\theta_2)}{\partial \theta\partial \theta^\prime} \right\|^{\kappa} :
\|\theta_1-\theta_2\|\leq c, \;\theta_1, \theta_2\in \Theta \right \} <
\infty
\end{align*}
and
\begin{equation*}
(iv) \; E\sup \left \{ \left|\frac{1}{h_t(\theta_1)} \frac{\partial^3
h_t(\theta_2)}{\partial \theta_i\partial \theta_k\partial \theta_\ell}
\right|^{\kappa} : \|\theta_1-\theta_2\|\leq c, \;\theta_1, \theta_2\in
\Theta\right \} < \infty,
\end{equation*}
for all $1\leq i, k, \ell \leq p+q+1$.
\end{lemma}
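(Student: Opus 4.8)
The plan is to reduce all four assertions to \emph{deterministic} almost-sure bounds holding uniformly over the relevant parameter sets. This reduction is in fact forced by the hypotheses: under Assumption~\ref{assum1} the process $x_t$ has only a finite fractional moment, so $E|x_t|^{2\kappa}=\infty$ for large $\kappa$; hence the only way the displayed quantities can have a finite $\kappa$th moment \emph{for every} $\kappa>0$ is for them to be bounded above by a nonrandom constant. The entire proof therefore amounts to exhibiting such constants, and the two structural facts that make this possible are (a) the termwise lower bounds coming from the nonnegativity of the recursion \eqref{aeq1} together with the uniform positivity $\alpha_i,\beta_j\geq\underline{w}$ on $\Theta$, and (b) the geometric decay of the coefficients in the $\mathrm{MA}(\infty)$ representation of $h_t(\theta)$, guaranteed by the constraint $\beta_1+\cdots+\beta_p\leq\rho_0<1$.

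First I would record the representation obtained by iterating \eqref{aeq1}. Writing the recursion in companion form $\underline{h}_t(\theta)=b_t(\theta)+B(\theta)\underline{h}_{t-1}(\theta)$, where $B(\theta)$ is the companion matrix of $(\beta_1,\dots,\beta_p)$ and $b_t(\theta)$ collects $\alpha_0+\sum_i\alpha_i x_{t-i}^2$, gives $h_t(\theta)=\sum_{k\geq0}e_1'B(\theta)^k b_{t-k}(\theta)$. Since $\sum_j\beta_j\leq\rho_0$ on $\Theta$ and the $\beta_j$ are nonnegative, the spectral radius of $B(\theta)$ is bounded away from $1$ uniformly, so $\sup_{\theta\in\Theta}\|B(\theta)^k\|\leq C\rho^k$ for some $\rho\in(\rho_0,1)$; differentiating the recursion shows that the coefficient arrays of $\partial h_t/\partial\theta$, $\partial^2 h_t/\partial\theta^2$ and $\partial^3 h_t/\partial\theta^3$ obey the same decay up to polynomial factors $k^m$, which stay geometrically summable. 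Alongside this I would note the elementary consequences of nonnegativity: $h_t(\theta)\geq\underline{w}$, $h_t(\theta)\geq\alpha_i x_{t-i}^2$, and $h_t(\theta)\geq\beta_j h_{t-j}(\theta)$ for every $i,j$.

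Part~(i) then follows by comparing coefficients: for $\|\theta_1-\theta_2\|\leq c$ the arrays of $h_t(\theta_2)$ and $h_t(\theta_1)$ differ only through parameters lying in $[\underline{w},\overline{w}]$ and within distance $c$, so each coefficient of $h_t(\theta_2)$ is at most a fixed multiple of the corresponding coefficient of $h_t(\theta_1)$, whence $h_t(\theta_2)\leq C\,h_t(\theta_1)$ with $C$ nonrandom. For~(ii) I would split
\[
\frac{1}{h_t(\theta_1)}\Bigl\|\frac{\partial h_t(\theta_2)}{\partial\theta}\Bigr\|
=\frac{h_t(\theta_2)}{h_t(\theta_1)}\cdot\frac{1}{h_t(\theta_2)}\Bigl\|\frac{\partial h_t(\theta_2)}{\partial\theta}\Bigr\|,
\]
bound the first factor by~(i), and bound the second factor by a constant through termwise domination: $h_t^{-1}\partial h_t/\partial\alpha_0$ is controlled by $(1-\rho_0)^{-1}\underline{w}^{-1}$, while $h_t^{-1}\partial h_t/\partial\alpha_i$ and $h_t^{-1}\partial h_t/\partial\beta_j$ are controlled because every $x^2$ or lagged-$h$ term entering the derivative also occurs in $h_t(\theta)$ itself with a coefficient larger by a factor at least $\underline{w}$ (using $h_t\geq\alpha_i x_{t-i}^2$ and $h_t\geq\beta_j h_{t-j}$), the geometric decay making the resulting series converge. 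Parts~(iii) and~(iv) are handled identically: the higher derivatives generate finitely many additional terms of the same type, each again dominated by $h_t(\theta)$ up to a constant, the extra combinatorial factors being absorbed into the summable $\sum_k k^m\rho^k$.

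The main obstacle is precisely this bookkeeping for the second and third derivatives, combined with the mismatch between the evaluation points $\theta_1$ and $\theta_2$: one must verify that after differentiating the derivative recursions two or three times, every surviving term can still be matched to a term of $h_t(\theta)$ carrying a uniformly larger coefficient, so that the ratio stays bounded by a nonrandom constant rather than by a quantity growing with the $x_{t-i}^2$. Once the uniform geometric decay of all the coefficient arrays is established, this reduces to the summability of $\sum_k k^m\rho^k$ for $m\leq3$, which is routine.
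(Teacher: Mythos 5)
Your proposal rests on reducing everything to almost-sure, nonrandom bounds, and that reduction is false — both the meta-argument and the concrete mechanism. The claim that finiteness of all moments "forces" the suprema to be bounded by deterministic constants is a non sequitur (a random variable can have every moment finite without being bounded), and the lemma in fact lets the constant $c$ depend on $\kappa$, which is exactly the degree of freedom the true proof exploits. Concretely, termwise coefficient domination fails once $p\geq 1$: writing $h_t(\theta)=c_0(\theta)+\sum_{j\geq 1}c_j(\theta)x_{t-j}^2$, in the GARCH$(1,1)$ case $c_j(\theta)=\alpha_1\beta_1^{j-1}$, so for $\theta_2$ obtained from $\theta_1$ by perturbing $\beta_1$ to $\beta_1+c$ one has $c_j(\theta_2)/c_j(\theta_1)=\bigl((\beta_1+c)/\beta_1\bigr)^{j-1}\rightarrow\infty$; on a path whose dominant shock $x_{t-J}^2$ sits at a large lag $J$, the ratio $h_t(\theta_2)/h_t(\theta_1)$ is close to $\bigl((\beta_1+c)/\beta_1\bigr)^{J-1}$, so the supremum in (i) is an \emph{unbounded} random variable for every $c>0$. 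The same failure hits your treatment of (ii)--(iv): since $\partial c_j(\theta)/\partial\beta_1=(j-1)\alpha_1\beta_1^{j-2}$, the coefficient ratio $(j-1)/\beta_1$ grows with the lag, and on the same single-shock paths $h_t^{-1}\,\partial h_t/\partial\beta_1$ is of order $J/\beta_1$ — so your assertion that every term of the derivative "also occurs in $h_t(\theta)$ itself with a coefficient larger by a factor at least $\underline{w}$" is wrong precisely for the $\beta$-derivatives. The one-step inequality $h_t(\theta)\geq\beta_j h_{t-j}(\theta)$ only iterates to $h_{t-k}(\theta)/h_t(\theta)\leq\beta^{-k}$, which is useless after summing the derivative series.

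The paper's proof shows what is actually needed. For (i) it proves a moment bound, not an a.s.\ bound: by Lemma 3.1 of Berkes, Horv\'ath and Kokoszka (2003), $\sup_{\theta\in\Theta}c_j(\theta)\leq C_2\rho_1^j$ and $c_j(\theta^*)/c_j(\theta)\leq C_3\gamma^j$ when $\max_j\beta_j^*/\beta_j\leq\gamma$; the ratio is then reduced to $\sup_\theta\sum_j \gamma^j c_j(\theta)x_{t-j}^2/\bigl(C_1+\sum_j c_j(\theta)x_{t-j}^2\bigr)$, each term of which is bounded by $\gamma^j\bigl(c_j(\theta)x_{t-j}^2\bigr)^{1-\delta_1}/C_1^{1-\delta_1}$, and Minkowski's inequality together with the fractional moment $E|x_0^2|^{\delta_0}<\infty$ yields a finite $L_\kappa$ norm provided $\delta_1\in(1-\delta_0/\kappa,1)$ and $\gamma\rho_1^{1-\delta_1}<1$. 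Note that $\gamma$ (hence $c$) must be taken close to $1$ as $\kappa$ grows — the bound is genuinely random and $\kappa$-dependent, which your deterministic scheme cannot produce. Parts (ii)--(iv) then follow by combining (i) with the single-point uniform moment bounds of Francq and Zako\"ian (2004), e.g.\ $E\sup_{\theta\in\Theta}\|h_t^{-1}(\theta)\,\partial h_t(\theta)/\partial\theta\|^{\kappa}<\infty$, which are themselves proved by the same fractional-power device, not by a.s.\ domination. Your factorization of (ii) into the ratio from (i) times a single-point normalized derivative is structurally the same as the paper's step; the proposal would be repairable if "bounded by a constant" were replaced throughout by "has finite moments of every order" with those results (or the fractional-power argument) supplied, but as written its central claims are false.
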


\begin{proof}[Proof of Lemma \ref{lem2}]
	We first prove (i). For any $\theta=(\alpha_0, \alpha_1,\dots, \alpha_q, \beta_1, \dots, \beta_p)^{\prime}\in\Theta$ and $\gamma>1$, define
	\[ U(\gamma, \theta) = \{\theta^*=(\alpha_0^*, \alpha_1^*,\dots, \alpha_q^*, \beta_1^*, \dots, \beta_p^*)^{\prime}\in \Theta: \max_{1\leq j\leq p} \frac{\beta_j^*}{\beta_j} \leq \gamma\}. \]	
	Claim (i) follows from a more general result: for any $\kappa>0$, there is $ \gamma>1 $ such that
	\begin{equation}\label{lem1eq1}
	E\left [\sup_{\theta\in\Theta}\sup_{\theta^*\in U(\gamma, \theta)} \frac{h_t(\theta^*)}{h_t(\theta)}\right]^{\kappa} < \infty.
	\end{equation}
	Notice that for any $\theta$, the set $U(\gamma, \theta)$ only imposes an upper bound on the $\beta_j^*$'s, while the condition $\|\theta_1-\theta_2\|\leq c$
	restricts the distance between $\theta_1$ and $\theta_2$.
	
	We shall prove \eqref{lem1eq1}. Note that the functions $h_t(\theta)$, as defined recursively in \eqref{aeq1}, can be written in the form of
	\[h_t(\theta)=c_0(\theta)+\sum_{j=1}^{\infty} c_j(\theta)x_{t-j}^2,\]
	and the series converges with probability one for all $\theta\in\Theta$; see, e.g., \cite{Berkes_Istvan_Horvath2003}. Moreover, $c_0(\theta)=\alpha_0/(1-\beta_1-\cdots-\beta_p)\geq C_1=\underline{w}/(1-p\underline{w})>0$ for all $\theta\in\Theta$, and from Lemma 3.1 in \cite{Berkes_Istvan_Horvath2003}, it holds that
	\begin{equation}\label{lem1eq3}
	\sup_{\theta\in\Theta}c_j(\theta) \leq C_2 \rho_1^{j}, \hspace{5mm}j\geq 0,
	\end{equation}
	where $\rho_1=\rho_0^{1/p}\in (0,1)$,  and
	\begin{equation}\label{lem1eq4}
	\sup_{\theta\in\Theta} \sup_{\theta^*\in U(\gamma, \theta)}\frac{c_j(\theta^*)}{c_j(\theta)} \leq C_3 \gamma^j, \hspace{5mm}j\geq 0,
	\end{equation}
	for some constants $C_2, C_3>0$.  Using \eqref{lem1eq4}, we have
	\begin{equation*}\
	\sup_{\theta\in\Theta}\sup_{\theta^*\in U(\gamma, \theta)} \frac{h_t(\theta^*)}{h_t(\theta)}\leq \frac{C_2}{C_1}+C_3\sup_{\theta\in\Theta}\dfrac{\sum_{j=1}^{\infty}\gamma^jc_j(\theta)x_{t-j}^2}{C_1+\sum_{j=1}^{\infty}c_j(\theta)x_{t-j}^2},
	\end{equation*}
	and then it suffices to show that for any $\kappa\geq1$,
	\begin{equation*}
	 \left\|\sup_{\theta\in\Theta}\dfrac{\sum_{j=1}^{\infty}\gamma^jc_j(\theta)x_{t-j}^2}{C_1+\sum_{j=1}^{\infty}c_j(\theta)x_{t-j}^2} \right\|_{\kappa} <\infty,
	\end{equation*}
	where $\|\cdot\|_{\kappa}$ denotes the $L_{\kappa}$ norm, i.e., $\|X\|_{\kappa}=(E|X|^{\kappa})^{1/\kappa}$. Note that there is $\delta_0>0$ such that $E|x_0^2|^{\delta_0}<\infty$.  Thus,  for any $\kappa\geq 1$ and $\delta_1\in(1-\delta_0/\kappa, 1)$, by \eqref{lem1eq3} and the Minkowski inequality, we have
	\begin{align*}
	\left\| \sup_{\theta\in\Theta}\dfrac{\sum_{j=1}^{\infty}\gamma^jc_j(\theta)x_{t-j}^2}{C_1+\sum_{j=1}^{\infty}c_j(\theta)x_{t-j}^2}\right\|_{\kappa}&\leq \left\| \sup_{\theta\in\Theta}\sum_{j=1}^{\infty}\dfrac{\gamma^jc_j(\theta)x_{t-j}^2}{C_1^{1-{\delta_1}}[c_j(\theta)x_{t-j}^2]^{{\delta_1}}}\right\|_{\kappa}\\
	&\leq C_1^{-(1-{\delta_1})}\left \| \sum_{j=1}^{\infty}\gamma^j  (C_2\rho_1^jx_{t-j}^2)^{1-{\delta_1}}\right \|_{\kappa}\\
	& \leq C \sum_{j=1}^{\infty} (\gamma\rho_1^{1-{\delta_1}})^j\left [E|x_0^2|^{(1-{\delta_1})\kappa}\right ]^{1/\kappa}<\infty,
	\end{align*}
	if $\gamma$ is close enough to 1.  Therefore, \eqref{lem1eq1} holds, and so does (i).
	
	From the proof of Theorem 2.2 in \cite{Francq_Zakoian2004}, under Assumption \ref{assum1},  for any $\kappa>0$,
	\begin{equation*}
	E\sup_{\theta\in\Theta} \left\|\frac{1}{h_t(\theta)} \frac{\partial h_t(\theta)}{\partial\theta}  \right\|^{\kappa}  < \infty, \hspace{5mm}
	E\sup_{\theta\in\Theta} \left\|\frac{1}{h_t(\theta)} \frac{\partial^2 h_t(\theta)}{\partial \theta\partial \theta^{\prime}}  \right\|^{\kappa} < \infty \hspace{5mm}\text{and}\hspace{5mm}
	\end{equation*}
	\begin{equation*}
	E\sup_{\theta\in\Theta} \left|\frac{1}{h_t(\theta)} \frac{\partial^3 h_t(\theta)}{\partial \theta_{i}\partial \theta_{k}\partial \theta_{\ell}}  \right|^{\kappa} < \infty,
	\end{equation*}
	where $1\leq i, k, \ell \leq p+q+1$; see also Lemma 3.6 in \cite{Berkes_Horvath2004}. Combining these with (i), we immediately obtain (ii)-(iv).
\end{proof}

\begin{lemma}
\label{lem3} Under Assumption \ref{assum1},
\begin{equation*}
\sup_{\theta \in \Theta} |\widetilde{h}_t(\theta) - h_t(\theta)| \leq
C\rho^t \zeta \hspace{5mm}\text{and}\hspace{5mm} \sup_{\theta \in \Theta}
\left \|\frac{\partial\widetilde{h}_t(\theta)}{\partial\theta} - \frac{%
\partial h_t(\theta)}{\partial\theta}\right \| \leq C\rho^t \zeta,
\end{equation*}
where $C>0$ and $0<\rho<1$ are constants, and $\zeta$ is a random variable
independent of $t$ with $E|\zeta|^{\delta_0}<\infty$ for some $\delta_0>0$.
\end{lemma}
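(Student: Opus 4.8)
The plan is to track the error from the finite initialization through the \emph{stable} autoregressive part of the volatility recursion. Write $e_t(\theta)=\widetilde h_t(\theta)-h_t(\theta)$. For $t>\max(p,q)$ every index $t-i$ and $t-j$ occurring in \eqref{aeq1} is positive, so the observed inputs $x_{t-i}^2$ and the constant $\alpha_0$ are identical in the two recursions; subtracting leaves the homogeneous recursion $e_t(\theta)=\sum_{j=1}^p\beta_j e_{t-j}(\theta)$. Collecting the last $p$ errors into $E_t(\theta)=(e_t(\theta),\dots,e_{t-p+1}(\theta))^\prime$, this reads $E_t(\theta)=B(\theta)E_{t-1}(\theta)$, where $B(\theta)$ is the $p\times p$ companion matrix with top row $(\beta_1,\dots,\beta_p)$ and a subdiagonal of ones.

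The main obstacle is a geometric bound on $\|B(\theta)^k\|$ that is \emph{uniform} over $\Theta$, and I would obtain it from the location of the roots of $\mathcal{B}(z)=1-\sum_{j=1}^p\beta_j z^j$. Since $\beta_j\ge 0$ and $\sum_j\beta_j\le\rho_0<1$, for any $|z|\le 1$ one has $|\sum_j\beta_j z^j|\le\sum_j\beta_j\le\rho_0<1$, so $\mathcal{B}$ has no zero in the closed unit disk; moreover any zero obeys $1=|\sum_j\beta_j z^j|\le\rho_0|z|^p$, hence $|z|\ge\rho_0^{-1/p}>1$. Thus the spectral radius of $B(\theta)$ is at most $\rho_0^{1/p}$ for every $\theta\in\Theta$, and since $\overline{\Theta}$ is compact and $\theta\mapsto\|B(\theta)^k\|$ is continuous, for any fixed $\rho\in(\rho_0^{1/p},1)$ there is a constant $C$ with $\sup_{\theta\in\Theta}\|B(\theta)^k\|\le C\rho^k$ for all $k\ge 0$.

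It then remains to control the initialization. For $1\le t\le\max(p,q)$ the error $e_t(\theta)$ is a finite, $\theta$-smooth combination of the common starting value $\bar x^2=n^{-1}\sum_{t=1}^n x_t^2$, of $x_0^2,\dots,x_{1-q}^2$, and of the true volatilities $h_0(\theta),\dots,h_{1-p}(\theta)$. Bounding each coefficient over the bounded set $\Theta$ and using $\sup_{\theta\in\Theta}h_s(\theta)$ controlled via the representation $h_t(\theta)=c_0(\theta)+\sum_{j\ge1}c_j(\theta)x_{t-j}^2$ with $\sup_{\theta}c_j(\theta)\le C_2\rho_1^{j}$ from the proof of Lemma \ref{lem2}, I set $\zeta$ to dominate $\sup_{\theta\in\Theta}\|E_{\max(p,q)}(\theta)\|$. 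Because $E|x_0|^{2\delta_0}<\infty$ by Assumption \ref{assum1} and we may take $\delta_0\in(0,1]$, the $c_r$-inequality gives $E|\bar x^2|^{\delta_0}<\infty$ and $E\sup_{\theta}h_s(\theta)^{\delta_0}<\infty$, whence $E|\zeta|^{\delta_0}<\infty$. Iterating $E_t(\theta)=B(\theta)^{\,t-\max(p,q)}E_{\max(p,q)}(\theta)$ and applying the uniform bound yields $\sup_{\theta\in\Theta}|e_t(\theta)|\le\|E_t(\theta)\|\le C\rho^t\zeta$, which is the first inequality.

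For the derivative, differentiate \eqref{aeq1} to obtain $d_t(\theta):=\partial h_t(\theta)/\partial\theta=z_t(\theta)+\sum_{j=1}^p\beta_j d_{t-j}(\theta)$, where $z_t(\theta)=(1,x_{t-1}^2,\dots,x_{t-q}^2,h_{t-1}(\theta),\dots,h_{t-p}(\theta))^\prime$, and the analogous recursion for $\widetilde d_t(\theta)$ with $\widetilde z_t(\theta)$. Subtracting, the difference $g_t(\theta)=\widetilde d_t(\theta)-d_t(\theta)$ satisfies, for $t>\max(p,q)$, the \emph{same} companion recursion, now forced by the vector $F_t(\theta)$ whose first $q+1$ entries vanish and whose last $p$ entries are $e_{t-1}(\theta),\dots,e_{t-p}(\theta)$; by the first part, $\|F_t(\theta)\|\le C\rho^t\zeta$ uniformly in $\theta$. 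Solving by variation of constants, $G_t(\theta)=B(\theta)^{\,t-t_0}G_{t_0}(\theta)+\sum_{s}B(\theta)^{\,t-s}F_s(\theta)$, and bounding with $\sup_{\Theta}\|B(\theta)^k\|\le C\rho^k$ together with $\|F_s\|\le C\rho^s\zeta$, the convolution $\sum_s\rho^{t-s}\rho^s\le Ct\rho^t$ is absorbed into a slightly larger rate. After relabeling $\rho$, this gives $\sup_{\theta\in\Theta}\|g_t(\theta)\|\le C\rho^t\zeta$, establishing the second inequality and completing the proof. The only genuinely delicate point is the uniform spectral-radius/geometric bound in the second paragraph; everything else is bookkeeping on finitely many initial terms and a standard Duhamel estimate.
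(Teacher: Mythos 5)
Your argument is correct, and it is worth noting that the paper itself gives no proof at all for this lemma --- it simply points to the method used for Equations (6) and (7) in the proof of Theorem 1 of \cite{Zheng_Li_Li2016}. What you have written out (the initialization error propagating through the homogeneous part of the volatility recursion, killed by a uniform geometric bound on powers of the companion matrix, plus a Duhamel step for the derivative) is precisely the standard contraction argument that such references implement, so your proposal supplies the missing details rather than a genuinely different route. Two remarks. First, the step you flag as delicate can be made uniform over $\Theta$ without any compactness or spectral-radius machinery: since $\beta_j\geq 0$ and $\sum_{j=1}^{p}\beta_j\leq\rho_0$ on $\Theta$, an induction on $t$ using $\sum_{j=1}^{p}\beta_j\rho_1^{-j}\leq\rho_0\rho_1^{-p}=1$ with $\rho_1=\rho_0^{1/p}$ gives $\sup_{\theta\in\Theta}|e_t(\theta)|\leq M\rho_1^{t}$ directly; your compactness appeal is fine but, as stated, continuity of $\theta\mapsto\|B(\theta)^k\|$ for each fixed $k$ does not by itself yield one constant $C$ for all $k$ --- you would need the Gelfand-formula/finite-subcover/submultiplicativity argument. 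Second, a caveat you share with the lemma as stated: since the initialization is $n^{-1}\sum_{t=1}^{n}x_t^2$ and Assumption \ref{assum1} guarantees only $E|x_t|^{2\delta_0}<\infty$, your $\zeta$ depends on $n$, and the $c_r$-inequality bound $E|\bar{x}^2|^{\delta_0}\leq n^{1-\delta_0}E|x_1|^{2\delta_0}$ is finite for each $n$ but not uniform in $n$; this matches the lemma's literal claim ($\zeta$ independent of $t$ only), so it is not a gap in your proof, though it matters when the lemma is invoked with $n\rightarrow\infty$.
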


\begin{proof}[Proof of Lemma \ref{lem3}]
	The lemma can be proved by a method similar to that for Equations (6) and (7) in the proof of Theorem 1 in \cite{Zheng_Li_Li2016}.
\end{proof}

\begin{proof}[Proof of Theorem \ref{thm1}]	
	Let $L_n(\theta)=\sum_{t=1}^{n}\widetilde h_{t}^{-1}\rho_{\tau}(y_t-\theta^{\prime}\widetilde{z}_t)$ and
	$\breve{L}_n(\theta)=\sum_{t=1}^{n}\widetilde h_{t}^{-1}\rho_{\tau}(y_t-\theta^{\prime}\breve{z}_t)$.
	Notice that  for $x\neq 0$,
	\begin{equation}\label{identity}
		\rho_{\tau}(x-y)-\rho_{\tau}(x)=-y\psi_{\tau}(x)+\int_{0}^{y}[I(x\leq s)-I(x\leq 0)]ds,
	\end{equation}
	where $\psi_{\tau}(x)=\tau-I(x<0)$; see \cite{Knight1998}.
	Then, for any fixed $u\in \mathbb{R}^{p+q+1}$,
	\begin{equation}\label{L1_L2}
		L_n(\theta_{\tau 0}+n^{-1/2}u)-\breve{L}_n(\theta_{\tau 0})=-L_{1n}(u)+L_{2n}(u),
	\end{equation}
	where
	\begin{align*}
		L_{1n}(u)&= \sum_{t=1}^{n}\psi_{\tau}(\breve{e}_{t,\tau})\widetilde{h}^{-1}_t \left[(\theta_{\tau 0}+n^{-1/2}u)^{\prime}\widetilde{z}_t-\theta_{\tau 0}^{\prime}\breve{z}_t\right], \\
		L_{2n}(u)&=\sum_{t=1}^{n}\widetilde{h}^{-1}_t\int_{0}^{(\theta_{\tau 0}+n^{-1/2}u)^{\prime}\widetilde{z}_t-\theta_{\tau 0}^{\prime}\breve{z}_t}\left[I(\breve{e}_{t,\tau}\leq s)-I(\breve{e}_{t,\tau}\leq 0)\right]ds,
	\end{align*}
	and $\breve{e}_{t,\tau}= y_t-\theta_{\tau 0}^{\prime}\breve{z}_t$. Let  $u^{(j)}$ be the $(j+q+1)$-th element of $u$, and denote $\beta_{\tau0}^{(j)}=b_{\tau}\beta_{0j}$, for $j=1,\dots, p$. It can be verified that
	\begin{equation}\label{beq1}
		(\theta_{\tau 0}+n^{-1/2}u)^{\prime}\widetilde{z}_t-\theta_{\tau 0}^{\prime}\breve{z}_t =\xi_{1nt}(\widetilde{\theta}_n)+\xi_{2nt}(\widetilde{\theta}_n)+\xi_{3nt}(\widetilde{\theta}_n),
	\end{equation}
	where
	\begin{align*}
	\begin{split}
		\xi_{1nt}(\theta)&=n^{-1/2}u^\prime z_t+\sum_{j=1}^{p}\beta_{\tau0}^{(j)}\frac{\partial h_{t-j}(\theta_0)}{\partial\theta^\prime}(\theta-\theta_0),\\
		 \xi_{2nt}(\theta)&=\frac{1}{\sqrt{n}}\sum_{j=1}^{p}u^{(j)}[h_{t-j}(\theta)-h_{t-j}]+\sum_{j=1}^{p}\beta_{\tau0}^{(j)}\bigg[h_{t-j}(\theta)-h_{t-j}-\frac{\partial h_{t-j}(\theta_0)}{\partial\theta^\prime}(\theta-\theta_0)\bigg],\\
		 \xi_{3nt}(\theta) &=\frac{1}{\sqrt{n}}\sum_{j=1}^{p}u^{(j)}[\widetilde{h}_{t-j}(\theta)-h_{t-j}(\theta)] \\ &\hspace{5mm}+\sum_{j=1}^{p}\beta_{\tau0}^{(j)}\left \{[\widetilde{h}_{t-j}(\theta)-h_{t-j}(\theta)]-[\widetilde{h}_{t-j}(\theta_0)-h_{t-j}]\right \}.
		\end{split}
	\end{align*}
	For any $M>0$, denote $\Theta_n=\Theta_n(M)=\{\theta\in\Theta: \|\theta-\theta_0\|\leq n^{-1/2}M\}$.
	Using the Taylor expansion, it holds that
	\begin{equation}\label{beq11}
		\sup_{\theta\in\Theta_n}|\xi_{2nt}(\theta)|
		\leq \frac{M}{n}\sum_{j=1}^{p}|u^{(j)}| \sup_{\theta \in \Theta_n} \left \|\frac{\partial h_{t-j}(\theta)}{\partial\theta} \right\|+\frac{M^2}{2n}\sum_{j=1}^{p}
		|\beta_{\tau0}^{(j)}| \sup_{\theta \in \Theta_n} \left \|\frac{\partial^2h_{t-j}(\theta)}{\partial\theta\partial\theta^{\prime}} \right\|,
	\end{equation}
	and by Lemma \ref{lem3},
	\begin{align}
		\begin{split}\label{beq12}
			&\sup_{\theta \in \Theta_n}|\xi_{3nt}(\theta)|\\
			&\hspace{5mm}\leq \frac{1}{\sqrt{n}}\sum_{j=1}^{p}
			\bigg[|u^{(j)}|\sup_{\theta \in \Theta} |\widetilde{h}_{t-j}(\theta) - h_{t-j}(\theta)|+ M|\beta_{\tau0}^{(j)}| \sup_{\theta \in \Theta} \left \|\frac{\partial\widetilde{h}_{t-j}(\theta)}{\partial\theta} - \frac{\partial h_{t-j}(\theta)}{\partial\theta}\right \|\bigg]\\
			&\hspace{5mm}\leq n^{-1/2}C(M)\rho^t\zeta.
		\end{split}
	\end{align}
	 Moreover,
	\begin{equation}\label{beq5}
		\breve{e}_{t,\tau}=(\varepsilon_t-b_{\tau})h_t+a_t, \hspace{5mm}\text{where}\hspace{5mm}
		a_t=\sum_{j=1}^{p}\beta_{\tau0}^{(j)}[h_{t-j}-\widetilde{h}_{t-j}(\theta_0)] \in \mathcal{F}_0.
	\end{equation}
	
	We first consider $L_{1n}(u)$, which can be decomposed into four parts,
	\begin{equation}\label{L1a}
		 L_{1n}(u)=\sum_{t=1}^{n}A_{1nt}(\widetilde{\theta}_n)+\sum_{t=1}^{n}A_{2nt}(\widetilde{\theta}_n)+\sum_{t=1}^{n}A_{3nt}(\widetilde{\theta}_n)+\sum_{t=1}^{n}A_{4nt}(\widetilde{\theta}_n),
	\end{equation}
	where
\[A_{1nt}(\theta)=\psi_{\tau}(\breve{e}_{t,\tau})\widetilde{h}_t^{-1}(\theta)\xi_{3nt}(\theta)+ \psi_{\tau}(\breve{e}_{t,\tau}) [\widetilde{h}_t^{-1}(\theta)-h_t^{-1}(\theta)][\xi_{1nt}(\theta)+\xi_{2nt}(\theta)],\]
\[A_{2nt}(\theta)= [\psi_{\tau}(\breve{e}_{t,\tau})-\psi_{\tau}(\varepsilon_t-b_{\tau})] h_t^{-1}(\theta)[\xi_{1nt}(\theta)+\xi_{2nt}(\theta)],\]
\[A_{3nt}(\theta)=\psi_{\tau}(\varepsilon_t-b_{\tau})h_t^{-1}(\theta) \xi_{2nt}(\theta), \hspace{5mm}\text{and}\hspace{5mm} A_{4nt}(\theta)=\psi_{\tau}(\varepsilon_t-b_{\tau})h_t^{-1}(\theta)\xi_{1nt}(\theta).
\]
	Note that $\inf_{\theta\in\Theta}h_t(\theta)\geq \underline{w}$ and $\inf_{\theta\in\Theta}\widetilde{h}_t(\theta)\geq \underline{w}$. By Lemma \ref{lem3}, \eqref{beq11} and \eqref{beq12}, we can show that
	\begin{equation}\label{A1nt}
	\begin{split}
		\sup_{\theta\in\Theta_n}\left |\sum_{t=1}^{n}A_{1nt}(\theta)\right |&\leq \frac{1}{\underline{w}}\sum_{t=1}^{n}\sup_{\theta\in\Theta_n}|\xi_{3nt}(\theta)| +\frac{C\zeta}{\underline{w}^2}\sum_{t=1}^{n}\rho^t\sup_{\theta\in\Theta_n}(|\xi_{1nt}(\theta)| +|\xi_{2nt}(\theta)|)\\
		&=o_p(1),
	\end{split}
	\end{equation}
	which, together with the fact that $\sqrt{n}(\widetilde{\theta}_n-\theta_0)=O_p(1)$,  implies that
	\begin{equation}\label{A1}
		\sum_{t=1}^{n}A_{1nt}(\widetilde{\theta}_n)=o_p(1).
	\end{equation}
	
	Note that by Lemma \ref{lem3} and Assumption \ref{assum2}, we have
	\begin{equation*}
		|F(b_{\tau})-F(b_{\tau}-h_t^{-1}a_t)|\leq \sup_{x\in\mathbb{R}}f(x)\sum_{j=1}^{p}\frac{|\beta_{\tau0}^{(j)}|}{\underline{w}}|h_{t-j}(\theta_0)-\widetilde{h}_{t-j}(\theta_0) |\leq C\rho^t\zeta.
	\end{equation*}
	It then follows from \eqref{beq5} that
	\begin{align}\label{beq10}
		 E|\psi_{\tau}(\breve{e}_{t,\tau})-\psi_{\tau}(\varepsilon_t-b_{\tau})|&=E|F(b_{\tau})-F(b_{\tau}-h_t^{-1}a_t)|\notag\\
		&=E[|F(b_{\tau})-F(b_{\tau}-h_t^{-1}a_t)|I(C\rho^t\zeta\leq \rho^{t/2})]\notag\\
		&\hspace{5mm}+E[|F(b_{\tau})-F(b_{\tau}-h_t^{-1}a_t)|I(C\rho^t\zeta > \rho^{t/2})]\notag\\
		&\leq \rho^{t/2}+ \mathrm{Pr}(C\rho^t\zeta > \rho^{t/2})\leq \rho^{t/2}+C\rho^{{\delta_0} t/2},
	\end{align}
	where we used the Markov inequality and the fact that $E|\zeta|^{\delta_0}<\infty$.
	Moreover,
	\begin{equation}\label{beq2}
		\|h^{-1}_tz_t\|\leq\frac{\sqrt{p+q+1}}{\underline{w}},
	\end{equation}
	\begin{equation}\label{beq3}
		\sup_{\theta_1, \theta_2\in\Theta_n}\left|\frac{\xi_{1nt}(\theta_2)}{h_t(\theta_1)}\right|\leq\frac{| h_t^{-1}u^\prime z_t|}{\sqrt{n}}\sup_{\theta\in\Theta_n}\frac{h_t}{h_t(\theta)}+ \frac{M}{\underline{w}\sqrt{n}}\sum_{j=1}^{p}|\beta_{\tau0}^{(j)}|\sup_{\theta\in\Theta_n}\left \|\frac{1}{h_{t-j}(\theta)}\frac{\partial h_{t-j}(\theta_0)}{\partial\theta}\right \|,
	\end{equation}
	and by the Taylor expansion,
	\begin{equation}\label{beq13}
		\begin{split}
			\sup_{\theta_1, \theta_2\in\Theta_n}\left|\frac{\xi_{2nt}(\theta_2)}{h_t(\theta_1)}\right|&\leq
			\frac{M}{\underline{w}n}\sum_{j=1}^{p}|u^{(j)}| \sup_{\theta_1,\theta_2 \in \Theta_n} \left \|\frac{1}{h_{t-j}(\theta_1)}\frac{\partial h_{t-j}(\theta_2)}{\partial\theta} \right\|\\
			&\hspace{5mm}+\frac{M^2}{2\underline{w}n}\sum_{j=1}^{p}
			|\beta_{\tau0}^{(j)}| \sup_{\theta_1, \theta_2 \in \Theta_n} \left \|\frac{1}{h_{t-j}(\theta_1)}\frac{\partial^2h_{t-j}(\theta_2)}{\partial\theta\partial\theta^{\prime}} \right\|.
		\end{split}
	\end{equation}
	As a result, by the H\"{o}lder inequality, Lemma \ref{lem2} and \eqref{beq10}-\eqref{beq13}, we have
	\begin{equation*}
	\begin{split}
		E\sup_{\theta\in\Theta_n}\left |\sum_{t=1}^{n}A_{2nt}(\theta)\right |&\leq \sum_{t=1}^{n}\left[E|\psi_{\tau}(\breve{e}_{t,\tau})-\psi_{\tau}(\varepsilon_t-b_{\tau})|\right]^{1/2} \left[E\sup_{\theta\in\Theta_n}\left(\frac{|\xi_{1nt}(\theta)|+|\xi_{2nt}(\theta)|}{h_t(\theta)}\right)^2\right]^{1/2}\\&=o(1),
	\end{split}
	\end{equation*}
	which, together with the fact that $\sqrt{n}(\widetilde{\theta}_n-\theta_0)=O_p(1)$,  implies that
	\begin{equation}\label{A2}
		\sum_{t=1}^{n}A_{2nt}(\widetilde{\theta}_n)=o_p(1).
	\end{equation}
	
	Applying the Taylor expansion to $h^{-1}_t(\theta)$ and $\xi_{2nt}(\theta)$ respectively, we have
	\begin{equation}\label{beq15}
	h_t^{-1}(\theta)\xi_{2nt}(\theta)=\xi_{4nt}(\theta) + \xi_{5nt}(\theta),
	\end{equation}
	where
	\begin{align*}
		\xi_{4nt}(\theta) &= \frac{1}{\sqrt{n}}\sum_{j=1}^{p}\frac{u^{(j)}}{h_t}\frac{\partial h_{t-j}(\theta_0)}{\partial\theta^\prime}(\theta-\theta_0)+\dfrac{1}{2}(\theta-\theta_0)^\prime\sum_{j=1}^{p}\frac{\beta_{\tau0}^{(j)}}{h_t}\frac{\partial^2h_{t-j}(\theta_0)}{\partial\theta\partial\theta^\prime}(\theta-\theta_0),\\
		\xi_{5nt}(\theta)&=-\frac{\xi_{2nt}(\theta)}{h_t^2(\theta^*_1)}\frac{\partial h_t(\theta^*_1)}{\partial\theta^\prime}(\theta-\theta_0)+\frac{(\theta-\theta_0)^\prime}{2\sqrt{n}}\sum_{j=1}^{p}\frac{u^{(j)}}{h_t}\frac{\partial^2h_{t-j}(\theta^*_2)}{\partial\theta\partial\theta^\prime}(\theta-\theta_0)\\
		&\hspace{5mm}+\dfrac{1}{6}\sum_{j=1}^{p}\sum_{i, k, \ell=1}^{p+q+1}\frac{\beta_{\tau0}^{(j)}}{h_t}\frac{\partial^3h_{t-j}(\theta^*_2)}{\partial\theta_i\partial\theta_k\partial\theta_\ell}(\theta_i-\theta_{0i})(\theta_k-\theta_{0k})(\theta_\ell-\theta_{0\ell}),
	\end{align*}
	with $\theta_1^*$ and $\theta_2^*$ both between $\theta$ and $\theta_0$.
	Then, it follows from Lemma \ref{lem2}, the ergodic theorem and $\sqrt{n}(\widetilde{\theta}_n-\theta_0)=O_p(1)$ that
\begin{equation}\label{beq4}
\sum_{t=1}^{n}\psi_{\tau}(\varepsilon_t-b_{\tau})\xi_{4nt}(\widetilde{\theta}_n)=o_p(1)
\end{equation}
	and
\begin{equation}\label{beq6}
E\sup_{\theta\in \Theta_n}\left |\sum_{t=1}^{n}\psi_{\tau}(\varepsilon_t-b_{\tau})\xi_{5nt}(\theta)\right |\leq \sum_{t=1}^{n}E\sup_{\theta\in \Theta_n}|\xi_{5nt}(\theta)|=O(n^{-1/2}),
\end{equation}
	which implies
	\begin{equation}\label{A3}
		\sum_{t=1}^{n}A_{3nt}(\widetilde{\theta}_n)=o_p(1).
	\end{equation}
	By a method similar to that for $\sum_{t=1}^{n}A_{3nt}(\widetilde{\theta}_n)$, we can show that
	\[ \sum_{t=1}^{n}\psi_{\tau}(\varepsilon_t-b_{\tau})[h_t^{-1}(\widetilde{\theta}_n)-h_t^{-1}]\xi_{1nt}(\widetilde{\theta}_n)=o_p(1), \]
	which  implies
	\begin{align}\label{A4}
		 \sum_{t=1}^{n}A_{4nt}(\widetilde{\theta}_n)&=\sum_{t=1}^{n}\psi_{\tau}(\varepsilon_t-b_{\tau})h_t^{-1}\xi_{1nt}(\widetilde{\theta}_n)+o_p(1)=u^\prime T_{1n}+T_{2n}+o_p(1),
	\end{align}
	where
	\[ T_{1n}=\frac{1}{\sqrt{n}}\sum_{t=1}^{n}\psi_{\tau}(\varepsilon_t-b_{\tau})\frac{ z_t}{h_t} \hspace{2mm}\text{and}\hspace{2mm}T_{2n}=\sqrt{n}(\widetilde{\theta}_n-\theta_0)^\prime\frac{1}{\sqrt{n}} \sum_{t=1}^{n}\psi_{\tau}(\varepsilon_t-b_{\tau})\sum_{j=1}^{p}\frac{\beta_{\tau0}^{(j)}}{h_t}\frac{\partial h_{t-j}(\theta_0)}{\partial\theta}. \]
	Combining \eqref{L1a}, \eqref{A1}, \eqref{A2}, \eqref{A3}, and \eqref{A4}, we have
	\begin{equation}\label{L1}
		L_{1n}(u)=u^\prime T_{1n}+T_{2n}+o_p(1).
	\end{equation}
	
	Next we consider $L_{2n}(u)$.  For simplicity, denote
	$I_t^*(s)=I(\breve{e}_{t,\tau}\leq s)-I(\breve{e}_{t,\tau}\leq 0)$.
	From \eqref{beq1}, we have the decomposition
	\begin{equation}\label{L2b}
		 L_{2n}(u)=\sum_{t=1}^{n}B_{1nt}(\widetilde{\theta}_n)+\sum_{t=1}^{n}B_{2nt}(\widetilde{\theta}_n)+\sum_{t=1}^{n}B_{3nt}(\widetilde{\theta}_n)+\sum_{t=1}^{n}B_{4nt}(\widetilde{\theta}_n),
	\end{equation}
	where
	\begin{align*}
		 B_{1nt}(\theta)&=\widetilde{h}_t^{-1}(\theta)\int_{\xi_{1nt}(\theta)+\xi_{2nt}(\theta)}^{\xi_{1nt}(\theta)+\xi_{2nt}(\theta)+\xi_{3nt}(\theta)}I_t^*(s)ds+[\widetilde{h}^{-1}_t(\theta)-h^{-1}_t(\theta)]\int_{0}^{\xi_{1nt}(\theta)+\xi_{2nt}(\theta)}I_t^*(s)ds,\\
		 B_{2nt}(\theta)&=h^{-1}_t(\theta)\int_{\xi_{1nt}(\theta)}^{\xi_{1nt}(\theta)+\xi_{2nt}(\theta)}I_t^*(s)ds,\\
		B_{3nt}(\theta)&=[h^{-1}_t(\theta)-h^{-1}_t]\int_{0}^{\xi_{1nt}(\theta)}I_t^*(s)ds, \hspace{5mm}\text{and}\hspace{5mm}B_{4nt}(\theta)=h^{-1}_t\int_{0}^{\xi_{1nt}(\theta)}I_t^*(s)ds.
	\end{align*}
	
	By a method similar to that for \eqref{A1}, we can show that
	\begin{equation}\label{B1nt}
	\begin{split}
		\sup_{\theta\in\Theta_n}\left |\sum_{t=1}^{n}B_{1nt}(\theta)\right |&\leq \sum_{t=1}^{n}\sup_{\theta\in\Theta_n}\left [ \frac{|\xi_{3nt}(\theta)|}{\widetilde{h}_t(\theta)}+\left |\frac{1}{\widetilde{h}_t(\theta)}-\frac{1}{h_t(\theta)}\right |\left (|\xi_{1nt}(\theta)|+|\xi_{2nt}(\theta)|\right )\right ]\\&=o_p(1),
	\end{split}
	\end{equation}
	which, together with the fact that $\sqrt{n}(\widetilde{\theta}_n-\theta_0)=O_p(1)$, implies
	\begin{equation}\label{B1}
	\sum_{t=1}^{n}	B_{1nt}(\widetilde{\theta}_n)=o_p(1).
	\end{equation}
	
	From \eqref{beq5}, \eqref{beq3}, \eqref{beq13}, Assumption \ref{assum2} and the H\"{o}lder inequality, we have
	\begin{align*}
		& E\sup_{\theta\in\Theta_n}\left |\sum_{t=1}^{n}B_{2nt}(\theta)\right |\\
		&\hspace{5mm}\leq E \sum_{t=1}^{n}	 \sup_{\theta\in\Theta_n}|h^{-1}_t(\theta)\xi_{2nt}(\theta)| I\left ( |\breve{e}_{t,\tau}|\leq \sup_{\theta\in\Theta_n}\left (|\xi_{1nt}(\theta)|+|\xi_{2nt}(\theta)|\right )\right)\\
		& \hspace{5mm}\leq \sqrt{2\sup_{x\in\mathbb{R}}f(x)}\sum_{t=1}^{n} \left [E\sup_{\theta\in\Theta_n}\left |\frac{\xi_{2nt}(\theta)}{h_t(\theta)}\right |^2\right ]^{1/2} \left [E\sup_{\theta\in\Theta_n}\frac{\left (|\xi_{1nt}(\theta)|+|\xi_{2nt}(\theta)|\right )}{h_t}\right ]^{1/2}\\
		&\hspace{5mm}=o(1),
	\end{align*}
	which, combined with the fact that $\sqrt{n}(\widetilde{\theta}_n-\theta_0)=O_p(1)$, yields
	\begin{equation}\label{B2}
		\sum_{t=1}^{n}B_{2nt}(\widetilde{\theta}_n)=o_p(1).
	\end{equation}

	Similarly, it follows from \eqref{beq5}, \eqref{beq3},  Assumption \ref{assum2} and the H\"{o}lder inequality  that
\[E\sup_{\theta\in\Theta_n}\left |\sum_{t=1}^{n}B_{3nt}(\theta)\right |\leq E \sum_{t=1}^{n}	 \sup_{\theta\in\Theta_n} \left |[h^{-1}_t(\theta) -h_t^{-1} ]\xi_{1nt}(\theta)\right | I\left (|\breve{e}_{t,\tau}|\leq \sup_{\theta\in\Theta_n}|\xi_{1nt}(\theta)|\right)=o(1),\]
	and then
	\begin{equation}\label{B3}
		\sum_{t=1}^{n}B_{3nt}(\widetilde{\theta}_n)=o_p(1).
	\end{equation}
	
	Finally, for $\sum_{t=1}^{n}B_{4nt}(\widetilde{\theta}_n)$, denote
	\[B_{4nt}^*(\theta)=h_t^{-1}\int_{0}^{\xi_{1nt}(\theta)}\left [F(b_{\tau}-h_t^{-1}a_t+h_t^{-1}s)-F(b_{\tau}-h_t^{-1}a_t)\right ]ds,\]
	and we first show that
	\begin{equation}\label{B4_1}
		 \sum_{t=1}^{n}B_{4nt}(\widetilde{\theta}_n)=\sum_{t=1}^{n}B_{4nt}^*(\widetilde{\theta}_n)+o_p(1).
	\end{equation}
	For any $v\in\mathbb{R}^{p+q+1}$, let $\eta_t(v) = h_t^{-1}\int_{0}^{\xi_{1nt}(\theta_0+n^{-1/2}v)} I_t^*(s)ds$,
	and denote
	\[S_n(v)=\sum_{t=1}^{n}\left [B_{4nt}(\theta_0+n^{-1/2}v)-B_{4nt}^*(\theta_0+n^{-1/2}v) \right ]= \sum_{t=1}^{n}\left \{\eta_t(v) -E[\eta_t(v) |\mathcal{F}_{t-1}]\right \}.\]
	For any fixed $v$ such that $\|v\|\leq M$, by \eqref{beq3}, Lemma \ref{lem2} and Assumption \ref{assum2}, we have
	\begin{align}
		E\eta_t^2(v)&\leq E\left \{ \frac{|\xi_{1nt}(\theta_0+n^{-1/2}v)|}{h_t^2}\int_{0}^{\xi_{1nt}(\theta_0+n^{-1/2}v)} [F(b_{\tau}-\frac{a_t}{h_t}+\frac{s}{h_t})-F(b_{\tau}-\frac{a_t}{h_t})]ds\right \}\notag\\
		&\leq \dfrac{1}{2}\sup_{x\in\mathbb{R}}f(x)E|h_t^{-1}\xi_{1nt}(\theta_0+n^{-1/2}v)|^3 \leq n^{-3/2}C, \label{forCor1}
	\end{align}
	implying that
	\begin{equation}\label{beq14}
		ES_n^2(v)\leq \sum_{t=1}^{n}E\eta_t^2(v)=o(1).
	\end{equation}
	Note that
	\[ h_t^{-1}\sup_{\|v_1-v_2\|\leq \delta} |\xi_{1nt}(\theta_0+n^{-1/2}v_1)-\xi_{1nt}(\theta_0+n^{-1/2}v_2)| \leq \frac{\delta}{\underline{w}\sqrt{n}}\sum_{j=1}^{p}|\beta_{\tau0}^{(j)}|\left \|\frac{1}{h_{t-j}}	 \frac{\partial h_{t-j}(\theta_0)}{\partial\theta}\right \|.\]
	Then, for any $v_1, v_2 \in \mathbb{R}^{p+q+1}$ such that $\|v_1\|, \|v_2\|\leq M$, in view of \eqref{beq5}, \eqref{beq3}, Lemma \ref{lem2} and Assumption \ref{assum2}, we have
\begin{align*}
& E\sup_{\|v_1-v_2\|\leq \delta} |\eta_{t}(v_1)-\eta_{t}(v_2)| \\
&\hspace{5mm}=E\left [h_t^{-1}\sup_{\|v_1-v_2\|\leq \delta}\left | \int_{\xi_{1nt}(\theta_0+n^{-1/2}v_2)}^{\xi_{1nt}(\theta_0+n^{-1/2}v_1)}I_t^*(s)ds\right |\right]\\
&\hspace{5mm}\leq E \bigg[h_t^{-1}\sup_{\|v_1-v_2\|\leq \delta} |\xi_{1nt}(\theta_0+n^{-1/2}v_1)-\xi_{1nt}(\theta_0+n^{-1/2}v_2)|I\big(|\breve{e}_{t,\tau}|\leq\sup_{\theta\in\Theta_n}|\xi_{1nt}(\theta)|\big) \bigg]\\
&\hspace{5mm}\leq \frac{2\delta}{\underline{w}\sqrt{n}}\sup_{x\in\mathbb{R}}f(x)E\left ( \sup_{\theta\in\Theta_n}\frac{|\xi_{1nt}(\theta)|}{h_t}\sum_{j=1}^{p}|\beta_{\tau0}^{(j)}|\left \|\frac{1}{h_{t-j}}	 \frac{\partial h_{t-j}(\theta_0)}{\partial\theta}\right \| \right ) \leq n^{-1}\delta C,
\end{align*}
	and hence
	\[ E\sup_{\|v_1-v_2\|\leq \delta} |S_n(v_1)-S_n(v_2)| \leq 2\sum_{t=1}^{n} E\sup_{\|v_1-v_2\|\leq \delta} |\eta_{t}(v_1)-\eta_{t}(v_2)|\leq 2\delta C,\]
	which, together with \eqref{beq14} and the finite covering theorem, implies $\sup_{\|v\|\leq M}|S_n(v)|=o_p(1)$, and then \eqref{B4_1} holds.
	
	By  elementary calculation and the Taylor expansion, we have
	\begin{align}\label{B4}
		 \sum_{t=1}^{n}B_{4nt}^*(\theta)&=\sum_{t=1}^{n}h_t^{-1}\int_{0}^{\xi_{1nt}(\theta)}f(b_{\tau}-h_t^{-1}a_t)h_t^{-1}s ds+R_{1n}(\theta)\notag\\
		 &=\frac{1}{2}f(b_{\tau})\sum_{t=1}^{n}h_t^{-2}\xi_{1nt}^2(\theta)+R_{2n}(\theta)+R_{1n}(\theta),
	\end{align}
	where
	 \[R_{1n}(\theta)=\frac{1}{2}\sum_{t=1}^{n}h_t^{-3}\int_{0}^{\xi_{1nt}(\theta)}\dot{f}(b_{\tau,t}^*(s))s^2 ds,\]
	with $b_{\tau,t}^*(s)$ lying between $b_{\tau}-h_t^{-1}a_t$ and $b_{\tau}-h_t^{-1}a_t+h_t^{-1}s$, and
	\[
	 R_{2n}(\theta)=\frac{1}{2}\sum_{t=1}^{n}h_t^{-2}\xi_{1nt}^2(\theta)[f(b_{\tau}-h_t^{-1}a_t)-f(b_{\tau})].
	\]
	Note that
	\[\sup_{\theta\in\Theta_n}|R_{1n}(\theta)|\leq \frac{1}{6}\sup_{x\in\mathbb{R}}|\dot{f}(x)|\sum_{t=1}^{n}\sup_{\theta\in\Theta_n}\left |\frac{\xi_{1nt}(\theta)}{h_t}\right |^3,\]
	and by Lemma \ref{lem3},
	\[\sup_{\theta\in\Theta_n}|R_{2n}(\theta)|\leq  \frac{1}{2}C\sup_{x\in\mathbb{R}}|\dot{f}(x)|\zeta\sum_{t=1}^{n}\rho^t\sup_{\theta\in\Theta_n}\left |\frac{\xi_{1nt}(\theta)}{h_t}\right |^2.\]
	Then, by \eqref{beq3}, Lemma \ref{lem2} and Assumption \ref{assum2}, we have
	\[
	R_{1n}(\widetilde{\theta}_n)=o_p(1) \hspace{5mm}\text{and}\hspace{5mm}R_{2n}(\widetilde{\theta}_n)=o_p(1).
	\]
Hence, by \eqref{L2b}, \eqref{B1}-\eqref{B4_1} and \eqref{B4}, together with the ergodic theorem, we have
	\begin{equation}\label{L2}
		\begin{split}
			 L_{2n}(u)&=\frac{1}{2}f(b_{\tau})\sum_{t=1}^{n}h_t^{-2}\xi_{1nt}^2(\widetilde{\theta}_n)+o_p(1)\\
			&=\frac{1}{2}f(b_{\tau})u^\prime\Omega_2 u+b_{\tau} f(b_{\tau})u^\prime\Gamma_2\sqrt{n}(\widetilde{\theta}_n-\theta_0)+T_{3n}+o_p(1),
		\end{split}
	\end{equation}
	where
	 \[T_{3n}=\dfrac{1}{2}f(b_{\tau})(\widetilde{\theta}_n-\theta_0)^\prime\sum_{t=1}^{n}\sum_{j_1=1}^{p}\sum_{j_2=1}^{p} \beta_{\tau0}^{(j_1)}\beta_{\tau0}^{(j_2)}\frac{1}{h_t^2}\frac{\partial h_{t-j_1}(\theta_0)}{\partial\theta}\frac{\partial h_{t-j_2}(\theta_0)}{\partial\theta^\prime}(\widetilde{\theta}_n-\theta_0).
	\]
	
	Combining  \eqref{L1_L2}, \eqref{L1} and \eqref{L2} yields that
	\begin{align*}
		L_n(\theta_{\tau 0}+n^{-1/2}u)-\breve{L}_n(\theta_{\tau 0})=&-u^\prime \left [T_{1n}-b_{\tau} f(b_{\tau})\Gamma_2\sqrt{n}(\widetilde{\theta}_n-\theta_0) \right ] +\frac{1}{2}f(b_{\tau})u^\prime\Omega_2 u\\
		&-T_{2n}+T_{3n}+o_p(1),
	\end{align*}
	where
	\begin{equation}\label{qmle}
		 \sqrt{n}(\widetilde{\theta}_{n}-\theta_{0})=-\frac{J^{-1}}{\sqrt{n}}\sum_{t=1}^{n}\frac{1-|\varepsilon_t|}{h_t}\frac{\partial h_t(\theta_0)}{\partial\theta}+o_p(1);
	\end{equation}
	see \cite{Francq_Zakoian2004}.
	Applying the central limit theorem and Corollary 2 in \cite{Knight1998}, together with the convexity of $L_n(\cdot)$, we have
	\begin{equation}\label{bahadur2}
	\sqrt{n}(\widehat{\theta}_{\tau n}-\theta_{\tau 0})=\frac{\Omega_2^{-1}}{f(b_{\tau})}T_{1n}- b_{\tau}\Omega_2^{-1} \Gamma_2 \sqrt{n}(\widetilde{\theta}_n-\theta_0)+o_p(1)\rightarrow_d N(0, \Sigma_1),
	\end{equation}
where $T_{1n}=n^{-1/2}\sum_{t=1}^{n}\psi_{\tau}(\varepsilon_t-b_{\tau}) z_t/h_t$.
	The proof is complete.
\end{proof}

\begin{proof}[Proof of Theorem \ref{thm3}]
	Let $L_n^*(\theta)=\sum_{t=1}^{n}\omega_t\widetilde h_{t}^{-1}\rho_{\tau}(y_t-\theta^{\prime}\widetilde{z}_t^*)$ and
	$\breve{L}_n^*(\theta)=\sum_{t=1}^{n}\omega_t\widetilde h_{t}^{-1}\rho_{\tau}(y_t-\theta^{\prime}\breve{z}_t)$.  For any fixed $u\in\mathbb{R}^{p+q+1}$, similar to \eqref{L1_L2},  it holds that
	\begin{equation}\label{L1_L2b}
	L_n^*(\theta_{\tau 0}+n^{-1/2}u)-\breve{L}_n^*(\theta_{\tau 0})=-L_{1n}^*(u)+L_{2n}^*(u),
	\end{equation}
	where
	\begin{align*}
	L_{1n}^*(u)&= \sum_{t=1}^{n}\omega_t\psi_{\tau}(\breve{e}_{t,\tau})\widetilde{h}^{-1}_t \left[(\theta_{\tau 0}+n^{-1/2}u)^{\prime}\widetilde{z}_t^*-\theta_{\tau 0}^{\prime}\breve{z}_t\right], \\
	L_{2n}^*(u)&=\sum_{t=1}^{n}\omega_t\widetilde{h}^{-1}_t\int_{0}^{(\theta_{\tau 0}+n^{-1/2}u)^{\prime}\widetilde{z}_t^*-\theta_{\tau 0}^{\prime}\breve{z}_t}\left[I(\breve{e}_{t,\tau}\leq s)-I(\breve{e}_{t,\tau}\leq 0)\right]ds,
	\end{align*}
	and
	\[
	(\theta_{\tau 0}+n^{-1/2}u)^{\prime}\widetilde{z}_t^*-\theta_{\tau 0}^{\prime}\breve{z}_t =\xi_{1nt}(\widetilde{\theta}_n^*)+\xi_{2nt}(\widetilde{\theta}_n^*) +\xi_{3nt}(\widetilde{\theta}_n^*).
	\]
	From the proof of Theorem \ref{thm1}, we have $\widetilde{J}=J+o_p(1)$, which together with \eqref{boots} implies
	\begin{equation}\label{beq}
	 \sqrt{n}(\widetilde{\theta}_n^*-\widetilde{\theta}_n)=-\frac{J^{-1}}{\sqrt{n}}\sum_{t=1}^{n}(\omega_t-1)\left (1-\frac{|y_t|}{h_t}\right )\frac{1}{h_t}\frac{\partial h_t(\theta_0)}{\partial\theta}+o_p^*(1),
	\end{equation}
	and
	\begin{equation}\label{proof1}
	\sqrt{n}(\widetilde{\theta}_n^*-\theta_0)=\sqrt{n}(\widetilde{\theta}_n^*-\widetilde{\theta}_n) +\sqrt{n}(\widetilde{\theta}_n-\theta_0) =O_p^*(1).
	\end{equation}
	
	Without any confusion, we redefine the functions $A_{int}$ with $1\leq i\leq 4$ from the proof of Theorem \ref{thm1} as follows,
	\begin{align*}
	A_{1nt}(\theta_1, \theta_2)&=\psi_{\tau}(\breve{e}_{t,\tau})\widetilde{h}_t^{-1}(\theta_1)\xi_{3nt}(\theta_2)+ \psi_{\tau}(\breve{e}_{t,\tau}) [\widetilde{h}_t^{-1}(\theta_1)-h_t^{-1}(\theta_1)][\xi_{1nt}(\theta_2)+\xi_{2nt}(\theta_2)],\\
	A_{2nt}(\theta_1,\theta_2)&= [\psi_{\tau}(\breve{e}_{t,\tau})-\psi_{\tau}(\varepsilon_t-b_{\tau})] h_t^{-1}(\theta_1)[\xi_{1nt}(\theta_2)+\xi_{2nt}(\theta_2)],\\A_{3nt}(\theta_1,\theta_2)&=\psi_{\tau}(\varepsilon_t-b_{\tau})h_t^{-1}(\theta_1) \xi_{2nt}(\theta_2),\hspace{2mm}\text{and}\hspace{2mm} A_{4nt}(\theta_1,\theta_2)=\psi_{\tau}(\varepsilon_t-b_{\tau})h_t^{-1}(\theta_1)\xi_{1nt}(\theta_2),
	\end{align*}
	as well as $B_{int}$ with $1\leq i\leq 3$ as follows,
	\begin{align*}
	 B_{1nt}(\theta_1,\theta_2)&=\widetilde{h}_t^{-1}(\theta_1)\int_{\xi_{1nt}(\theta_2)+\xi_{2nt}(\theta_2)}^{\xi_{1nt}(\theta_2)+\xi_{2nt}(\theta_2)+\xi_{3nt}(\theta_2)}I_t^*(s)ds\\
	 &\hspace{5mm}+[\widetilde{h}^{-1}_t(\theta_1)-h^{-1}_t(\theta_1)]\int_{0}^{\xi_{1nt}(\theta_2)+\xi_{2nt}(\theta_2)}I_t^*(s)ds,\\
	 B_{2nt}(\theta_1,\theta_2)&=h^{-1}_t(\theta_1)\int_{\xi_{1nt}(\theta_2)}^{\xi_{1nt}(\theta_2)+\xi_{2nt}(\theta_2)}I_t^*(s)ds,\hspace{5mm}\text{and}\\
	B_{3nt}(\theta_1,\theta_2)&=[h^{-1}_t(\theta_1)-h^{-1}_t]\int_{0}^{\xi_{1nt}(\theta_2)}I_t^*(s)ds,
	\end{align*}
	while the definition of $B_{4nt}(\cdot)$ is the same as in the proof of Theorem \ref{thm1}.
	
	By methods similar to \eqref{A1}, \eqref{A2}, \eqref{A3} and \eqref{A4} respectively, together with Assumption \ref{assum2}, Lemma \ref{lem3}, \eqref{beq11}, \eqref{beq12} and \eqref{beq}, we can show that
	\[
	\sum_{t=1}^{n}\omega_tA_{int}(\widetilde{\theta}_n, \widetilde{\theta}_n^*)=o_p^*(1),\hspace{5mm} 1\leq i\leq 3,
	\]
	and
	\[
	\sum_{t=1}^{n}\omega_tA_{4nt}(\widetilde{\theta}_n,\widetilde{\theta}_n^*) =\sum_{t=1}^{n}\omega_t\psi_{\tau}(\varepsilon_t-b_{\tau})h_t^{-1}\xi_{1nt}(\widetilde{\theta}_n^*) +o_p^*(1) =u^\prime T_{1n}^*+T_{2n}^*+o_p^*(1),
	\]
	where $T_{1n}^* =n^{-1/2}\sum_{t=1}^{n}\omega_t\psi_{\tau}(\varepsilon_t-b_{\tau})z_t/h_t$ and
	\[
	T_{2n}^*=\sqrt{n}(\widetilde{\theta}_n^*-\theta_0)^\prime\frac{1}{\sqrt{n}} \sum_{t=1}^{n}\omega_t\psi_{\tau}(\varepsilon_t-b_{\tau}) \sum_{j=1}^{p}\frac{\beta_{\tau0}^{(j)}}{h_t}\frac{\partial h_{t-j}(\theta_0)}{\partial\theta},
	\]
	where $\beta_{\tau0}^{(j)}=b_{\tau}\beta_{0j}$, $j=1,\dots, p$, is defined as in the proof of Theorem \ref{thm1}.
	As a result,
	\begin{align}
	\begin{split}\label{L1b}
	L_{1n}^*(u) &=\sum_{t=1}^{n}\omega_tA_{1nt}(\widetilde{\theta}_n, \widetilde{\theta}_n^*) +\sum_{t=1}^{n}\omega_tA_{2nt}(\widetilde{\theta}_n, \widetilde{\theta}_n^*) +\sum_{t=1}^{n}\omega_tA_{3nt}(\widetilde{\theta}_n, \widetilde{\theta}_n^*) +\sum_{t=1}^{n}\omega_tA_{4nt}(\widetilde{\theta}_n, \widetilde{\theta}_n^*)\\
	&=u^\prime T_{1n}^*+T_{2n}^*+o_p^*(1).
	\end{split}
	\end{align}
	Moreover, by methods similar to \eqref{B1}-\eqref{B3}, we can verify that
	\[
	\sum_{t=1}^{n}(\omega_t-1)B_{int}(\widetilde{\theta}_n, \widetilde{\theta}_n^*)=o_p^*(1),\hspace{2mm} 1\leq i\leq3,\hspace{2mm}\text{and}\hspace{2mm}\sum_{t=1}^{n}(\omega_t-1)B_{4nt}(\widetilde{\theta}_n^*) =o_p^*(1),
	\]
	which implies
	\[L_{2n}^*(u)=\sum_{t=1}^{n}B_{1nt}(\widetilde{\theta}_n, \widetilde{\theta}_n^*) +\sum_{t=1}^{n}B_{2nt}(\widetilde{\theta}_n, \widetilde{\theta}_n^*)+\sum_{t=1}^{n}B_{3nt}(\widetilde{\theta}_n, \widetilde{\theta}_n^*) +\sum_{t=1}^{n}B_{4nt}(\widetilde{\theta}_n^*)+o_p^*(1),\]
	and hence, similar to the proof of \eqref{L2}, it can be further verified that
	\begin{equation}\label{L2b2}
	L_{2n}^*(u)=\frac{1}{2}f(b_{\tau})u^\prime\Omega_2 u+b_{\tau} f(b_{\tau})u^\prime\Gamma_2\sqrt{n}(\widetilde{\theta}_n^*-\theta_0)+T_{3n}^*+o_p^*(1),
	\end{equation}
	where
	\[
	 T_{3n}^*=\dfrac{1}{2}f(b_{\tau})(\widetilde{\theta}_n^*-\theta_0)^\prime\sum_{t=1}^{n}\sum_{j_1=1}^{p}\sum_{j_2=1}^{p} \beta_{\tau0}^{(j_1)}\beta_{\tau0}^{(j_2)}\frac{1}{h_t^2}\frac{\partial h_{t-j_1}(\theta_0)}{\partial\theta}\frac{\partial h_{t-j_2}(\theta_0)}{\partial\theta^\prime}(\widetilde{\theta}_n^*-\theta_0).
	\]
	
	Therefore, combining \eqref{L1_L2b}, \eqref{L1b} and \eqref{L2b2}, we have
	\begin{align*}
	L_n^*(\theta_{\tau 0}+n^{-1/2}u)-\breve{L}_n^*(\theta_{\tau 0})=& -u^\prime\left [T_{1n}^*-b_{\tau} f(b_{\tau})\Gamma_2\sqrt{n}(\widetilde{\theta}_n^*-\theta_0)\right ]+\frac{1}{2}f(b_{\tau})u^\prime\Omega_2 u\\&-T_{2n}^*+T_{3n}^*+o_p^*(1),
	\end{align*}
	where $T_{1n}^* =n^{-1/2}\sum_{t=1}^{n}\omega_t\psi_{\tau}(\varepsilon_t-b_{\tau})z_t/h_t$.
	
	Denote $X_t=n^{-1/2}(\omega_t-1)\psi_{\tau}(\varepsilon_t-b_{\tau})z_t/h_t$, and then $T_{1n}^*-T_{1n}=\sum_{t=1}^{n}X_t $. For any constant vector $c\in\mathbb{R}^{p+q+1}$, let $\mu_t=E^*(c^\prime X_t)$ and $\sigma_n^2 = \sum_{t=1}^{n}E^*(c^\prime X_t X_t^\prime c)$.
	Then, $\mu_t=0$, and by \eqref{beq2} we have
	\begin{align*}
	\left (\sum_{t=1}^{n}E^*|c^\prime X_t-\mu_t|^{2+\delta}\right )^{\frac{1}{2+\delta}}& =\frac{1}{\sqrt{n}} \left [\sum_{t=1}^{n}\left |\psi_{\tau}(\varepsilon_t-b_{\tau})\frac{c^\prime z_t}{h_t}\right |^{2+\delta}\right ]^{\frac{1}{2+\delta}}(E^*|\omega_t-1|^{2+\delta})^{\frac{1}{2+\delta}}\\
	&=o_p(1),
	\end{align*}
	as long as $0<\delta\leq \kappa_0$, since $E^*|\omega_t|^{2+\kappa_0}<\infty$ from the assumptions of this theorem. Moreover, by the ergodic theorem, $\sigma_n^2 = c^\prime n^{-1}\sum_{t=1}^{n}[\psi_{\tau}(\varepsilon_t-b_{\tau})]^2h_t^{-2}z_t z_t^\prime c=\tau (1-\tau) c^\prime \Omega_2 c +o_p(1)$.  Thus, we can show that the Liapounov's condition,
	$\sum_{t=1}^{n}E^*|c^\prime X_t-\mu_t|^{2+\delta}=o_p(\sigma_n^{2+\delta})$,
	holds for $0<\delta\leq \kappa_0$. This, together with the Cram\'{e}r-Wold device and the Lindeberg's central limit theorem, implies that conditional on $\mathcal{F}_n$,
	\[T_{1n}^*-T_{1n}=\sum_{t=1}^{n}X_t \rightarrow_{d} N(0,\tau(1-\tau)\Omega_2)\]
	in probability as $n\rightarrow\infty$.
	
	Since $L_n^*(\cdot)$ is convex, by Lemma 2.2 of \cite{Davis_Knight_Liu1992} and Corollary 2 of \cite{Knight1998}, it holds that
	\begin{equation}\label{proof2}	
	\sqrt{n}(\widehat{\theta}_{\tau n}^{*}-\theta_{\tau 0})=\frac{\Omega_2^{-1}}{f(b_{\tau})}T_{1n}^*- b_{\tau} \Omega_2^{-1} \Gamma_2 \sqrt{n}(\widetilde{\theta}_n^*-\theta_0)+o_p^*(1),
	\end{equation}
	which, in conjunction with \eqref{bahadur2}, yields the Bahadur representation of the corrected bootstrap estimator $\widehat{\theta}_{\tau n}^{*}$,
	\[  \sqrt{n}(\widehat{\theta}_{\tau n}^{*}-\widehat{\theta}_{\tau n}) =\frac{\Omega_2^{-1}}{f(b_{\tau})}\left (T_{1n}^*-T_{1n}\right )+  \frac{b_{\tau} \Omega_2^{-1} \Gamma_2 J^{-1}}{\sqrt{n}}\sum_{t=1}^{n}(\omega_t-1) \frac{1-|\varepsilon_t|}{h_t}\frac{\partial h_t(\theta_0)}{\partial \theta}+o_p^*(1).\]
	Denote $X_t^\dag=n^{-1/2}(\omega_t-1)d_t$, with $d_t=(\psi_{\tau}(\varepsilon_t-b_{\tau}) z_t^\prime/h_t, (1-|\varepsilon_t|)h_t^{-1}\partial h_t(\theta_0)/\partial \theta^\prime)^\prime$.
	Note that by \eqref{beq2} and $E|\varepsilon_t|^{2+\nu_0}<\infty$ for $\nu_0>0$, we have $E|d_t|^{2+\nu_0}<\infty$.  Then,  for $0<\delta\leq \min(\kappa_0, \nu_0)$, we can similarly verify the  Liapounov's condition,
	$\sum_{t=1}^{n}E^*|c^\prime X_t^\dag-\mu_t^\dag|^{2+\delta}=o_p(\sigma_n^{\dag 2+\delta})$,  where $\mu_t^\dag=E^*(c^\prime X_t^\dag)$ and $\sigma_n^{\dag2} = \sum_{t=1}^{n}E^*(c^\prime X_t^\dag X_t^{\dag \prime} c)$.  Applying the Lindeberg's central limit theorem and the Cram\'{e}r-Wold device, we accomplish the proof of the theorem.
\end{proof}	

\begin{proof}[Proof of Theorem \ref{thm2}]
	Observe that
	\begin{align}\label{deq1}
		\begin{split}
			\frac{1}{\sqrt{n}}&\sum_{t=k+1}^{n} \psi_{\tau}(\widehat{\varepsilon}_{t, \tau})|\widehat{\varepsilon}_{t-k, \tau}|
			\\
			&=\frac{1}{\sqrt{n}}\sum_{t=k+1}^{n}\psi_{\tau}(\varepsilon_{t, \tau})|\varepsilon_{t-k, \tau}|+\sum_{t=k+1}^{n}\mathcal{E}_{1nt}+\sum_{t=k+1}^{n}\mathcal{E}_{2nt}+\sum_{t=k+1}^{n}\mathcal{E}_{3nt},
		\end{split}
	\end{align}	
	where
	\begin{equation*}
		\begin{split}
			\mathcal{E}_{1nt}&=n^{-1/2}[\psi_{\tau}(\widehat{\varepsilon}_{t, \tau})-\psi_{\tau}(\varepsilon_{t,\tau})]|\varepsilon_{t-k, \tau}|,\hspace{3mm}
			 \mathcal{E}_{2nt}=n^{-1/2}\psi_{\tau}(\varepsilon_{t,\tau})(|\widehat{\varepsilon}_{t-k,\tau}|-|\varepsilon_{t-k, \tau}|),\hspace{3mm}\text{and}\\
			 \mathcal{E}_{3nt}&=n^{-1/2}[\psi_{\tau}(\widehat{\varepsilon}_{t,\tau})-\psi_{\tau}(\varepsilon_{t, \tau})](|\widehat{\varepsilon}_{t-k, \tau}|-|\varepsilon_{t-k, \tau}|).
		\end{split}
	\end{equation*}
	To derive the asymptotic result for the quantity on the left-hand side of \eqref{deq1}, we shall begin by proving that
	\begin{equation}\label{E1}
		\sum_{t=k+1}^{n}\mathcal{E}_{1nt}=-f(b_{\tau})\left [d_{1k}^\prime \sqrt{n}(\widehat{\theta}_{\tau n}-\theta_{\tau0})+b_{\tau} d_{2k}^\prime \sqrt{n}(\widetilde{\theta}_{n}-\theta_0)\right ]+o_p(1),
	\end{equation}
	where $d_{1k}=E(h_t^{-1}|\varepsilon_{t-k,\tau}|z_t)$ and $d_{2k}=E(h_t^{-1}|\varepsilon_{t-k, \tau}|\sum_{j=1}^{p}\beta_{0j}{\partial h_{t-j}(\theta_0)}/{\partial\theta})$.
	For any $u, v\in\mathbb{R}^{p+q+1}$, define
	\[\widetilde{b}_t(u,v)=(\theta_{\tau 0}+n^{-1/2}u)^\prime\widetilde{z}_t(\theta_0+n^{-1/2}v)h_t^{-1}.\]
	Since $\sqrt{n}(\widehat{\theta}_{\tau n}-\theta_{\tau0})=O_p(1)$, $\sqrt{n}(\widetilde{\theta}_{n}-\theta_0)=O_p(1)$, and
	\[\sum_{t=k+1}^{n}\mathcal{E}_{1nt}= \frac{1}{\sqrt{n}}\sum_{t=k+1}^{n}\left [I(\varepsilon_t<b_{\tau})-I(\varepsilon_t<\widehat{\theta}_{\tau n}^\prime\widetilde{z}_t h_t^{-1} )\right ]|\varepsilon_{t-k,\tau}|,\]
	to prove \eqref{E1}, it suffices to show that for any $M>0$,
	\begin{equation}\label{deq2}
		\sup_{\|u\|, \|v\|\leq M}\left |\frac{1}{\sqrt{n}}\sum_{t=k+1}^{n}\phi_t(u,v) +f(b_{\tau})\left (d_{1k}^\prime u+b_{\tau} d_{2k}^\prime v\right )\right |=o_p(1),
	\end{equation}
	where
	$\phi_t(u,v)=\{I(\varepsilon_t<b_{\tau})-I[\varepsilon_t<\widetilde{b}_t(u,v) ]\}|\varepsilon_{t-k,\tau}|.$
	
	Let $S_n(u,v)=n^{-1/2}\sum_{t=k+1}^{n} \{\phi_t(u,v)-E[\phi_t(u,v)|\mathcal{F}_{t-1}]\}$, and we shall first show that
	\begin{equation}\label{deq3}
		\sup_{\|u\|, \|v\|\leq M}|S_n(u, v)|=o_p(1).
	\end{equation}
	For any $u, v\in\mathbb{R}^{p+q+1}$, define
	\[b_t(u,v)=(\theta_{\tau 0}+n^{-1/2}u)^\prime z_t(\theta_0+n^{-1/2}v)h_t^{-1}.\]
	Note that for any $u_i, v_i \in \mathbb{R}^{p+q+1}$, $i=1, 2$, since
	\begin{align*}
		b_t(u_1, v_1)&-b_t(u_2, v_2)\\
		=&\sum_{j=1}^{p}\frac{\beta_{\tau 0}^{(j)} +n^{-1/2}u_1^{(j)}}{h_t}[h_{t-j}(\theta_0+n^{-1/2}v_1)-h_{t-j}(\theta_0+n^{-1/2}v_2)]\\
		 &+\frac{1}{\sqrt{n}}\sum_{j=1}^{p}\frac{u_1^{(j)}-u_2^{(j)}}{h_t}[h_{t-j}(\theta_0+n^{-1/2}v_2)-h_{t-j}]+\frac{ h_t^{-1}z_t^\prime(u_1-u_2)}{\sqrt{n}},
	\end{align*}
	by the Taylor expansion and \eqref{beq2},  where $\beta_{\tau0}^{(j)}=b_{\tau}\beta_{0j}$ for $j=1,\dots, p$, we can readily show that if $\|u_i\|,  \|v_i\|\leq M$, then
	\begin{align}\label{deq7}
		\begin{split}
			&|b_t( u_1, v_1)-b_t(u_2, v_2)|\\
			\leq &\frac{C(M)}{\sqrt{n}}\bigg[ \bigg (\|v_1-v_2\|+\frac{\|u_1-u_2\|}{\sqrt{n}}\bigg )\sum_{j=1}^{p}\sup_{\theta\in\Theta_n}\left \|\frac{1}{h_{t-j}}\frac{\partial h_{t-j}(\theta)}{\partial\theta}\right \| +\|u_1-u_2\|\bigg].
		\end{split}
	\end{align}
	
	For any $u, v\in\mathbb{R}^{p+q+1}$ such that $\|u\|, \|v\|\leq M$, by the H\"{o}lder inequality and the fact that $E|\varepsilon_t|^{2+\nu_0}<\infty$ for $\nu_0>0$,  we have
	\begin{align}\label{deq5}
		\begin{split}
			\sum_{t=k+1}^{n}E\phi_t^2(u,v)&\leq \sum_{t=k+1}^{n}\left \{ E\left |I(\varepsilon_t<b_{\tau})-I[\varepsilon_t<\widetilde{b}_t(u,v) ] \right | \right\}^{\frac{\nu_0}{2+\nu_0}} \left (E|\varepsilon_{t-k, \tau}|^{2+\nu_0}\right )^{\frac{2}{2+\nu_0}}\\
			&=C\sum_{t=k+1}^{n}\left[E\left |F(\widetilde{b}_t(u,v))-F(b_{\tau})\right | \right]^{\frac{\nu_0}{2+\nu_0}} \\
			&\leq C \bigg\{  \sum_{t=k+1}^{n}\left[E\left |F(\widetilde{b}_t(u,v))-F(b_t(u,v))\right | \right]^{\frac{\nu_0}{2+\nu_0}} \\
			& \hspace{15mm}+ \sum_{t=k+1}^{n}\left[E\left |F(b_t(u,v))-F(b_{\tau})\right | \right]^{\frac{\nu_0}{2+\nu_0}} \bigg\},
		\end{split}
	\end{align}
	where the last inequality follows from the fact that $(x+y)^a\leq x^a+y^a$ for any $x, y\geq 0$ and $0<a<1$.
	Note that by Lemma \ref{lem3}, we have
	\begin{align}\label{deq4}
		\begin{split}
			\sup_{\|u\|, \|v\|\leq M}|\widetilde{b}_t(u,v)-b_t(u,v)|&\leq \sum_{j=1}^{p}\frac{|\beta_{\tau 0}^{(j)}|+n^{-1/2}M}{\underline{w}}\sup_{\theta\in\Theta} |\widetilde{h}_{t-j}(\theta) -h_{t-j}(\theta)|\\
			&\leq C(M)\rho^t\zeta.
		\end{split}
	\end{align}
	Then, by Assumption \ref{assum2} and a method similar to that  for \eqref{beq10},  we can show that
	\begin{equation}\label{deq6}
		E\left |F(\widetilde{b}_t(u,v))-F(b_t(u,v))\right |\leq \rho^{t/2}+C(M)\rho^{{\delta_0} t/2}.
	\end{equation}
	Moreover, since $b_{\tau}=b_t(0, 0)$, it follows from \eqref{deq7}, Lemma \ref{lem2} and Assumption \ref{assum2} that
	\begin{equation}\label{deq8}
		E|F(b_t(u,v))-F(b_{\tau})|\leq \sup_{x\in\mathbb{R}}f(x)E|b_t(u,v)-b_{\tau}| \leq n^{-1/2}C(M).
	\end{equation}
	In view of \eqref{deq5}, \eqref{deq6} and \eqref{deq8}, for any $u, v\in\mathbb{R}^{p+q+1}$ with $\|u\|, \|v\|\leq M$,
	\begin{equation}\label{deq9}
		ES_n^2(u,v) \leq \frac{1}{n}\sum_{t=k+1}^{n}E\phi_t^2(u,v)=o(1).
	\end{equation}

	For any $\delta>0$, let $U(\delta)$ be the set of all four-tuples $(u_1, u_2, v_1, v_2)$ of column vectors in $\mathbb{R}^{p+q+1}$ such that $\|u_i\|, \|v_i\|\leq M$, $i=1,2$,  and $\|u_1-u_2\|, \|v_1-v_2\|\leq \delta$,  and denote by $\upsilon$ an element of $U(\delta)$.  Moreover, for simplicity, denote $\widetilde{b}_{ti}=\widetilde{b}_t(u_i,v_i)$ and $b_{ti}=b_t(u_i,v_i)$ for $i=1, 2$. Let $\widetilde{\Delta}_t=\sup_{\upsilon\in U(\delta)}|\widetilde{b}_{t1} -\widetilde{b}_{t2}|$ and $\Delta_t=\sup_{\upsilon\in U(\delta)}|b_{t1} -b_{t2}|$.
	Notice that
	\begin{align*}
		\begin{split}
			\sup_{\upsilon\in U(\delta)} |\phi_t(u_1,v_1)-\phi_t(u_2,v_2)| &\leq \sup_{\upsilon\in U(\delta)}  \big|I(\varepsilon_t<\widetilde{b}_{t2} )-I(\varepsilon_t<\widetilde{b}_{t1})\big||\varepsilon_{t-k, \tau}|\\
			&\leq  I\big( |\varepsilon_t-\widetilde{b}_{t2}|<\widetilde{\Delta}_t\big) |\varepsilon_{t-k, \tau}|.
		\end{split}
	\end{align*}
	Then, applying the H\"{o}lder inequality, together with $E|\varepsilon_t|^{2+\nu_0}<\infty$ for $\nu_0>0$ and  the fact that  $(x+y)^a\leq x^a+y^a$ for any $x, y\geq 0$ and $0<a<1$,  we have
	\begin{align}\label{deq10}
		\begin{split}
			& E \sup_{\upsilon\in U(\delta)} |\phi_t(u_1,v_1)-\phi_t(u_2,v_2)|
			\\
			&\hspace{5mm} \leq \left[E\big |F\big(\widetilde{b}_{t2}+\widetilde{\Delta}_t\big) - F\big(\widetilde{b}_{t2}-\widetilde{\Delta}_t\big) \big|\right ]^{1/2} (E\varepsilon_{t-k,\tau}^2)^{1/2}
			\\
			& \hspace{5mm}\leq  C \bigg\{\left [E\big|F\big(\widetilde{b}_{t2}+\widetilde{\Delta}_t\big) - F\big(\widetilde{b}_{t2}+\Delta_t\big)\big|\right ]^{1/2}+\left [E\big|F\big(\widetilde{b}_{t2}-\widetilde{\Delta}_t\big) - F\big(\widetilde{b}_{t2}-\Delta_t\big)\big|\right ]^{1/2}\\
			&\hspace{17mm}+\left [E\big|F\big(\widetilde{b}_{t2}+\Delta_t\big) - F\big(\widetilde{b}_{t2}-\Delta_t\big)\big|\right ]^{1/2}\bigg\}.
		\end{split}
	\end{align}
	Since
	$|\widetilde{\Delta}_t-\Delta_t|  \leq \sup_{\upsilon\in U(\delta)} \big|(\widetilde{b}_{t1}-\widetilde{b}_{t2})-(b_{t1}-b_{t2})\big| \leq 2\sup_{\|u\|, \|v\|\leq M}|\widetilde{b}_t(u,v)-b_t(u,v)|$,
	by \eqref{deq4} and a method similar to that for \eqref{beq10}, we can verify that
	\begin{equation}\label{deq11}
		E\big|F\big(\widetilde{b}_{t2}\pm\widetilde{\Delta}_t\big) - F\big(\widetilde{b}_{t2}\pm\Delta_t\big)\big| \leq \rho^{t/2}+C(M)\rho^{{\delta_0} t/2}.
	\end{equation}
	Furthermore, it follows from Assumption \ref{assum2}, \eqref{deq7} and Lemma \ref{lem2} that
	\begin{equation}\label{deq12}
		E\big|F\big(\widetilde{b}_{t2}+\Delta_t\big) - F\big(\widetilde{b}_{t2}-\Delta_t\big)\big|\leq 2\sup_{x\in\mathbb{R}}f(x) E(\Delta_t)\leq n^{-1/2}\delta C(M).
	\end{equation}
	As a result of \eqref{deq10}-\eqref{deq12}, we have
	\[ E\sup_{\upsilon\in U(\delta)}|S_n(u_1, v_1)-S_n(u_2, v_2)| \leq \frac{2}{\sqrt{n}}\sum_{t=k+1}^{n}E \sup_{\upsilon\in U(\delta)} |\phi_t(u_1,v_1)-\phi_t(u_2,v_2)| \leq \delta C(M),\]
	which, together with \eqref{deq9} and the finite covering theorem, implies \eqref{deq3}.

	Since $E[\phi_t(u,v)|\mathcal{F}_{t-1}] =[F(b_{\tau})- F(\widetilde{b}_t(u,v))]|\varepsilon_{t-k, \tau}|$,
	to prove \eqref{deq2}, it remains to show that
	\begin{equation}\label{deq14}
		\sup_{\|u\|, \|v\|\leq M}\left |\frac{1}{\sqrt{n}}\sum_{t=k+1}^{n}\big[F(b_{\tau})- F(\widetilde{b}_t(u,v))\big]|\varepsilon_{t-k, \tau}|+f(b_{\tau})\left (d_{1k}^\prime u+b_{\tau} d_{2k}^\prime v\right )\right |=o_p(1).
	\end{equation}
	By \eqref{deq4}, Assumption \ref{assum2} and a method similar to that for \eqref{beq10},  we can show that
	\begin{equation*}
		E\bigg(\sup_{\|u\|, \|v\|\leq M}\left |F(\widetilde{b}_t(u,v))-F(b_t(u,v))\right |\bigg)^2\leq \rho^{t}+C(M)\rho^{{\delta_0} t/2},
	\end{equation*}
	which, in conjunction with the H\"{o}lder inequality and $E|\varepsilon_t|^{2+\nu_0}<\infty$ for $\nu_0>0$, yields
	\begin{align*}
		& E\sup_{\|u\|, \|v\|\leq M}\left |\frac{1}{\sqrt{n}}\sum_{t=k+1}^{n}\big[F(\widetilde{b}_t(u,v))-F(b_t(u,v))\big]|\varepsilon_{t-k, \tau}| \right |\\
		& \hspace{5mm}\leq \frac{1}{\sqrt{n}}\sum_{t=k+1}^{n}\bigg[E\bigg(\sup_{\|u\|, \|v\|\leq M}\left |F(b_t(u,v))-F(\widetilde{b}_t(u,v))\right |\bigg)^2 \bigg]^{1/2}(E\varepsilon_{t-k,\tau}^2)^{1/2}=o(1),
	\end{align*}
	and hence,
	\begin{equation}\label{deq13}
		\sup_{\|u\|, \|v\|\leq M}\left |\frac{1}{\sqrt{n}}\sum_{t=k+1}^{n}\big[F(b_t(u,v))-F(\widetilde{b}_t(u,v))\big]|\varepsilon_{t-k, \tau}| \right |=o_p(1).
	\end{equation}
	Note that by the Taylor expansion,
	\[b_{\tau}-b_t(u,v)=-\frac{h_t^{-1} z_t^\prime u}{\sqrt{n}}-\frac{v^\prime}{\sqrt{n}}\sum_{j=1}^{p}\frac{\beta_{\tau 0}^{(j)} }{h_t}\frac{\partial h_{t-j}(\theta_0)}{\partial\theta}-R_t(u,v),\]
	where
	\[R_t(u,v)=\frac{v^\prime}{n}\sum_{j=1}^{p}\frac{u^{(j)}}{h_t}\frac{\partial h_{t-j}(\theta_0)}{\partial\theta}+\frac{v^\prime}{2n}\sum_{j=1}^{p}\frac{\beta_{\tau 0}^{(j)} +n^{-1/2}u^{(j)}}{h_t}\frac{\partial^2 h_{t-j}(\theta^*)}{\partial\theta\partial\theta^\prime}v,\]
	with $\theta^*$  between $\theta_0$ and $\theta_0+n^{-1/2}v$. Then, by \eqref{deq7},  Assumption \ref{assum2}, Lemma \ref{lem2} and the ergodic theorem, we can show that
	\begin{align*}
		&\sup_{\|u\|, \|v\|\leq M}\left |\frac{1}{\sqrt{n}}\sum_{t=k+1}^{n}\big[F(b_{\tau})- F(b_t(u,v))\big]|\varepsilon_{t-k, \tau}|+f(b_{\tau})\left (d_{1k}^\prime u+b_{\tau} d_{2k}^\prime v\right )\right |\\
		&\hspace{5mm}\leq f(b_{\tau})\sup_{\|u\|, \|v\|\leq M}\left |\frac{1}{\sqrt{n}}\sum_{t=k+1}^{n}[b_{\tau}- b_t(u,v)]|\varepsilon_{t-k, \tau}|+d_{1k}^\prime u+b_{\tau} d_{2k}^\prime v\right | \\
		&\hspace{10mm}+\frac{1}{2}\sup_{x\in \mathbb{R}}|\dot{f}(x)| \frac{1}{\sqrt{n}}\sum_{t=k+1}^{n}\sup_{\|u\|, \|v\|\leq M}|b_{\tau}- b_t(u,v)|^2|\varepsilon_{t-k, \tau}|\\
		&\hspace{5mm}=o_p(1).
	\end{align*}
	This together with \eqref{deq13} implies \eqref{deq14}, and therefore, \eqref{E1} holds.
	
	Next, we consider $\sum_{t=k+1}^{n}\mathcal{E}_{2nt}$. Observe that
	\[ \varepsilon_{t,\tau}-\widehat{\varepsilon}_{t,\tau}=\zeta_{1nt}(\widehat{\theta}_{\tau n}, \widetilde{\theta}_n)+\zeta_{2nt}(\widehat{\theta}_{\tau n}, \widetilde{\theta}_n), \]
	where
	\[\zeta_{1nt}(\theta_{\tau}, \theta)=\frac{y_t-\theta_{\tau 0}^\prime z_t}{h_t}-\frac{y_t-\theta_{\tau}^\prime z_t(\theta)}{h_t(\theta)}\hspace{5mm}\text{and}\hspace{5mm}
	\zeta_{2nt}(\theta_{\tau}, \theta)=\frac{y_t-\theta_{\tau}^\prime z_t(\theta)}{h_t(\theta)}-\frac{y_t-\theta_{\tau}^\prime \widetilde{z}_t(\theta)}{\widetilde{h}_t(\theta)}.\]
	Then, similar to the decompositions in \eqref{L1_L2}, \eqref{L1a} and \eqref{L2b}, by using the identity in \eqref{identity}, it can be verified that
	\begin{equation}\label{E2}
		\sum_{t=k+1}^{n}\mathcal{E}_{2nt}=\sum_{t=1}^{n-k}Z_{1nt}(\widehat{\theta}_{\tau n}, \widetilde{\theta}_n)+\sum_{t=1}^{n-k}Z_{2nt}(\widehat{\theta}_{\tau n}, \widetilde{\theta}_n)+\sum_{t=1}^{n-k}Z_{3nt}(\widehat{\theta}_{\tau n}, \widetilde{\theta}_n),
	\end{equation}
	where
	\begin{equation*}
		\begin{split}
			Z_{1nt}(\theta_{\tau}, \theta)&=\frac{\psi_{\tau}(\varepsilon_{t+k, \tau})}{\sqrt{n}}\bigg \{-\zeta_{2nt}(\theta_{\tau}, \theta)[1-2I(\varepsilon_{t}<b_{\tau})]+2\int_{\zeta_{1nt}(\theta_{\tau}, \theta)}^{\zeta_{1nt}(\theta_{\tau}, \theta)+\zeta_{2nt}(\theta_{\tau}, \theta)}
			I_t(s)ds\bigg\},\\
			Z_{2nt}(\theta_{\tau}, \theta)&= -\frac{ \psi_{\tau}(\varepsilon_{t+k, \tau})}{\sqrt{n}}\zeta_{1nt}(\theta_{\tau}, \theta)[1-2I(\varepsilon_{t}<b_{\tau})], \hspace{5mm}\text{and}\\
			Z_{3nt}(\theta_{\tau}, \theta)&=\frac{2\psi_{\tau}(\varepsilon_{t+k, \tau})}{\sqrt{n}}\int_{0}^{\zeta_{1nt}(\theta_{\tau}, \theta)}
			I_t(s)ds,
		\end{split}
	\end{equation*}
	with $I_t(s)=I(\varepsilon_{t,\tau}\leq s)-I(\varepsilon_{t,\tau}\leq 0)$. For any $M>0$, let $\Theta_{\tau n}=\Theta_{\tau n}(M)=\{\theta_{\tau}:  \|\theta_{\tau}-\theta_{\tau 0}\|\leq n^{-1/2}M, \theta_{\tau}/b_{\tau}\in\Theta\}$.
	Note that $\zeta_{2nt}(\theta_{\tau}, \theta)=\widetilde{h}_t^{-1}(\theta)\theta_{\tau}^\prime[\widetilde{z}_t(\theta)-z_t(\theta)]+[h_t^{-1}(\theta)-\widetilde{h}_t^{-1}(\theta)][y_t-\theta_{\tau}^\prime z_t(\theta)]$. Then,
	similar to \eqref{beq12}, \eqref{A1nt} and \eqref{B1nt},  by Lemma \ref{lem3}, it can be shown that
	\begin{align*}
		\sup_{\theta_{\tau}\in \Theta_{\tau n}, \,\theta \in \Theta_n}|\zeta_{2nt}(\theta_{\tau}, \theta)|\leq&\; \frac{1}{\underline{w}}\sum_{j=1}^{p}\sup_{\theta_{\tau}\in \Theta_{\tau n}}|\beta_{\tau}^{(j)}|\sup_{\theta \in \Theta} |\widetilde{h}_{t-j}(\theta) - h_{t-j}(\theta)|\\
		&+\frac{1}{\underline{w}^2}\sup_{\theta \in \Theta} |\widetilde{h}_{t}(\theta) - h_{t}(\theta)|\sup_{\theta_{\tau}\in \Theta_{\tau n}, \,\theta \in \Theta_n}|y_t-\theta_{\tau}^\prime z_t(\theta)|\\
		\leq &\;  C(M)\rho^t\zeta \bigg[1+ \sup_{\theta_{\tau}\in \Theta_{\tau n}, \,\theta \in \Theta_n}|y_t-\theta_{\tau}^\prime z_t(\theta)| \bigg].
	\end{align*}
	Consequently, it follows from Lemma \ref{lem2} that
	\[ \sup_{\theta_{\tau}\in \Theta_{\tau n}, \,\theta \in \Theta_n} \left |\sum_{t=1}^{n-k}Z_{1nt}(\theta_{\tau}, \theta)\right | \leq \frac{3}{\sqrt{n}}\sum_{t=1}^{n-k}\sup_{\theta_{\tau}\in \Theta_{\tau n}, \,\theta \in \Theta_n}|\zeta_{2nt}(\theta_{\tau}, \theta)|=o_p(1),\]
	which, together with  $\sqrt{n}(\widehat{\theta}_{\tau n}-\theta_{\tau0})=O_p(1)$ and $\sqrt{n}(\widetilde{\theta}_{n}-\theta_0)=O_p(1)$, yields
	\begin{equation}\label{z1}
		\sum_{t=1}^{n-k}Z_{1nt}(\widehat{\theta}_{\tau n}, \widetilde{\theta}_n)=o_p(1).
	\end{equation}
	Applying the second-order Taylor expansion to $h_t^{-1}(\theta)$, and the first and second-order Taylor expansions to $\theta_{\tau}^\prime z_t(\theta)$ respectively,  similar to \eqref{beq15}, it can be verified that
	\begin{equation}\label{deq16}
		\zeta_{1nt}(\theta_{\tau}, \theta)=\zeta_{3nt}(\theta_{\tau}, \theta) + \zeta_{4nt}(\theta_{\tau}, \theta),
	\end{equation}
	where
	\begin{align*}
		\zeta_{3nt}(\theta_{\tau}, \theta)=&(\theta_{\tau}-\theta_{\tau 0})^\prime\frac{ z_t}{h_t}+(\theta-\theta_0)^\prime \sum_{j=1}^{p}\frac{\beta_{\tau 0}^{(j)}}{h_t} \frac{\partial h_{t-j}(\theta_0)}{\partial\theta}+ (\theta-\theta_0)^\prime \frac{\varepsilon_t-b_{\tau}}{h_t}\frac{\partial h_{t}(\theta_0)}{\partial\theta},\\
		\zeta_{4nt}(\theta_{\tau}, \theta)=&(\theta-\theta_0)^\prime \sum_{j=1}^{p}\frac{\beta_{\tau }^{(j)}-\beta_{\tau 0}^{(j)}}{h_t} \frac{\partial h_{t-j}(\theta_2^*)}{\partial\theta}+\frac{(\theta-\theta_0)^\prime}{2}\sum_{j=1}^{p}\frac{\beta_{\tau2}^{*(j)}}{h_t} \frac{\partial^2 h_{t-j}(\theta^*_2)}{\partial\theta \partial\theta^\prime}(\theta-\theta_0)\\
		&-\frac{(\theta-\theta_0)^\prime}{h_t}\frac{\partial h_{t}(\theta_0)}{\partial\theta}\bigg[ \frac{z_t^\prime(\theta_1^*)}{h_t}(\theta_{\tau}-\theta_{\tau 0})+\sum_{j=1}^{p}\frac{\beta_{\tau 1}^{*(j)}}{h_t} \frac{\partial h_{t-j}(\theta_1^*)}{\partial\theta^\prime}(\theta-\theta_0)  \bigg]\\
		&-\frac{y_t-\theta_{\tau}^\prime z_t(\theta)}{h_t(\theta_3^*)}\frac{(\theta-\theta_0)^\prime}{2} \bigg[ \frac{2}{h_t^2(\theta_3^*)}\frac{\partial h_{t}(\theta_3^*)}{\partial\theta} \frac{\partial h_{t}(\theta_3^*)}{\partial\theta^\prime} -\frac{1}{h_t(\theta_3^*)}\frac{\partial^2 h_{t}(\theta_3^*)}{\partial\theta\partial\theta^\prime} \bigg](\theta-\theta_0),
	\end{align*}
	with $\theta_1^*, \theta_2^*$ and $\theta_3^*$ all lying between $\theta_0$ and $\theta$, and $\beta_{\tau 1}^{*(j)}$ and $\beta_{\tau 2}^{*(j)}$  both between $\beta_{\tau0}^{(j)}$ and $\beta_{\tau}^{(j)}$.  Then, similar to \eqref{beq4} and \eqref{beq6}, by Lemma \ref{lem2} and the ergodic theorem, together with $\sqrt{n}(\widehat{\theta}_{\tau n}-\theta_{\tau0})=O_p(1)$ and $\sqrt{n}(\widetilde{\theta}_{n}-\theta_0)=O_p(1)$, it can be shown that
	\[\frac{1}{\sqrt{n}}\sum_{t=1}^{n-k}\psi_{\tau}(\varepsilon_{t+k, \tau})\zeta_{3nt}(\widehat{\theta}_{\tau n}, \widetilde{\theta}_n) [1-2I(\varepsilon_{t}<b_{\tau})]=o_p(1),\]
	and
	\begin{align*}
		E \sup_{\theta_{\tau}\in \Theta_{\tau n}, \,\theta \in \Theta_n} & \left |\frac{1}{\sqrt{n}}\sum_{t=1}^{n-k}\psi_{\tau}(\varepsilon_{t+k, \tau})\zeta_{4nt}(\theta_{\tau}, \theta)[1-2I(\varepsilon_{t}<b_{\tau})]\right |\\
		\leq&\; \frac{1}{\sqrt{n}}\sum_{t=1}^{n-k}E\sup_{\theta_{\tau}\in \Theta_{\tau n}, \,\theta \in \Theta_n}|\zeta_{4nt}(\theta_{\tau}, \theta)|=O(n^{-1/2}),
	\end{align*}
	which implies
	\begin{equation}\label{z2}
		\sum_{t=1}^{n-k}Z_{2nt}(\widehat{\theta}_{\tau n}, \widetilde{\theta}_n)=o_p(1).
	\end{equation}
	Similarly, using the Taylor expansion in \eqref{deq16}, together with Lemma \ref{lem2} and Assumption \ref{assum2},   we can show that
	\begin{align*}
		& E\sup_{\theta_{\tau}\in \Theta_{\tau n}, \,\theta \in \Theta_n}  \left |\sum_{t=1}^{n-k}Z_{3nt}(\theta_{\tau}, \theta)  \right | \\
		& \hspace{5mm} \leq \frac{2}{\sqrt{n}}E\sum_{t=1}^{n-k} \sup_{\theta_{\tau}\in \Theta_{\tau n}, \,\theta \in \Theta_n} |\zeta_{1nt}(\theta_{\tau}, \theta)| I\bigg(|\varepsilon_t-b_{\tau}|\leq\sup_{\theta_{\tau}\in \Theta_{\tau n}, \,\theta \in \Theta_n} |\zeta_{1nt}(\theta_{\tau}, \theta)|\bigg)\\
		& \hspace{5mm} \leq \frac{4\sup_{x\in\mathbb{R}}f(x)}{\sqrt{n}}\sum_{t=1}^{n-k}E\bigg(\sup_{\theta_{\tau}\in \Theta_{\tau n}, \,\theta \in \Theta_n} |\zeta_{1nt}(\theta_{\tau}, \theta)|\bigg)^2 = O(n^{-1/2}),
	\end{align*}
	and as a result,
	\begin{equation}\label{z3}
		\sum_{t=1}^{n-k}Z_{3nt}(\widehat{\theta}_{\tau n}, \widetilde{\theta}_n)=o_p(1).
	\end{equation}
	Combining \eqref{E2}, \eqref{z1}, \eqref{z2} and \eqref{z3}, we have
	\begin{equation}\label{E2b}
		\sum_{t=k+1}^{n}\mathcal{E}_{2nt}=o_p(1).
	\end{equation}
	
	Now we consider $\sum_{t=k+1}^{n}\mathcal{E}_{3nt}$. Similar to the proof of \eqref{E1}, for any $u, v\in\mathbb{R}^{p+q+1}$, define $\varphi_t(u,v)=\{I(\varepsilon_t<b_{\tau})-I[\varepsilon_t<\widetilde{b}_t(u,v) ]\}\left [|\widetilde{\varepsilon}_{t-k,\tau}(u,v)|-|\varepsilon_{t-k,\tau}|\right ]$, where
	$\widetilde{\varepsilon}_{t,\tau}(u,v)=\left [y_t-(\theta_{\tau 0}+n^{-1/2}u)^\prime\widetilde{z}_t(\theta_0+n^{-1/2}v)\right ]\widetilde{h}_t^{-1}(\theta_0+n^{-1/2}v)$.  Then, for any $M>0$, we can readily verify that
	\[ \sup_{\|u\|, \|v\|\leq M}\left | \frac{1}{\sqrt{n}}\sum_{t=k+1}^{n} \{\varphi_t(u,v)-E[\varphi_t(u,v)|\mathcal{F}_{t-1}]\} \right |=o_p(1)\]
	and
	\[ \sup_{\|u\|, \|v\|\leq M}\left | \frac{1}{\sqrt{n}}\sum_{t=k+1}^{n} E[\varphi_t(u,v)|\mathcal{F}_{t-1}] \right |=o_p(1),\]
	which yields
	\begin{equation}\label{E3}
		\sum_{t=k+1}^{n}\mathcal{E}_{3nt}=o_p(1).
	\end{equation}
	Therefore, combining \eqref{deq1}, \eqref{E1}, \eqref{E2b} and \eqref{E3}, we have
	\begin{equation}\label{deq17}
		\begin{split}
			\frac{1}{\sqrt{n}}\sum_{t=k+1}^{n} \psi_{\tau}(\widehat{\varepsilon}_{t, \tau})|\widehat{\varepsilon}_{t-k, \tau}|
			=& \; \frac{1}{\sqrt{n}}\sum_{t=k+1}^{n}\psi_{\tau}(\varepsilon_{t, \tau})|\varepsilon_{t-k, \tau}|\\
			& -f(b_{\tau})\left [d_{1k}^\prime \sqrt{n}(\widehat{\theta}_{\tau n}-\theta_{\tau0})+b_{\tau} d_{2k}^\prime \sqrt{n}(\widetilde{\theta}_{n}-\theta_0)\right ]+o_p(1).
		\end{split}
	\end{equation}
	
	Finally, by the law of large numbers and a proof similar to that for \eqref{E2}, we can show that
	\[ |\widehat{\mu}_{a,\tau}-\mu_{a,\tau}|=\left |\frac{1}{n}\sum_{t=1}^{n}\left (|\widehat{\varepsilon}_{t, \tau}|-|\varepsilon_{t, \tau}|\right )\right | +o_p(1) \leq  \frac{1}{n}\sum_{t=1}^{n}|\widehat{\varepsilon}_{t, \tau}-\varepsilon_{t, \tau}| +o_p(1) =o_p(1),\]
	and then,
	\begin{align*}
		\widehat{\sigma}^2_{a,\tau}&=\frac{1}{n}\sum_{t=1}^{n}(|\widehat{\varepsilon}_{t, \tau}|-\widehat{\mu}_{a,\tau})^2=\frac{1}{n}\sum_{t=1}^{n}\widehat{\varepsilon}_{t, \tau}^2 - \mu_{a,\tau}^2+o_p(1)\\
		&=\frac{1}{n}\sum_{t=1}^{n}(\widehat{\varepsilon}_{t, \tau}^2-\varepsilon_{t,\tau}^2)+\sigma_{a,\tau}^2+o_p(1)\\
		&=\sigma_{a,\tau}^2+o_p(1),
	\end{align*}
	which, together with \eqref{deq17},  \eqref{qmle} and \eqref{bahadur2}, yields
	\begin{equation}\label{}
		\begin{split}
			r_{k, \tau}=&\; \frac{1}{\sqrt{(\tau-\tau^2)\sigma_{a, \tau}^2}}\cdot \frac{1}{n}\sum_{t=k+1}^{n}
			\bigg\{\psi_{\tau}(\varepsilon_{t, \tau})\left (|\varepsilon_{t-k, \tau}| -d_{1k}^\prime \Omega_2^{-1}\frac{z_t}{h_t}\right )\\
			&+ b_{\tau} f(b_{\tau})\left (d_{2k}^\prime-d_{1k}^\prime\Omega_2^{-1}\Gamma_2\right )J^{-1}\frac{1-|\varepsilon_t|}{h_t}\frac{\partial h_t(\theta_0)}{\partial\theta}\bigg\}+o_p(n^{-1/2}).
		\end{split}
	\end{equation}
	Consequently, for $R=(r_{1,\tau}, \dots, r_{K,\tau})^\prime$, we have
	\begin{equation}\label{}
		\begin{split}
			R=&\; \frac{1}{\sqrt{(\tau-\tau^2)\sigma_{a, \tau}^2}}\cdot \frac{1}{n}\sum_{t=k+1}^{n}
			\bigg\{\psi_{\tau}(\varepsilon_{t, \tau})\left (\epsilon_{t-1} -D_{1}\Omega_2^{-1}\frac{z_t}{h_t}\right )\\
			&+ b_{\tau} f(b_{\tau})\left (D_{2}-D_{1}\Omega_2^{-1}\Gamma_2\right )J^{-1}\frac{1-|\varepsilon_t|}{h_t}\frac{\partial h_t(\theta_0)}{\partial\theta}\bigg\}+o_p(n^{-1/2}),
		\end{split}
	\end{equation}
	where $\epsilon_{t-1}=( |\varepsilon_{t-1,\tau}|, \dots, |\varepsilon_{t-K,\tau}|)^\prime$ and $D_i=(d_{i1}, \dots, d_{iK})^\prime$ for $i=1$ and 2.  Thus, we complete the proof by applying the central limit theorem and the Cram\'{e}r-Wold device.
\end{proof}

\begin{proof}[Proof of Theorem \ref{thm4}]	
Similar to \eqref{deq1}, we have
\begin{align}\label{deq1add}
\begin{split}
\frac{1}{\sqrt{n}}&\sum_{t=k+1}^{n} \omega_t \psi_{\tau}(\widehat{\varepsilon}_{t, \tau}^*)|\widehat{\varepsilon}_{t-k, \tau}^*| \\
 &=\frac{1}{\sqrt{n}}\sum_{t=k+1}^{n}\omega_t\psi_{\tau}(\varepsilon_{t, \tau})|\varepsilon_{t-k, \tau}| +\sum_{t=k+1}^{n}\mathcal{E}_{1nt}^* +\sum_{t=k+1}^{n}\mathcal{E}_{2nt}^*+\sum_{t=k+1}^{n}\mathcal{E}_{3nt}^*,
\end{split}
\end{align}	
where
\begin{equation*}
\begin{split}
\mathcal{E}_{1nt}^*&=n^{-1/2}\omega_t[\psi_{\tau}(\widehat{\varepsilon}_{t, \tau}^*)-\psi_{\tau}(\varepsilon_{t,\tau})]|\varepsilon_{t-k, \tau}|,\hspace{3mm}
			 \mathcal{E}_{2nt}^*=n^{-1/2}\omega_t \psi_{\tau}(\varepsilon_{t,\tau})(|\widehat{\varepsilon}_{t-k,\tau}^*|-|\varepsilon_{t-k, \tau}|),\hspace{3mm}\text{and}\\
			 \mathcal{E}_{3nt}^*&=n^{-1/2}\omega_t[\psi_{\tau}(\widehat{\varepsilon}_{t,\tau}^*) -\psi_{\tau}(\varepsilon_{t, \tau})](|\widehat{\varepsilon}_{t-k, \tau}^*|-|\varepsilon_{t-k, \tau}|).
		\end{split}
	\end{equation*}
Note that, from \eqref{proof1} and \eqref{proof2}, $\sqrt{n}(\widetilde{\theta}_n^{*}-\theta_0)=O_p^*(1)$ and $\sqrt{n}(\widehat{\theta}_{\tau n}^{*}-\theta_{\tau 0})=O_p^*(1)$.
As a result, by methods similar to \eqref{E1}, \eqref{E2b} and \eqref{E3}, respectively, we can show that
\begin{equation*}
		\sum_{t=k+1}^{n}\mathcal{E}_{1nt}^*=-f(b_{\tau})\left [d_{1k}^\prime \sqrt{n}(\widehat{\theta}_{\tau n}^*-\theta_{\tau0})+b_{\tau} d_{2k}^\prime \sqrt{n}(\widetilde{\theta}_{n}^*-\theta_0)\right ]+o_p^*(1),
\end{equation*}
and
\[
\sum_{t=k+1}^{n}\mathcal{E}_{int}^*=o_p^*(1),\hspace{5mm} i=2 \text{ and } 3,
\]
	where $d_{1k}=E(h_t^{-1}|\varepsilon_{t-k,\tau}|z_t)$ and $d_{2k}=E(h_t^{-1}|\varepsilon_{t-k, \tau}|\sum_{j=1}^{p}\beta_{0j}{\partial h_{t-j}(\theta_0)}/{\partial\theta})$ are defined as in \eqref{E1}.
This, in conjunction with \eqref{deq1add} and \eqref{deq17}, yields the Bahadur representation of
\begin{align*}
\frac{1}{\sqrt{n}}\sum_{t=k+1}^{n} &\omega_t \psi_{\tau}(\widehat{\varepsilon}_{t, \tau}^*)|\widehat{\varepsilon}_{t-k, \tau}^*| -\frac{1}{\sqrt{n}}\sum_{t=k+1}^{n} \psi_{\tau}(\widehat{\varepsilon}_{t, \tau})|\widehat{\varepsilon}_{t-k, \tau}|\\
&=\frac{1}{\sqrt{n}}\sum_{t=k+1}^{n}(\omega_t-1)\psi_{\tau}(\varepsilon_{t, \tau})|\varepsilon_{t-k, \tau}| \\
&\hspace{9mm}-f(b_{\tau})\left [d_{1k}^\prime \sqrt{n}(\widehat{\theta}_{\tau n}^*-\widehat{\theta}_{\tau n})+b_{\tau} d_{2k}^\prime \sqrt{n}(\widetilde{\theta}_{n}^*-\widetilde{\theta}_{n})\right ]+o_p^*(1),
\end{align*}
and hence
	\begin{equation*}
		\begin{split}
			R^*-R=&\; \frac{1}{\sqrt{(\tau-\tau^2)\sigma_{a, \tau}^2}}\cdot \frac{1}{n}\sum_{t=k+1}^{n}
			(\omega_t-1)\bigg\{\psi_{\tau}(\varepsilon_{t, \tau})\left (\epsilon_{t-1} -D_{1}\Omega_2^{-1}\frac{z_t}{h_t}\right )\\
			&+ b_{\tau} f(b_{\tau})\left (D_{2}-D_{1}\Omega_2^{-1}\Gamma_2\right )J^{-1}\frac{1-|\varepsilon_t|}{h_t}\frac{\partial h_t(\theta_0)}{\partial\theta}\bigg\}+o_p^*(n^{-1/2}),
		\end{split}
	\end{equation*}
	where $\epsilon_{t-1}=( |\varepsilon_{t-1,\tau}|, \dots, |\varepsilon_{t-K,\tau}|)^\prime$ and $D_i=(d_{i1}, \dots, d_{iK})^\prime$ for $i=1$ and 2.  Thus, we complete the proof by applying Lindeberg's central limit theorem and the Cram\'{e}r-Wold device.
\end{proof}
\begin{proof}[Proof of Corollary \ref{cor1}] 	
	The proof follows the same lines as that of Theorem \ref{thm1}, while the corresponding $L_{1n}(u)$ and  $L_{2n}(u)$ are defined with $\widetilde{h}_t^{-1}$ replaced by one; consequently, all the $A_{int}(\theta)$'s and $B_{int}(\theta)$'s are defined with all $\widetilde{h}_t^{-1}(\theta)$, $h_t^{-1}(\theta)$ and $h_t^{-1}$ replaced by one. Note that without these denominators, Lemma \ref{lem2} cannot be applied as in the proof of Theorem \ref{thm1} in some intermediate steps, and additional moment conditions on $x_t$ will be needed. The highest moment condition,  $E|x_t|^{4+\iota_0}$ for some $\iota_0>0$, is required for the proof of the counterpart of \eqref{forCor1}, where, correspondingly, $\eta_t(v)=\int_{0}^{\xi_{1nt}(\theta_0+n^{-1/2}v)} I_t^*(s)ds$, with $\xi_{1nt}$ and $ I_t^*(s)$ defined as in the proof of Theorem \ref{thm1}. The corresponding proof is straightforward by the H\"{o}lder inequality.
\end{proof}
\begin{proof}[Proof of Corollary \ref{cor2} and Equation \eqref{quantx}] Since $\sqrt{n}(\widetilde{\theta }_{n}-\theta _{0})=O_p(1)$ and $\sqrt{n}(\widehat{\theta }_{\tau n}-\theta _{\tau 0})=O_p(1)$, Corollary \ref{cor2} follows directly from Lemma \ref{lem3} and the Taylor expansion.

Moreover, it can be readily shown that the sequence $\{X_n\}$ with $X_n=u_{n+1}^{\prime }\sqrt{n}(\widetilde{\theta }_{n}-\theta _{0})+z_{n+1}^{\prime}\sqrt{n}(\widehat{\theta }_{\tau n}-\theta _{\tau 0})$ is uniformly tight, which, combined with Corollary \ref{cor2}, implies that $o_p(|\widehat{Q}_{\tau}(y_{n+1}|\mathcal{F}_{n})-Q_{\tau}(y_{n+1}|\mathcal{F}_{n})|)=o_p(n^{-1/2})$. Note that $b_{\tau}\neq 0$ if and only if $Q_{\tau}(y_{n+1}|\mathcal{F}_{n})=\theta_{\tau 0}^\prime z_{n+1}=b_{\tau}h_{n+1}\neq0$, since $h_{n+1}\geq\underline{w}>0$. If $b_{\tau}\neq 0$, then $T^{-1}(\cdot)$ is differentiable at $Q_{\tau}(y_{n+1}|\mathcal{F}_{n})$, and hence
\begin{align*}
& T^{-1}[\widehat{Q}_{\tau}(y_{n+1}|\mathcal{F}_{n})]-T^{-1}[Q_{\tau}(y_{n+1}|\mathcal{F}_{n})]\\
&\hspace{5mm}=\frac{dT^{-1}(x)}{dx}\bigg|_{x=Q_{\tau}(y_{n+1}|\mathcal{F}_{n})}\left [\widehat{Q}_{\tau}(y_{n+1}|\mathcal{F}_{n})-Q_{\tau}(y_{n+1}|\mathcal{F}_{n})\right ]+o_p(n^{-1/2})\\
&\hspace{5mm}= \frac{1}{2\sqrt{|b_{\tau}h_{n+1}|}}\left [u_{n+1}^{\prime }(\widetilde{\theta }_{n}-\theta _{0})+z_{n+1}^{\prime}(\widehat{\theta }_{\tau n}-\theta _{\tau 0})\right ]+o_p(n^{-1/2}).
\end{align*}	
Since $\widehat{Q}_{\tau}(x_{n+1}|\mathcal{F}_{n})=T^{-1}[\widehat{Q}_{\tau}(y_{n+1}|\mathcal{F}_{n})]$ and $Q_{\tau}(x_{n+1}|\mathcal{F}_{n})=T^{-1}[Q_{\tau}(y_{n+1}|\mathcal{F}_{n})]$, we complete the proof of \eqref{quantx}.
\end{proof}	
\begin{proof}[Proof of Corollary \ref{cor3}] By methods similar to the proofs of Theorem \ref{thm3} and Corollary \ref{cor2}, this corollary follows.
\end{proof}	

\bibliographystyle{plain}
\bibliography{Quantile}

\clearpage
\newpage
\begin{landscape}
\begin{table}
\begin{center}
\caption{\label{table1a}Biases ($\times10$) and MSEs for in-sample and out-of-sample conditional quantile estimates at $\tau=0.05$, for $\alpha_0=0.1$, $\alpha_1=0.8$, $\beta_1=0.15$, and normally or Student's $t_5$ distributed innovations.}\vspace{5mm}
\begin{tabularx}{0.85\linewidth}{c@{\hskip 8mm}l@{\hskip 4mm}*{2}{Y}c*{2}{Y}c@{\hskip 3mm}*{2}{Y}c*{2}{Y}}			 
\hline\hline
&&\multicolumn{5}{c}{Normal distribution}&&\multicolumn{5}{c}{Student's $t_5$ distribution}\\
\cline{3-7}\cline{9-13}
&&\multicolumn{2}{c}{Bias}&&\multicolumn{2}{c}{MSE}&&\multicolumn{2}{c}{Bias}&&\multicolumn{2}{c}{MSE}\\
\cline{3-4}\cline{6-7}\cline{9-10}\cline{12-13}
$n$ &  & \multicolumn{1}{c}{In} & \multicolumn{1}{c}{Out} & & \multicolumn{1}{c}{In} & \multicolumn{1}{c}{Out} & & \multicolumn{1}{c}{In} & \multicolumn{1}{c}{Out} & & \multicolumn{1}{c}{In} & \multicolumn{1}{c}{Out}\\\hline
\noalign{\vskip 1mm}
200 & Hybrid & -0.028 & -0.020 &  & 0.121 & 0.088 &  & -0.231 & -0.094 &  & 0.194 & 0.175\\
& $\text{QGARCH}_1$ & 0.293 & 0.130 &  & 0.390 & 0.275 &  & 0.131 & 0.115 &  & 0.472 & 0.417\\
& $\text{QGARCH}_2$ & 0.300 & 0.134 &  & 0.368 & 0.319 &  & 0.137 & 0.066 &  & 0.475 & 0.638\\
& CAViaR & 0.165 & 0.060 &  & 0.162 & 0.147 &  & -0.060 & -0.035 &  & 0.291 & 0.270\\
& RiskM & -1.266 & -1.572 &  & 1.633 & 1.261 &  & -1.491 & -1.818 &  & 1.338 & 1.324\vspace{1.5mm}\\
500 & Hybrid & -0.017 & 0.004 &  & 0.064 & 0.046 &  & -0.079 & -0.070 &  & 0.092 & 0.049\\
& $\text{QGARCH}_1$ & 0.201 & 0.205 &  & 0.354 & 0.139 &  & 0.132 & 0.077 &  & 0.430 & 0.134\\
& $\text{QGARCH}_2$ & 0.205 & 0.219 &  & 0.358 & 0.137 &  & 0.148 & 0.060 &  & 0.447 & 0.134\\
& CAViaR & 0.059 & 0.043 &  & 0.128 & 0.066 &  & 0.009 & 0.014 &  & 0.273 & 0.070\\
& RiskM & -1.591 & -1.585 &  & 2.282 & 1.467 &  & -1.615 & -1.745 &  & 1.603 & 1.162\vspace{1.5mm}\\
1000 & Hybrid & -0.001 & -0.007 &  & 0.028 & 0.023 &  & -0.040 & -0.047 &  & 0.048 & 0.032\\
& $\text{QGARCH}_1$ & 0.153 & 0.090 &  & 0.279 & 0.173 &  & 0.127 & 0.557 &  & 0.414 & 12.911\\
& $\text{QGARCH}_2$ & 0.152 & 0.110 &  & 0.271 & 0.147 &  & 0.130 & 0.500 &  & 0.422 & 10.190\\
& CAViaR & 0.037 & 0.026 &  & 0.075 & 0.039 &  & 0.001 & 0.057 &  & 0.198 & 0.205\\
& RiskM & -1.566 & -1.700 &  & 1.951 & 1.472 &  & -1.637 & -1.492 &  & 1.931 & 2.897\\[1mm]
\hline
\end{tabularx}
\end{center}
\end{table}

\begin{table}
\begin{center}
\caption{\label{table1b}Biases ($\times10$) and MSEs for in-sample and out-of-sample conditional quantile estimates at $\tau=0.05$, for $\alpha_0=0.1$, $\alpha_1=0.15$, $\beta_1=0.8$, and normally or Student's $t_5$ distributed innovations.}\vspace{5mm}	
\begin{tabularx}{0.85\linewidth}{c@{\hskip 8mm}l@{\hskip 4mm}*{2}{Y}c*{2}{Y}c@{\hskip 3mm}*{2}{Y}c*{2}{Y}}	 
\hline\hline
&&\multicolumn{5}{c}{Normal distribution}&&\multicolumn{5}{c}{Student's $t_5$ distribution}\\
\cline{3-7}\cline{9-13}
&&\multicolumn{2}{c}{Bias}&&\multicolumn{2}{c}{MSE}&&\multicolumn{2}{c}{Bias}&&\multicolumn{2}{c}{MSE}\\
\cline{3-4}\cline{6-7}\cline{9-10}\cline{12-13}
$n$ &  & \multicolumn{1}{c}{In} & \multicolumn{1}{c}{Out} & & \multicolumn{1}{c}{In} & \multicolumn{1}{c}{Out} & & \multicolumn{1}{c}{In} & \multicolumn{1}{c}{Out} & & \multicolumn{1}{c}{In} & \multicolumn{1}{c}{Out}\\\hline
\noalign{\vskip 1mm}
200 & Hybrid & -0.193 & -0.268 &  & 0.193 & 0.207 &  & -0.593 & -0.726 &  & 0.401 & 0.461\\
& $\text{QGARCH}_1$ & -0.103 & -0.112 &  & 0.392 & 0.471 &  & -0.417 & -0.533 &  & 0.741 & 0.866\\
& $\text{QGARCH}_2$ & -0.075 & -0.012 &  & 0.350 & 0.422 &  & -0.333 & -0.360 &  & 0.660 & 0.835\\
& CAViaR & 0.129 & 0.218 &  & 0.157 & 0.194 &  & -0.143 & -0.079 &  & 0.317 & 0.365\\
& RiskM & 0.466 & -0.061 &  & 0.150 & 0.142 &  & -0.460 & -1.017 &  & 0.270 & 0.272\vspace{1.5mm}\\
500 & Hybrid & -0.027 & 0.034 &  & 0.078 & 0.082 &  & -0.166 & -0.105 &  & 0.145 & 0.166\\
&  $\text{QGARCH}_1$  & -0.061 & 0.071 &  & 0.231 & 0.266 &  & -0.166 & -0.102 &  & 0.435 & 0.561\\
&  $\text{QGARCH}_2$ & -0.017 & 0.085 &  & 0.173 & 0.191 &  & -0.129 & -0.076 &  & 0.342 & 0.613\\
& CAViaR & 0.099 & 0.181 &  & 0.069 & 0.078 &  & 0.006 & 0.110 &  & 0.131 & 0.156\\
& RiskM & 0.249 & 0.167 &  & 0.132 & 0.128 &  & -0.580 & -0.581 &  & 0.236 & 0.207\vspace{1.5mm}\\
1000 & Hybrid & 0.002 & -0.006 &  & 0.038 & 0.041 &  & -0.084 & -0.172 &  & 0.077 & 0.132\\
& $\text{QGARCH}_1$ & -0.068 & -0.020 &  & 0.146 & 0.155 &  & -0.156 & -0.348 &  & 0.361 & 1.334\\
& $\text{QGARCH}_2$ & -0.020 & 0.010 &  & 0.097 & 0.103 &  & -0.100 & -0.298 &  & 0.259 & 1.254\\
& CAViaR & 0.066 & 0.073 &  & 0.034 & 0.038 &  & -0.001 & -0.001 &  & 0.092 & 0.085\\
& RiskM & 0.175 & 0.090 &  & 0.129 & 0.128 &  & -0.627 & -0.597 &  & 0.247 & 0.287\\[1mm]
\hline
\end{tabularx}
\end{center}
\end{table}
\end{landscape}

\begin{landscape}
\begin{table}
\begin{center}
\caption{\label{table2}Biases, ESDs and ASDs for the weighted estimator $\widehat{\theta}_{\tau n}$ at $\tau=0.1$ or 0.25, for normally or Student's $t_5$ distributed innovations, where $\text{ASD}_i$ corresponds to random weight $W_i$ for $i=1,2$ and 3. The notations $\alpha_0$, $\alpha_1$ and $\beta_1$ represent the corresponding elements of  $\widehat{\theta}_{\tau n}$.}\vspace{5mm}		
\begin{tabularx}{0.85\linewidth}{c@{\hskip 8mm}c@{\hskip 4mm}*{5}{Y}c*{5}{Y}}	
\hline\hline
&&\multicolumn{5}{c}{Normal distribution}&&\multicolumn{5}{c}{Student's $t_5$ distribution}\\
\cline{3-7}\cline{9-13}
$n$&&\multicolumn{1}{c}{Bias}&\multicolumn{1}{c}{ESD}&\multicolumn{1}{c}{$\text{ASD}_1$}&\multicolumn{1}{c}{$\text{ASD}_2$}&\multicolumn{1}{c}{$\text{ASD}_3$}&&\multicolumn{1}{c}{Bias}&\multicolumn{1}{c}{ESD}&\multicolumn{1}{c}{$\text{ASD}_1$}&\multicolumn{1}{c}{$\text{ASD}_2$}&\multicolumn{1}{c}{$\text{ASD}_3$}\\\hline
\noalign{\vskip 1mm}
& &\multicolumn{11}{c}{$\tau=0.1$}\\
500  & $\alpha_0$ & 0.000 & 0.447 & 0.507 & 0.514 & 0.509 &  & -0.019 & 0.426 & 0.646 & 0.626 & 0.589\\
&  $\alpha_1$  & 0.008 & 0.258 & 0.275 & 0.271 & 0.273 &  & -0.032 & 0.268 & 0.292 & 0.283 & 0.287\\
&  $\beta_1$   & -0.018 & 0.349 & 0.379 & 0.388 & 0.382 &  & 0.001 & 0.344 & 0.482 & 0.516 & 0.456\\
1000  & $\alpha_0$ & 0.001 & 0.329 & 0.344 & 0.346 & 0.345 &  & -0.014 & 0.291 & 0.351 & 0.332 & 0.332\\
&  $\alpha_1$  & 0.004 & 0.185 & 0.193 & 0.192 & 0.192 &  & -0.011 & 0.183 & 0.199 & 0.195 & 0.197\\
&  $\beta_1$   & -0.011 & 0.258 & 0.265 & 0.266 & 0.265 &  & -0.001 & 0.238 & 0.289 & 0.286 & 0.280\\
2000  & $\alpha_0$ & 0.004 & 0.229 & 0.241 & 0.242 & 0.241 &  & 0.000 & 0.203 & 0.240 & 0.220 & 0.220\\
&  $\alpha_1$  & 0.006 & 0.131 & 0.135 & 0.134 & 0.134 &  & -0.007 & 0.131 & 0.137 & 0.136 & 0.136\\
&  $\beta_1$   & -0.011 & 0.180 & 0.187 & 0.187 & 0.187 &  & -0.007 & 0.176 & 0.198 & 0.189 & 0.188\\[1mm]
& &\multicolumn{11}{c}{$\tau=0.25$}\\
500  & $\alpha_0$ & -0.005 & 0.199 & 0.216 & 0.216 & 0.216 &  & 0.001 & 0.145 & 0.214 & 0.212 & 0.190\\
&  $\alpha_1$  & -0.003 & 0.106 & 0.112 & 0.111 & 0.112 &  & -0.012 & 0.087 & 0.090 & 0.088 & 0.089\\
&  $\beta_1$   & -0.005 & 0.147 & 0.159 & 0.160 & 0.159 &  & -0.005 & 0.115 & 0.160 & 0.176 & 0.148\\
1000  & $\alpha_0$ & -0.004 & 0.145 & 0.148 & 0.148 & 0.148 &  & 0.000 & 0.099 & 0.114 & 0.110 & 0.109\\
&  $\alpha_1$  & -0.002 & 0.080 & 0.080 & 0.079 & 0.079 &  & -0.003 & 0.060 & 0.063 & 0.062 & 0.063\\
&  $\beta_1$   & -0.003 & 0.108 & 0.111 & 0.110 & 0.111 &  & -0.005 & 0.081 & 0.093 & 0.092 & 0.090\\
2000  & $\alpha_0$ & 0.000 & 0.103 & 0.103 & 0.103 & 0.103 &  & 0.001 & 0.071 & 0.081 & 0.073 & 0.073\\
&  $\alpha_1$  & -0.001 & 0.055 & 0.056 & 0.056 & 0.056 &  & -0.003 & 0.043 & 0.044 & 0.044 & 0.044\\
&  $\beta_1$   & -0.003 & 0.076 & 0.078 & 0.078 & 0.078 &  & -0.003 & 0.060 & 0.065 & 0.061 & 0.061\\[1mm]				 
\hline
\end{tabularx}
\end{center}
\end{table}

\begin{table}
\begin{center}
\caption{\label{table3}Biases ($\times 10$), ESDs ($\times 10$) and ASDs ($\times 10$) for the residual QACF $r_{k,\tau}$ at $\tau=0.1$ or 0.25 and $k=2, 4$ or 6, for normally or Student's $t_5$ distributed innovations, where $\text{ASD}_i$ corresponds to random weight $W_i$ for $i=1,2$ and 3.}\vspace{5mm}		 
\begin{tabularx}{0.85\linewidth}{c@{\hskip 8mm}c@{\hskip 4mm}*{5}{Y}c*{5}{Y}}	
\hline\hline
&&\multicolumn{5}{c}{Normal distribution}&&\multicolumn{5}{c}{Student's $t_5$ distribution}\\
\cline{3-7}\cline{9-13}
$n$& $k$ & \multicolumn{1}{c}{Bias}&\multicolumn{1}{c}{ESD}&\multicolumn{1}{c}{$\text{ASD}_1$}&\multicolumn{1}{c}{$\text{ASD}_2$}&\multicolumn{1}{c}{$\text{ASD}_3$}&& \multicolumn{1}{c}{Bias}&\multicolumn{1}{c}{ESD}&\multicolumn{1}{c}{$\text{ASD}_1$}&\multicolumn{1}{c}{$\text{ASD}_2$}&\multicolumn{1}{c}{$\text{ASD}_3$}  \\\hline
\noalign{\vskip 1mm}
&&\multicolumn{11}{c}{$\tau=0.1$}\\					
500 & 2 & 0.047 & 0.433 & 0.539 & 0.526 & 0.533 &  & 0.024 & 0.429 & 0.493 & 0.492 & 0.490\\
 & 4 & 0.057 & 0.453 & 0.541 & 0.532 & 0.536 &  & 0.032 & 0.426 & 0.482 & 0.485 & 0.483\\
 & 6 & 0.047 & 0.468 & 0.545 & 0.536 & 0.540 &  & 0.040 & 0.452 & 0.474 & 0.476 & 0.473\\
1000 & 2 & 0.016 & 0.304 & 0.342 & 0.338 & 0.340 &  & 0.005 & 0.304 & 0.323 & 0.326 & 0.324\\
 & 4 & 0.013 & 0.322 & 0.353 & 0.349 & 0.351 &  & 0.019 & 0.301 & 0.317 & 0.319 & 0.318\\
 & 6 & 0.021 & 0.320 & 0.356 & 0.353 & 0.354 &  & 0.000 & 0.321 & 0.324 & 0.327 & 0.325\\
2000 & 2 & 0.014 & 0.214 & 0.229 & 0.228 & 0.228 &  & 0.003 & 0.216 & 0.217 & 0.218 & 0.217\\
 & 4 & -0.003 & 0.215 & 0.237 & 0.236 & 0.237 &  & 0.005 & 0.220 & 0.220 & 0.221 & 0.220\\
 & 6 & 0.011 & 0.217 & 0.239 & 0.238 & 0.239 &  & 0.006 & 0.220 & 0.222 & 0.224 & 0.223\\[1mm]
&&\multicolumn{11}{c}{$\tau=0.25$}\\	
500 & 2 & 0.004 & 0.373 & 0.429 & 0.423 & 0.426 &  & -0.011 & 0.388 & 0.440 & 0.437 & 0.438\\
 & 4 & 0.030 & 0.421 & 0.465 & 0.463 & 0.465 &  & 0.008 & 0.438 & 0.460 & 0.461 & 0.461\\
 & 6 & 0.029 & 0.430 & 0.474 & 0.472 & 0.473 &  & 0.029 & 0.439 & 0.459 & 0.460 & 0.459\\
1000 & 2 & 0.004 & 0.267 & 0.288 & 0.286 & 0.287 &  & -0.013 & 0.284 & 0.302 & 0.301 & 0.301\\
 & 4 & 0.018 & 0.303 & 0.319 & 0.318 & 0.318 &  & 0.006 & 0.307 & 0.318 & 0.319 & 0.319\\
 & 6 & 0.022 & 0.313 & 0.325 & 0.325 & 0.326 &  & 0.006 & 0.321 & 0.321 & 0.322 & 0.322\\
2000 & 2 & 0.006 & 0.192 & 0.197 & 0.197 & 0.197 &  & -0.003 & 0.204 & 0.208 & 0.207 & 0.207\\
 & 4 & 0.002 & 0.208 & 0.220 & 0.220 & 0.220 &  & -0.001 & 0.223 & 0.221 & 0.221 & 0.221\\
 & 6 & 0.007 & 0.220 & 0.227 & 0.227 & 0.227 &  & 0.008 & 0.228 & 0.224 & 0.224 & 0.224\\[1mm]					 
\hline
\end{tabularx}
\end{center}
\end{table}
\end{landscape}

\begin{table}[tbp]
\caption{Rejection rates ($\times100$) of the test statistic $Q(K)$ for $K=6$
at the 5\% significance level, for normally or Student's $t_5$ distributed
innovations, where $\text{Q}_i$ denotes the test statistic based on random
weight $W_i$ for $i=1,2$ and 3.}
\label{table4}
\begin{center}
\vspace{5mm}
\begin{tabularx}{0.8\textwidth}{c@{\hskip 8mm}l@{\hskip 4mm}*{3}{Y}c*{3}{Y}}	
\hline\hline
&&\multicolumn{3}{c}{Normal distribution}&&\multicolumn{3}{c}{Student's $t_5$ distribution}\\
\cline{3-5}\cline{7-9}
$n$ & $d$ & \multicolumn{1}{c}{$Q_1$} & \multicolumn{1}{c}{$Q_2$} & \multicolumn{1}{c}{$Q_3$} && \multicolumn{1}{c}{$Q_1$} & \multicolumn{1}{c}{$Q_2$} & \multicolumn{1}{c}{$Q_3$} \\\hline
\noalign{\vskip 1mm}
&&\multicolumn{7}{c}{$\tau=0.1$}\\
500 & 0 & 4.0 & 3.2 & 3.6 &    & 3.7 & 3.1 & 3.7\\
 & 0.3 & 5.7 & 5.0 & 5.9 &    & 9.9 & 7.7 & 9.2\\
 & 0.6 & 20.9 & 18.5 & 20.7 &    & 29.5 & 28.6 & 29.4\\
1000 & 0 & 4.7 & 4.4 & 4.3 &    & 5.2 & 4.8 & 5.3\\
 & 0.3 & 17.3 & 16.0 & 17.2 &    & 22.2 & 20.6 & 21.4\\
 & 0.6 & 57.0 & 54.7 & 56.2 &    & 61.7 & 61.4 & 62.4\\
2000 & 0 & 4.9 & 4.4 & 4.5 &    & 5.5 & 5.2 & 5.2\\
 & 0.3 & 37.9 & 37.1 & 38.1 &    & 46.8 & 45.1 & 45.9\\
 & 0.6 & 89.4 & 89.3 & 89.9 &    & 91.4 & 90.9 & 91.1\\[1mm]
 &  & \multicolumn{7}{c}{$\tau=0.25$}\\
500 & 0 & 3.4 & 3.7 & 3.7 &    & 3.3 & 3.1 & 3.1\\
 & 0.3 & 6.5 & 5.9 & 6.2 &    & 5.7 & 5.3 & 5.7\\
 & 0.6 & 20.2 & 20.0 & 20.2 &    & 15.5 & 15.5 & 15.9\\
1000 & 0 & 4.3 & 4.2 & 4.3 &    & 4.6 & 4.6 & 4.3\\
 & 0.3 & 16.2 & 15.8 & 16.0 &    & 10.8 & 10.9 & 10.8\\
 & 0.6 & 46.6 & 47.2 & 46.9 &    & 32.3 & 32.0 & 32.1\\
2000 & 0 & 4.1 & 4.2 & 4.1 &    & 4.7 & 4.6 & 4.6\\
 & 0.3 & 36.6 & 36.5 & 35.5 &    & 29.0 & 29.1 & 28.9\\
 & 0.6 & 83.3 & 83.3 & 83.0 &    & 69.7 & 69.9 & 69.6\\[1mm]									
\hline
\end{tabularx}
\end{center}
\end{table}

\begin{landscape}
\begin{table}
\begin{center}
\caption{\label{table7}Empirical coverage rates ($\times 100$) for various conditional quantile estimation methods for $1$\% VaR and $5$\% VaR.}\vspace{5mm}		
\begin{tabularx}{0.85\linewidth}{l@{\hskip 8mm}*{2}{Z}c*{2}{Z}c*{2}{Z}c*{2}{Z}c*{2}{Z}}		
\hline\hline
& \multicolumn{2}{c}{2010 - 2011} & & \multicolumn{2}{c}{2012 - 2013} &  & \multicolumn{2}{c}{2014 - 2015} &  & \multicolumn{2}{c}{2016 - end} &  &  \multicolumn{2}{c}{Overall}\\
\cline{2-3}\cline{5-6}\cline{8-9}\cline{11-12}\cline{14-15}& \multicolumn{1}{c}{1\%} & \multicolumn{1}{c}{5\%} &  & \multicolumn{1}{c}{1\%} & \multicolumn{1}{c}{5\%} &  & \multicolumn{1}{c}{1\%} & \multicolumn{1}{c}{5\%} &  & \multicolumn{1}{c}{1\%} & \multicolumn{1}{c}{5\%} &  & \multicolumn{1}{c}{1\%} & \multicolumn{1}{c}{5\%}\\\hline
\noalign{\vskip 1mm}& \multicolumn{14}{c}{S\&P 500}\\
Hybrid & 1.19 & 4.76 &  & 0.60 & 3.39 &  & 1.19 & 4.37 &  & 0.80 & 3.20 &  & 0.98 & 4.10\\
$\text{QGARCH}_1$ & 0.79 & 3.18 &  & 0.00 & 1.20 &  & 0.40 & 3.57 &  & 0.80 & 3.20 &  & 0.43 & 2.69\\
$\text{QGARCH}_2$ & 0.60 & 4.37 &  & 0.00 & 1.20 &  & 0.60 & 3.97 &  & 0.80 & 3.20 &  & 0.43 & 3.18\\
CAViaR & 0.79 & 4.17 &  & 0.20 & 2.59 &  & 0.60 & 4.37 &  & 0.80 & 2.40 &  & 0.55 & 3.61\\
RiskM & 2.98 & 6.94 &  & 1.99 & 5.18 &  & 3.18 & 6.75 &  & 0.80 & 4.00 &  & 2.57 & 6.12\\[1mm]
& \multicolumn{14}{c}{Dow 30}\\
Hybrid & 1.39 & 4.76 &  & 0.40 & 2.79 &  & 0.79 & 5.15 &  & 0.80 & 4.80 &  & 0.86 & 4.28\\
$\text{QGARCH}_1$ & 0.79 & 2.78 &  & 0.00 & 1.20 &  & 0.20 & 3.56 &  & 1.60 & 4.00 &  & 0.43 & 2.63\\
$\text{QGARCH}_2$ & 0.79 & 2.18 &  & 0.00 & 1.79 &  & 0.79 & 4.55 &  & 0.80 & 4.00 &  & 0.55 & 2.94\\
CAViaR & 0.79 & 4.17 &  & 0.00 & 2.59 &  & 0.79 & 5.16 &  & 0.80 & 4.00 &  & 0.55 & 3.98\\
RiskM & 3.17 & 6.35 &  & 2.19 & 4.78 &  & 2.97 & 6.73 &  & 0.80 & 4.80 &  & 2.63 & 5.87\\[1mm]
& \multicolumn{14}{c}{HSI}\\
Hybrid & 1.39 & 4.56 &  & 0.99 & 3.17 &  & 1.01 & 4.44 &  & 0.82 & 7.38 &  & 1.11 & 4.31\\
$\text{QGARCH}_1$ & 0.79 & 3.37 &  & 0.00 & 2.38 &  & 0.40 & 2.62 &  & 0.82 & 5.74 &  & 0.43 & 3.01\\
$\text{QGARCH}_2$ & 0.99 & 2.78 &  & 0.60 & 2.98 &  & 1.21 & 4.84 &  & 1.64 & 5.74 &  & 0.98 & 3.69\\
CAViaR & 0.79 & 4.17 &  & 0.79 & 3.37 &  & 1.01 & 4.03 &  & 1.64 & 6.56 &  & 0.92 & 4.06\\
RiskM & 1.98 & 7.34 &  & 2.18 & 6.15 &  & 2.22 & 5.65 &  & 4.92 & 7.38 &  & 2.34 & 6.46
\\[1mm]
\hline
\end{tabularx}
\end{center}
\end{table}
\end{landscape}

\begin{landscape}
	\begin{figure}[tbp]
		\begin{center}
			\includegraphics[scale=0.85]{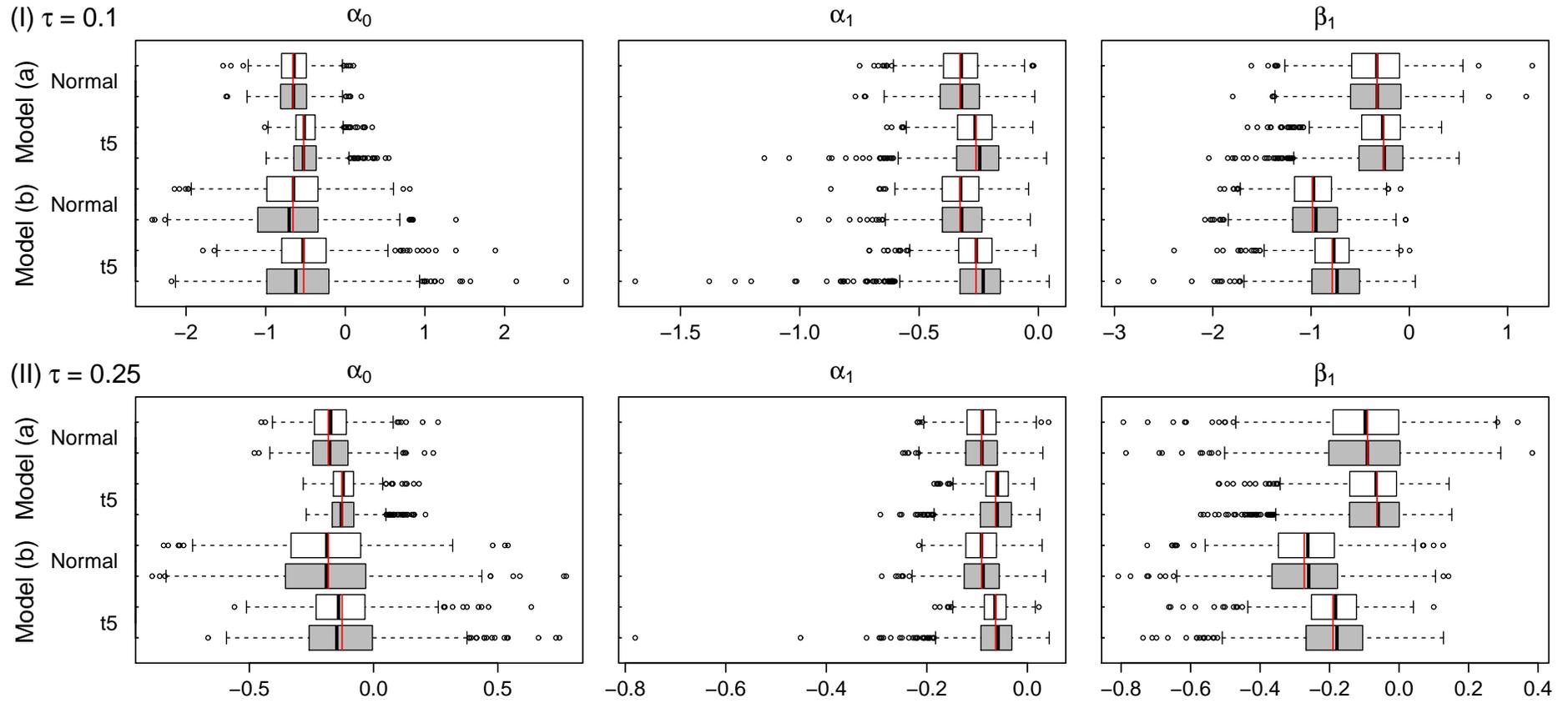}
		\end{center}
		\caption{Box plots for the weighted estimator $\widehat{\theta}_{\tau n}$ (white boxes) and the unweighted estimator $\widecheck{\theta}_{\tau n}$ (grey boxes), at $\tau=0.1$ or 0.25, for two models with normally or Student's $t_5$ distributed innovations. Model (a): $(\alpha_0, \alpha_1, \beta_1)=(0.4,0.2,0.2)$; Model (b): $(\alpha_0, \alpha_1, \beta_1)=(0.4,0.2,0.6)$. The thick black line in the center of the box indicates the sample median, and the thin red line indicates the value of the corresponding element of the true parameter vector $\theta_{\tau 0}$. The notations $\alpha_0$, $\alpha_1$ and $\beta_1$ represent the corresponding elements of  $\widehat{\theta}_{\tau n}$ and $\widecheck{\theta}_{\tau n}$.}
		\label{fig3}
	\end{figure}
\end{landscape}

\begin{figure}[tbp]
\begin{center}
\includegraphics[scale=0.3]{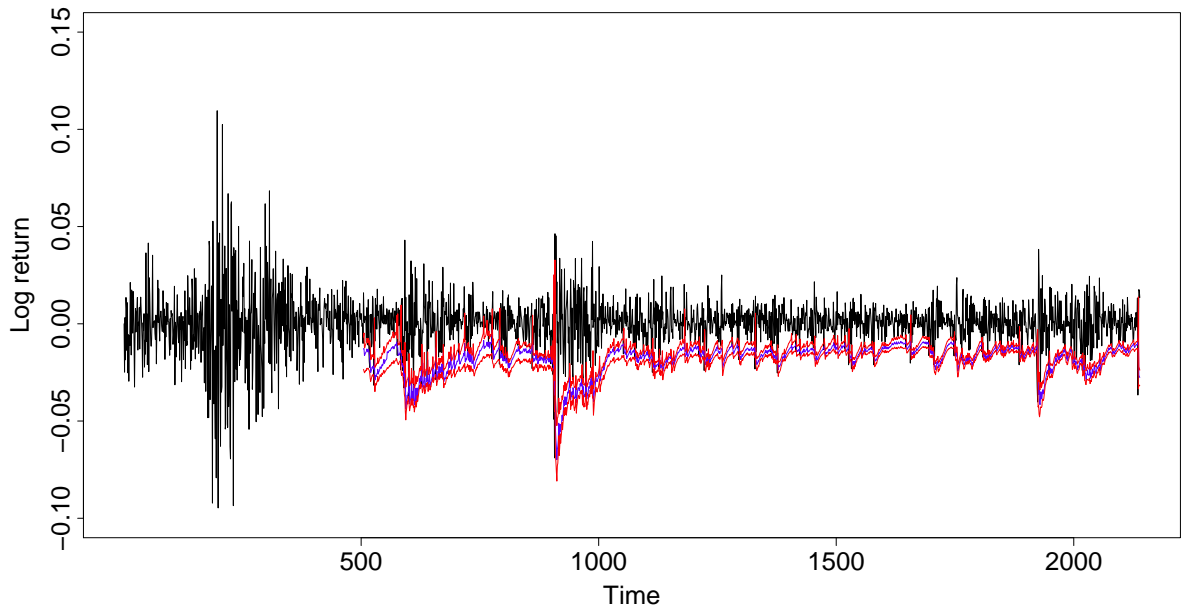}
\end{center}
\caption{Time plot for daily log returns (black line) of S\&P 500 from
January 2, 2008 to June 30, 2016, with rolling forecasts of conditional
quantiles (blue line) at $\protect\tau=0.05$ from January 4, 2010 to June
30, 2016 and corresponding $95\%$ confidence bounds (red lines), using the
proposed conditional quantile estimation and bootstrap method.}
\label{fig1}
\end{figure}

\begin{figure}[tbp]
\begin{center}
\includegraphics[scale=0.6]{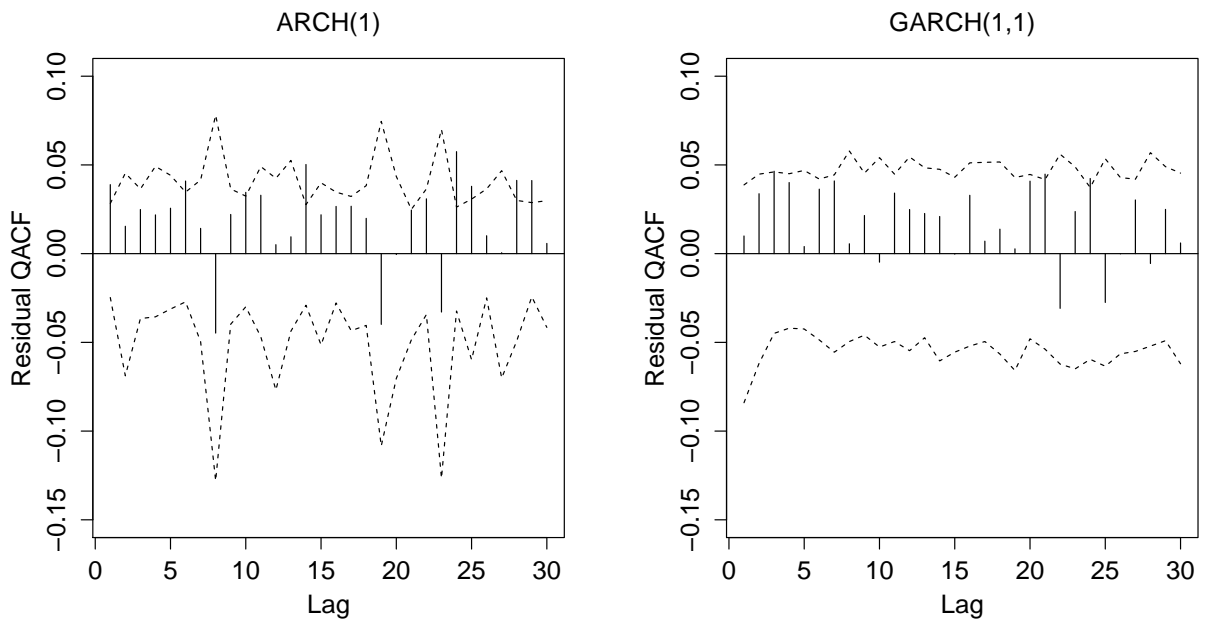}
\end{center}
\caption{Residual QACFs for the fitted conditional quantiles at $\protect\tau%
=0.05$, with corresponding 95\% confidence bounds, for daily log returns of
S\&P 500.}
\label{fig2}
\end{figure}

\end{document}